\theoremstyle{plain}
\newtheorem{theorem}{Theorem}[section]
\newtheorem{lemma}[theorem]{Lemma}
\newtheorem{conjecture}[theorem]{Conjecture}
\newtheorem{observation}[theorem]{Observation}
\newtheorem{claim}[theorem]{Claim}
\newtheorem{definition}[theorem]{Definition}
\theoremstyle{remark}
\newcommand{\OPTTAP}{\ensuremath{\mathrm{OPT}_{\mathrm{TAP}}}}
\newcommand{\LPOPTkECSS}[1][k]{\ensuremath{\mathrm{LPOPT}_{#1-\mathrm{ECSS}}}}
\newcommand{\OPTkECSS}[1][k]{\ensuremath{\mathrm{OPT}_{#1-\mathrm{ECSS}}}}
\newcommand{\OPTkSECSM}[1][k]{\ensuremath{\mathrm{OPT}_{#1-\mathrm{SECSM}}}}
\newcommand{\LPOPTkSECSM}[1][k]{\ensuremath{\mathrm{LPOPT}_{#1-\mathrm{SECSM}}}}
\newcommand{\LPOPTkECSM}[1][k]{\ensuremath{\mathrm{LPOPT}_{#1-\mathrm{ECSM}}}}
\newcommand{\OPTkECSM}[1][k]{\ensuremath{\mathrm{OPT}_{#1-\mathrm{ECSM}}}}
\newcommand{\LPOPTkGhost}[1][k]{\ensuremath{\mathrm{LPOPT}_{#1-\mathrm{Ghost}}}}
\newcommand{\LPA}{\ensuremath{\mathrm{LP}_{\mathrm{Alg}}}\xspace}
\newcommand{\epsTAP}{\ensuremath{\epsilon_\mathrm{TAP}}}
\newcommand{\gadgetEdges}{\ensuremath{E_{\textrm{Gadget}}}}
\newcommand{\eps}{\ensuremath{\epsilon}}
\newcommand{\Ind}[1]{\ensuremath{\mathbb{1}}\left[#1\right]}
\DeclareMathOperator{\fracpart}{frac}
\DeclareMathOperator{\supp}{supp}
\DeclareMathOperator{\cov}{cov}
\newcommand\lpecss[1][k]{{\hyperlink{kecsslp}{$#1$-ECSS LP}}}
\title{Ghost Value Augmentation for $k$-Edge-Connectivity\ifbool{stocsubm}{}{\thanks{
This project received funding from Swiss National Science Foundation grant 200021\_184622 and the European Research Council (ERC) under the European Union's Horizon 2020 research and innovation programme (grant agreement No 817750). The second author was supported by the National Science Foundation under Grants No.~DMS-1926686, DGE-1762114, and CCF-1813135.
}}}
  \author{}
  \date{}
\author{D Ellis Hershkowitz} 
\affil{Brown University}
\author{Nathan Klein} 
\affil{Institute for Advanced Study}
\author{Rico Zenklusen} 
\affil{ETH Zurich}
\begin{document}

\maketitle

\begin{abstract}
    We give a poly-time algorithm for the $k$-edge-connected spanning subgraph ($k$-ECSS) problem that returns a solution of cost no greater than the cheapest $(k+10)$-ECSS on the same graph. Our approach enhances the iterative relaxation framework with a new ingredient, which we call ghost values, that allows for high sparsity in intermediate problems.
    
    Our guarantees improve upon the best-known approximation factor of $2$ for $k$-ECSS whenever the optimal value of $(k+10)$-ECSS is close to that of $k$-ECSS.
    This is a property that holds for the closely related problem $k$-edge-connected spanning multi-subgraph ($k$-ECSM), which is identical to $k$-ECSS except edges can be selected multiple times at the same cost.
    As a consequence, we obtain a $\left(1+O(\sfrac{1}{k})\right)$-approximation algorithm for $k$-ECSM, which resolves a conjecture of Pritchard and improves upon a recent  $\left(1+O(\sfrac{1}{\sqrt{k}})\right)$-approximation algorithm of Karlin, Klein, Oveis Gharan, and Zhang.
    Moreover, we present a matching lower bound for $k$-ECSM, showing that our approximation ratio is tight up to the constant factor in $O(\sfrac{1}{k})$, unless $\P=\NP$.
\end{abstract}

\ifbool{stocsubm}{}{
\begin{tikzpicture}[overlay, remember picture, shift = {(current page.south east)}]
\begin{scope}[shift={(-1.1,2.5)}]
\def\hd{2.5}
\node at (-2.9*\hd,0) {\includegraphics[height=1.3cm]{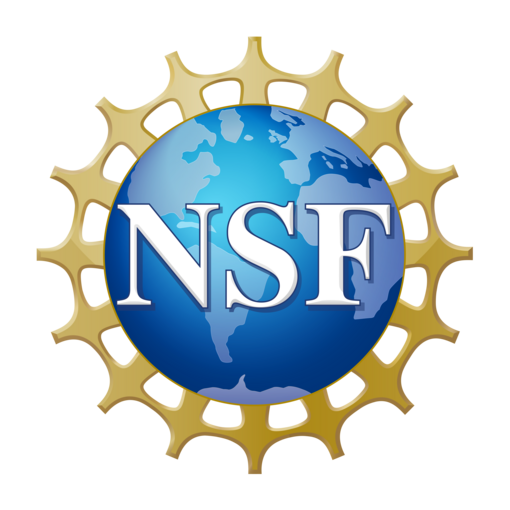}};
\node at (-2*\hd,0) {\includegraphics[height=0.5cm]{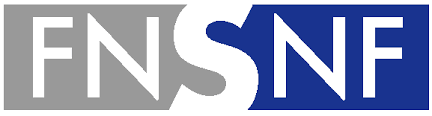}};
\node at (-\hd,0) {\includegraphics[height=1.0cm]{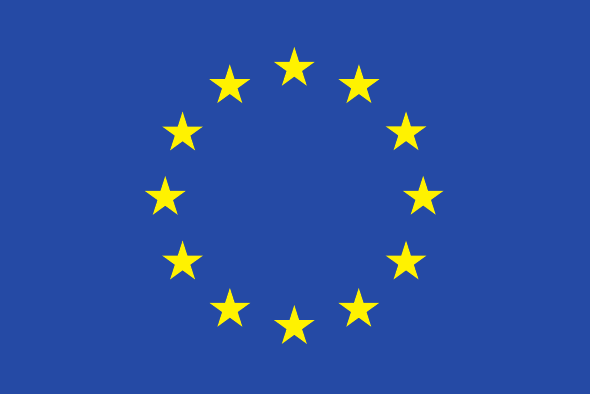}};
\node at (-0.2*\hd,0) {\includegraphics[height=1.2cm]{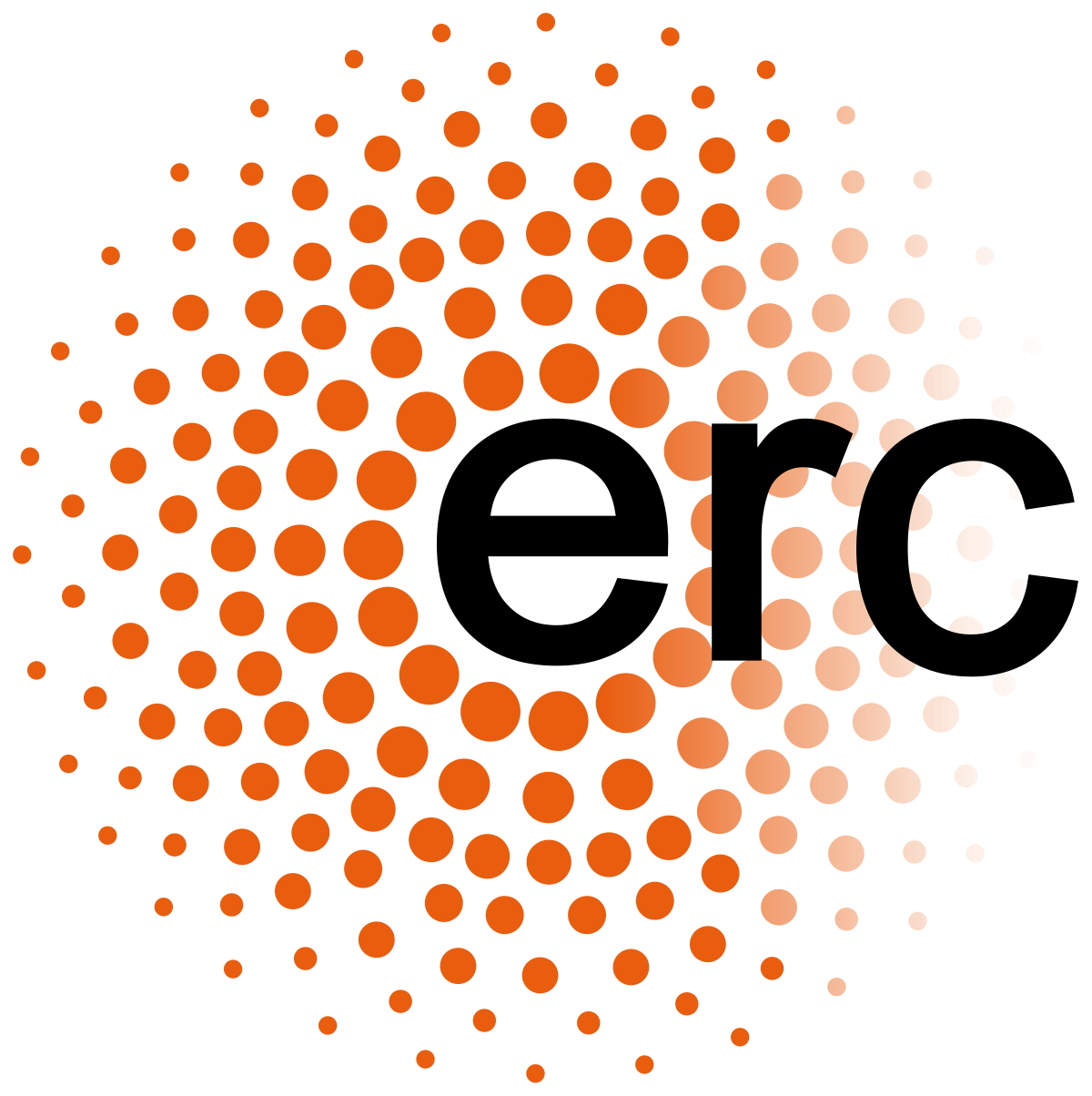}};
\end{scope}
\end{tikzpicture}
}

\thispagestyle{empty}
\newpage
\setcounter{page}{1}

\section{Introduction}

Computing $k$-edge-connected subgraphs of minimum cost is a fundamental problem in combinatorial optimization. For $k=1$, this is the famous minimum spanning tree (MST) problem. For $k \ge 2$, this problem is known as the $k$-edge-connected spanning subgraph ($k$-ECSS) problem. Formally, in $k$-ECSS we are given a multi-graph $G=(V,E)$ with an edge cost function $c:E \to \mathbb{R}_{\ge 0}$ and our goal is to find a set of edges $F \subseteq E$ where $(V, F)$ is $k$-edge-connected while minimizing $c(F) \coloneqq \sum_{e\in F} c(e)$.

For $k\ge 2$, $k$-ECSS is \APX-Hard.
More precisely, there exists an $\epsilon > 0$ such that, unless $\P = \NP$, there is no poly-time $(1+\epsilon)$-approximate algorithm for $k$-ECSS for any $k \ge 2$~\cite{Pri11}. Thus, somewhat surprisingly, the problem does not seem to get easier as $k$ grows. The problem and its special cases have been extensively studied~\cite{GGTW09,dory2018distributed,chalermsook2022approximating,khuller1994biconnectivity,gabow1998efficient,czumaj1999approximability,berger2007minimum,czumaj2004approximation,TZ21,traub2022local,adjiashvili2018beating,fiorini2018approximating,Nut17,iglesias2017coloring,cheriyan2018approximating,GKZ18,CTZ21,grandoni_2022_breaching,traub_2023_approximation}, but for any $k \geq 2$ the best poly-time approximation algorithm is a four-decades-old $2$-approximation due to \textcite{FJ81,FJ82}.\footnote{It also follows from the iterative rounding of \textcite{Jai01}.} Designing a better-than-2 approximation algorithm for any $k \geq 2$ is a major open problem.
Connectivity problems have also been considered for the notion of vertex connectivity (see, e.g.,~\cite{angelidakis_2023_node,hyatt-denesik_2023_finding,auletta_1999_2-approximation,cheriyan_2014_approximating,nutov_2014_approximating}). Notably, there is a $(4+\epsilon)$-approximation for any constant $k$ due to Nutov \cite{nutov_2022_4eapproximation}, improving upon the earlier 6-approximation of Cheriyan and Végh \cite{cheriyan_2014_approximating}.


In this work, we show that one can achieve a close-to-$1$ approximation for $k$-ECSS whenever the cost of the optimal $(k+10)$-edge-connected solution is close to the cost of the optimal $k$-edge-connected solution. Specifically, we give a \textit{resource augmentation} result for $k$-ECSS whereby we compare the quality of our algorithm's output to an adversary that has fewer resources (namely, the resource of cost budget): 
\begin{quote}\centering
    We show that one can poly-time compute a $k$-edge-connected graph with cost no greater than that of the optimal solution which $(k+10)$-edge-connects the graph.
\end{quote}

Similar resource augmentation results are known for other well-studied network design problems such as in the minimum cost bounded degree spanning tree problem~\cite{Goe06,chaudhuri2009would,konemann2000matter,konemann2003primal,ravi2006delegate}; here, the state-of-the-art is an algorithm of \textcite{SL15}, which shows that one can find a spanning tree of maximum degree $d$ that has cost no greater than the spanning tree of maximum degree $d-1$ in poly-time. Likewise, resource augmentation is often considered in online and scheduling algorithms~\cite{ST85, phillips1997optimal, jaillet2008generalized, chekuri2004multi, roughgarden2020resource}. However, to our knowledge, no similar results are known for $k$-connectivity-type problems, and standard techniques~\cite{NW61} would require augmenting the budget by $k$ rather than $O(1)$.

Our result is cost-competitive with the optimal LP solution, defined as follows. For ease of presentation, we fix an arbitrary vertex $r\in V$, called the \emph{root}, and represent each cut $(S, V\setminus S)$ by the side that does not contain $r$. Letting $\delta(S)$ be the set of edges crossing a cut $S \subseteq V$ and $x(F) \coloneqq \sum_{e \in F} x_e$ for $F \subseteq E$, we have the following LP for $k$-ECSS.
\begin{equation}\label{lp:ecss}
\begin{array}{rrcll}
  \min &c^\top x     &     &                       &\\
       &x(\delta(S)) &\geq &k                      &\forall S\subseteq V\setminus \{r\}, S\neq \emptyset \\
       &x            &\in  &{[0,1]}^E &
\end{array}\tag{$k\mathrm{-ECSS}~\mathrm{LP}$}\hypertarget{kecsslp}{}
\end{equation}
Denoting by $\LPOPTkECSS[k]$ the cost of an optimal solution to~\ref{lp:ecss} and by $\OPTkECSS[k]$ the cost of the optimal $k$-ECSS solution, our main result for $k$-ECSS is as follows.

\begin{restatable}{theorem}{mainECSS}\label{thm:mainECSS}
    There is a poly-time algorithm that, for any $k$-ECSS instance with $k\in \mathbb{Z}_{\geq 1}$, returns a $k$-ECSS solution of cost at most $\LPOPTkECSS[(k+10)] \leq \OPTkECSS[(k+10)]$.
\end{restatable}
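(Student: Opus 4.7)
The plan is to enhance the iterative rounding framework for network design with \emph{ghost values}: fractional ``phantom'' contributions that prop up intermediate LP constraints without adding to the final cost. The key leverage is the gap between what we pay for (an LP solution that $(k+10)$-covers every cut) and what we must ultimately deliver (only $k$ integer edges per cut); this $10$-unit surplus per cut is exactly what lets us round fractional solutions aggressively enough to match $\LPOPTkECSS[(k+10)]$ rather than losing a factor of $2$ as in Jain-style iterative rounding.

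Concretely, I would begin by solving \ref{lp:ecss} with right-hand side $k+10$ to obtain an optimal fractional solution $x^{*}$ of cost $\LPOPTkECSS[(k+10)]$. The algorithm then maintains a triple $(F, y, g)$: a partial integral solution $F$, a residual fractional LP solution $y$ (initially $x^{*}$), and a ghost assignment $g : E \to \mathbb{R}_{\geq 0}$ of ``non-paid'' fractional value, with the invariant $|F \cap \delta(S)| + y(\delta(S)) + g(\delta(S)) \geq k+10$ on every nonempty $S \subseteq V \setminus \{r\}$. In each iteration I would compute a basic feasible $y$ of the current residual LP, uncross its tight constraints into a laminar family, and use a sparsity lemma, which crucially exploits the constant per-cut ghost slack, to locate an edge $e$ with $y_e$ close to $1$. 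Edge $e$ is added to $F$, the leftover value $1 - y_e$ is shifted into $g_e$ so that the invariant is preserved, and the iteration proceeds on the shrunken LP.

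Once no fractional edges remain, the final integral set $F$ satisfies $|F \cap \delta(S)| \geq (k+10) - g(\delta(S))$. A careful accounting, capping $g(\delta(S))$ at $10$ by design, gives $k$-edge-connectivity of $(V, F)$. The cost bound $c(F) \leq c^\top x^{*} = \LPOPTkECSS[(k+10)]$ follows because each added edge $e$ is charged against $y_e \cdot c_e$ from the initial LP, and ghost values never incur additional cost.

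The main obstacle, and the heart of the argument, is the \emph{ghost-value sparsity lemma}: one must show that a constant ghost slack per cut is enough to guarantee the existence of an almost-integral edge in every basic feasible solution of the residual LP, while simultaneously ensuring that the accumulated ghost value on any single cut never exceeds $10$. The constant $10$ in the theorem should emerge as the smallest slack for which these two requirements can be simultaneously met by the uncrossing and token-counting machinery driving the sparsity argument. Secondary difficulties include implementing efficient separation for the ghost-augmented LP, handling degeneracy in the basic-solution step, and showing that the algorithm terminates after polynomially many iterations despite the evolving ghost account.
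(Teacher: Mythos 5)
There is a genuine gap, and it sits exactly where you locate "the heart of the argument." Your plan rounds an \emph{almost-integral} edge up to $1$ in every iteration and shifts the deficit $1-y_e$ into a ghost account, which requires a sparsity lemma asserting that every basic feasible solution of the residual (ghost-augmented) cut-covering LP contains an edge with $y_e$ close to $1$, and simultaneously that the ghost mass accumulated on any single cut stays below $10$. Neither claim is supported. For cut-covering LPs with weakly supermodular requirements, the best guarantee obtainable by laminar uncrossing plus token counting is Jain's bound $y_e\ge 1/2$, and this is tight; a constant additive slack of $10$ in the requirement does not by itself upgrade $1/2$ to $1-\epsilon$, and no mechanism in your sketch does so. The per-cut cap of $10$ is likewise asserted "by design": a single cut can contain many rounded edges over the run of the algorithm, each contributing up to $1-y_e$, so bounding $g(\delta(S))$ by a constant is itself the theorem to be proved, not a design choice. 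Separately, your cost accounting does not give the stated bound: if you add $e$ to $F$ while the LP only carried value $y_e<1$ on it, you pay $c_e$ but can charge only $y_ec_e$, so you lose a multiplicative factor (roughly $1/(1-\epsilon)$ per rounded edge) and the conclusion $c(F)\le \LPOPTkECSS[(k+10)]$ — an exact, purely additive resource-augmentation guarantee — no longer follows.

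The paper's route is different in precisely these respects. It never rounds a fractional coordinate up: edges are frozen only when a vertex solution makes them integral (so $z_e\in\{\lfloor y_e\rfloor,\lceil y_e\rceil\}$ and the cost bound is immediate, cf.\ \cref{obs:costBound}), and progress otherwise comes from \emph{relaxing} a $y$-tight cut constraint that is minimal and has at most three fractional edges crossing it, followed by contracting that set. Ghost values are not a repository for rounding deficits; they are integral $+2$ augmentations placed on a parallel class $E(u,v)$ whenever $(y+g)(E(u,v))\in[\sfrac{k}{2}-2,\sfrac{k}{2})$, and their sole purpose is to make the tight cuts uncrossable into a laminar family (\cref{lem:lamTightSys}); a token-counting argument (\cref{lem:progress}) then shows that in every iteration either a ghost augmentation or a drop/contract is available. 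The connectivity loss is controlled not by capping ghost mass per cut a priori, but by showing each edge receives at most one augmentation (\cref{lem:noPosGhostValInEuv}), that relaxed sets retain $\lfloor y\rfloor$-load at least $k-4$ and internal integrality, and that at most $k+2$ units of $z$ cross a relaxed set, which combine to give $z(\delta(S))\ge k-9$ (\cref{lem:cutLB}). To repair your proposal you would need either a proof of your strong sparsity-plus-cap lemma (which faces the same uncrossing barrier the paper describes in \cref{subsec:proofsketch}) or a shift to the paper's freeze-only-at-integrality and drop/contract mechanism.
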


\noindent Equivalently, we show that one can find in poly-time a $(k-10)$-edge-connected subgraph of cost at most $\OPTkECSS[k]$. Thus, in some sense, our result demonstrates that achieving the last small constant amount of connectivity is the \NP-hard part of $k$-ECSS. 

One might reasonably wonder if it is often the case that the optimal $k$-edge-connected solution has cost close to the optimal $(k+10)$-edge-connected solution. In fact, a well-studied problem closely related to $k$-ECSS called the $k$-edge-connected spanning \textit{multi}-subgraph ($k$-ECSM) problem satisfies exactly this property. $k$-ECSM is the same as $k$-ECSS except our solution $F$ is a multiset that can include each edge of $E$ as many times as we want (where we pay for every copy of an edge). The canonical LP for $k$-ECSM is the same as that of $k$-ECSS but has no upper bound on how many times we choose an edge.
\begin{equation}\label{lp:ecsm}
\begin{array}{rrcll}
  \min &c^\top x     &     &                       &\\
       &x(\delta(S)) &\geq &k                      &\forall S\subseteq V\setminus \{r\}, S\neq \emptyset \\
       &x            &\in  &\mathbb{R}^E_{\geq 0}. &
\end{array}\tag{$k\mathrm{-ECSM}~\mathrm{LP}$}
\end{equation}
We notate by $\LPOPTkECSM[k]$ the cost of an optimal solution to~\ref{lp:ecsm} and by $\OPTkECSM[k]$ the cost of an optimal $k$-ECSM solution. It is easy to see that scaling the optimal $k$-ECSM LP solution by $(k+10)/k$ results in a $(k+10)$-ECSM LP solution, so we have the claimed relation between the costs of the optimal $k$-edge-connected and $(k+10)$-edge-connected LP solutions:
\begin{align}\label{eq:scaling}
    \LPOPTkECSM[(k+10)] \leq \left(1 + \frac{10}{k} \right) \cdot \LPOPTkECSM[k].
\end{align}

Unlike $k$-ECSS, it is known that, as $k \to \infty$, approximation factors arbitrarily close to $1$ are possible; however, prior to this work, the correct asymptotic dependence on $k$ was not fully understood. \textcite{GGTW09} showed that \emph{if the graph $G$ is unweighted} (i.e., every edge has cost $1$), then $k$-ECSM (and $k$-ECSS) admits $(1+\sfrac{2}{k})$-approximation algorithms. This led \citeauthor{Pri11} to pose the following conjecture for $k$-ECSM (with general weights).
 \begin{conjecture}[\cite{Pri11}]\label{conj:kECSM}
   $k$-ECSM admits a poly-time $(1+O(\sfrac{1}{k}))$-approximation algorithm.
 \end{conjecture}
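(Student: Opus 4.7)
The plan is to derive the conjecture as a direct corollary of \cref{thm:mainECSS}, via a reduction from $k$-ECSM to $k$-ECSS combined with the LP scaling inequality already stated.

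Given a $k$-ECSM instance on $G = (V, E)$ with costs $c$, I would form an auxiliary $k$-ECSS instance on the graph $G'$ obtained from $G$ by replacing each edge $e$ with $N := k + 10$ parallel copies, each of cost $c(e)$. Applying the algorithm of \cref{thm:mainECSS} to $G'$ yields a $k$-ECSS solution $F' \subseteq E(G')$ of cost at most $\LPOPTkECSS[(k+10)](G')$. Since cuts of $G'$ correspond to cuts of $G$, identifying each selected copy of $e$ with $e$ itself maps $F'$ to a $k$-edge-connected multiset $F$ of edges of $G$ of the same cost, i.e., a feasible $k$-ECSM solution.

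The next step would be to compare $\LPOPTkECSS[(k+10)](G')$ to $\LPOPTkECSM[(k+10)](G)$. Take an optimal $(k+10)$-ECSM LP solution $x^*$ on $G$; since every cut constraint has right-hand side $k+10$ and costs are nonnegative, we may cap $x^*_e$ at $k+10 = N$ without affecting feasibility or cost. Splitting this capped value evenly among the $N$ copies of $e$ in $G'$ yields a vector in $[0,1]^{E(G')}$ that is feasible for the $(k+10)$-ECSS LP of $G'$ at the same cost. Chaining this with the scaling bound \eqref{eq:scaling} gives
\begin{equation*}
c(F) \;\leq\; \LPOPTkECSS[(k+10)](G') \;\leq\; \LPOPTkECSM[(k+10)](G) \;\leq\; \left(1+\tfrac{10}{k}\right)\LPOPTkECSM[k](G) \;\leq\; \left(1+\tfrac{10}{k}\right)\OPTkECSM[k](G),
\end{equation*}
which is the desired $(1+O(\sfrac{1}{k}))$-approximation.

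The main obstacle I anticipate is polynomial running time: if $k$ is encoded in binary, then $|E(G')| = N \cdot |E(G)|$ is exponential in the input size, so $G'$ cannot be built explicitly. I would sidestep this by observing that the $(k+10)$-ECSS LP on $G'$ is equivalent to the natural LP on $G$ with box constraints $x_e \in \mathbb{Z}_{\geq 0} \cap [0, N]$, and by arguing that the iterative relaxation / ghost value algorithm driving \cref{thm:mainECSS} can be executed directly on this compact representation, returning the edge multiplicities as output. Everything else is a routine bookkeeping of the reduction.
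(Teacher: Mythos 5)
Your proposal is correct and matches the paper's own argument: \cref{thm:mainECSM} is proved by exactly this reduction---replace each edge by $k+10$ parallel copies, apply \cref{thm:mainECSS} to the resulting $(k+10)$-ECSS instance, and chain $\LPOPTkECSS[(k+10)] \leq \LPOPTkECSM[(k+10)] \leq (1+\frac{10}{k})\LPOPTkECSM[k]$ via \eqref{eq:scaling}. Your fix for binary-encoded $k$ is also the paper's: it notes that the rounding machinery behind \cref{thm:mainECSS} handles arbitrary (even exponentially large) edge bounds, and its binary-$k$ proof of \cref{thm:mainECSM} applies the main rounding theorem (\cref{thm:mainaux}) directly to a compact fractional solution rather than building the multigraph explicitly.
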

 \noindent Very recently, \textcite{KKOZ22} made significant progress on this conjecture, showing there is a poly-time $(1+\sfrac{5.06}{\sqrt{k}})$-approximation for $k$-ECSM. However, their approach provably does not give better than a $(1 + O(\sfrac{1}{\sqrt{k}}))$-approximation. 
 
 The result of \textcite{KKOZ22} added to a considerable body of work on $k$-ECSM. The first notable algorithm for $k$-ECSM is due to \textcite{FJ81,FJ82}, who gave a $\sfrac{3}{2}$-approximation algorithm for even $k$ and a $(\sfrac{3}{2} + O(\sfrac{1}{k}))$ for odd $k$. This algorithm essentially follows by a reduction to the well-known Christofides-Serdyukov algorithm for the traveling salesperson problem (TSP). Despite many subsequent works on $k$-ECSM~\cite{JT00,KR96, Kar99,Gab05,GG08,Pri11,LOS12,CR98,BFS16,SV14,BCCGISW20}, this algorithm of \textcite{FJ81,FJ82} remained the best approximation algorithm for nearly four decades, except when the underlying graph is unweighted, $k\gg \log n$, or $k=2$. Recently, this was very slightly improved to a $(\sfrac{3}{2}-\epsilon)$-approximation for even $k$ where $\epsilon = 10^{-36}$~\cite{KKO21, KKO22}. Thus, for large $k$, the algorithm of \textcite{KKOZ22} considerably improved on all known prior work for $k$-ECSM and gave the first algorithm with approximation ratio tending to 1 as $k \to \infty$ independent of $n$.
 


As an immediate consequence of \cref{thm:mainECSS} and \Cref{eq:scaling}, we are able to settle \citeauthor{Pri11}'s conjecture with a poly-time $(1 + O(\sfrac{1}{k}))$-approximation for $k$-ECSM.

%
%
\begin{restatable}{theorem}{mainECSM}\label{thm:mainECSM}
  There is a poly-time algorithm for $k$-ECSM that, for any $k$-ECSM instance with $k\in \mathbb{Z}_{\geq 1}$, returns a $k$-ECSM solution of cost at most $(1+\frac{10}{k}) \cdot \LPOPTkECSM[k] \leq (1+\frac{10}{k}) \cdot \OPTkECSM[k]$.
\end{restatable}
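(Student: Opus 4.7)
The plan is to derive \Cref{thm:mainECSM} as a direct consequence of \Cref{thm:mainECSS} together with the LP scaling relation~\eqref{eq:scaling}. The only bridge needed is a simple reduction from $k$-ECSM to $k$-ECSS via parallel copies.

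Given a $k$-ECSM instance $(G=(V,E), c)$, I would first construct an auxiliary instance $G' = (V, E')$ by replacing each edge $e \in E$ with $k+10$ parallel copies, each of cost $c(e)$. Any $k$-ECSS solution in $G'$ corresponds to a $k$-ECSM solution in $G$ of the same cost by simply taking the multiset of underlying edges. Conversely, any optimal $k$-ECSM solution requires at most $k$ copies of any edge (any excess copies can be removed without violating any cut), so restricting to the parallel-copy graph $G'$ loses nothing on the integral side.

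I would then apply \Cref{thm:mainECSS} to $G'$, obtaining in polynomial time a $k$-ECSS solution of cost at most $\LPOPTkECSS[(k+10)]$ on the instance $G'$. The key step is to verify that this LP value is bounded by $\LPOPTkECSM[(k+10)]$ on the original instance $G$. Given any optimal solution $x^*$ to the $(k+10)$-ECSM LP on $G$, we may assume $x^*_e \leq k+10$ for every $e$, since any excess beyond $k+10$ on a single edge can be trimmed without violating any cut constraint. Distributing the mass $x^*_e$ across the $k+10$ parallel copies of $e$ in $G'$ so that each copy receives value at most $1$ then yields a feasible $(k+10)$-ECSS LP solution on $G'$ of the same cost, establishing $\LPOPTkECSS[(k+10)](G') \leq \LPOPTkECSM[(k+10)](G)$.

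Chaining these with the scaling relation~\eqref{eq:scaling}, $\LPOPTkECSM[(k+10)] \leq (1+\sfrac{10}{k}) \cdot \LPOPTkECSM[k]$, and the trivial bound $\LPOPTkECSM[k] \leq \OPTkECSM[k]$ yields the desired approximation. There is no real obstacle here: the whole argument is a clean corollary, and the only minor subtlety is the parallel-copy reduction required to formally move from the ECSS setting of \Cref{thm:mainECSS} to the ECSM setting. All of the technical work lives in \Cref{thm:mainECSS}.
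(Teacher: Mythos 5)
Your reduction is the same one the paper uses in its first proof of \Cref{thm:mainECSM}, and the details you supply are fine: trimming $x^*_e$ to at most $k+10$ and splitting the mass over the parallel copies correctly shows $\LPOPTkECSS[(k+10)](G') \leq \LPOPTkECSM[(k+10)](G)$, and chaining with \eqref{eq:scaling} gives the cost bound. However, there is a gap with respect to the full claim of polynomial running time: the auxiliary graph $G'$ has $(k+10)\cdot|E|$ edges, so invoking \Cref{thm:mainECSS} on $G'$ is polynomial only in $k\cdot|E|$, i.e., pseudo-polynomial when $k$ is encoded in binary (and the paper does treat $k$ as part of the input, e.g., in the hardness result). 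This is exactly why the paper labels the copy-based argument as a proof ``assuming $k$ is given in unary.''

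The paper closes this gap in \cref{sec:rounding} by bypassing the parallel-copy reduction entirely: it takes an optimal solution $y$ to the $(k+10)$-connectivity LP on the \emph{original} graph (cost at most $\LPOPTkECSM[(k+10)]$) and applies the Main Rounding Theorem, \Cref{thm:mainaux}, directly to $y$. Since \Cref{thm:mainaux} only requires $y \in \mathbb{R}^E_{\geq 0}$ with $y(\delta(S)) \geq k+10$ and returns $z$ with $z_e \in \{\lfloor y_e \rfloor, \lceil y_e \rceil\}$, the output is naturally a multi-subgraph with possibly large multiplicities, no edge duplication is needed, and the running time is polynomial in the instance size even for binary $k$. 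To repair your proposal you would either have to restrict the statement to polynomially bounded (unary) $k$, or argue, as the paper's remark after \Cref{thm:mainECSS} does, that the underlying rounding procedure handles edge upper bounds of $k+10$ (equivalently, capacitated parallel classes) implicitly without materializing the copies.
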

\begin{proof}[Proof of \Cref{thm:mainECSM} assuming $k$ is given in unary] We give an algorithm which runs in poly-time assuming $k$ is represented in unary in the input. See \hyperref[pf:mainECSMStrongPoly]{this proof} in \Cref{sec:rounding} for a more general proof in which we assume $k$ is represented in binary.

Suppose our instance of ECSM is on graph $G= (V,E)$ with costs $c$, and optimal $k$ and $k+10$ LP costs $\LPOPTkECSM[k]$ and $\LPOPTkECSM[(k+10)]$, respectively. Then, consider the $(k+10)$-ECSS instance on $G' = (V, E')$ with costs $c$ and optimal LP cost $\LPOPTkECSS[(k+10)]$, where $E'$ is $E$ with $k+10$ copies of each $e \in E$.

  Apply \Cref{thm:mainECSS} to our $(k+10)$-ECSS instance to compute a multiset $F \subseteq E'$ which $k$-edge-connects $V$. By construction, $F$ is a feasible solution to our $k$-ECSM instance, when interpreting the selection of parallel edges by one multi-selection of the edge in $E$ they correspond to (and any $k$-ECSM solution on $G$ can be interpreted as a solution to $k$-ECSS on $G'$), and by \Cref{eq:scaling} it has cost at most
\begin{align*}
    \LPOPTkECSS[(k+10)] \leq \LPOPTkECSM[(k+10)] \leq \left(1 + \frac{10}{k} \right) \cdot \LPOPTkECSM[k]. \qquad \qedhere
\end{align*}
\end{proof}
\noindent Our algorithm improves upon the $1+\sfrac{5.06}{\sqrt{k}}$ algorithm of~\cite{KKOZ22} for all relevant $k$ (i.e., those in which that algorithm was better than $\sfrac{3}{2} - \eps$).

As previously noted \cite{GB93,Pri11}, a poly-time algorithm for $k$-ECSM whose solutions are approximate with respect to \LPOPTkECSM[k] gives an approximation algorithm with the same approximation bounds for subset (a.k.a.\ Steiner) $k$-ECSM. Here, we only need to $k$-edge-connect a subset of the nodes. Thus, as a consequence of \Cref{thm:mainECSM}, we get a $1+\sfrac{10}{k}$ for this more general subset $k$-ECSM problem. See \Cref{sec:steiner} for a formal statement of this problem and result.


Complementing this solution to Pritchard's conjecture, we show that the dependence on $k$ in our algorithm is essentially optimal. Specifically, we show that \cref{thm:mainECSM} is almost tight by showing $(1 + \Omega(\sfrac{1}{k}))$-hardness-of-approximation for $k$-ECSM. (The tightness is up to the constant in front of the term $\sfrac{1}{k}$.)

\begin{restatable}{theorem}{hardness}\label{thm:hardness}
There exists a constant $\epsTAP > 0$ (given by \Cref{thm:tapAPXHard}) such that there does not exist a poly-time algorithm which, given an instance of (unweighted) $k$-ECSM where $k$ is part of the input, always returns a $\left(1+\frac{\epsTAP}{9k}\right)$-approximate solution, unless $\P = \NP$.
\end{restatable}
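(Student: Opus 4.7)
The plan is to prove \Cref{thm:hardness} by a gap-preserving reduction from the unweighted Tree Augmentation Problem (TAP) to $k$-ECSM, invoking the APX-hardness of TAP supplied by \Cref{thm:tapAPXHard} with constant $\epsTAP$. Given an APX-hard TAP instance consisting of a spanning tree $T$ on $n$ vertices together with a link set $L$, where the optimal augmentation has cost $\tau^* = \Omega(n)$, I would construct an unweighted $k$-ECSM instance on a graph $G$ that superposes $T$, $L$, and a small connectivity-forcing gadget, with the target property
\begin{equation*}
  \OPTkECSM[k](G) \;=\; A(k,n) + \OPTTAP(T, L),
\end{equation*}
where $A(k,n) = \Theta(kn)$ is an unavoidable contribution coming from the tree plus gadget.

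The upper bound I would prove by exhibiting the natural solution: take each tree edge $k-1$ times, take each link of an optimal TAP cover once, and take the multiplicities forced by the gadget. Every tree cut is then crossed by $k-1$ tree copies plus at least one link; every non-tree cut has at least two tree edges contributing $2(k-1) \geq k$ for $k \geq 2$; and every gadget cut is satisfied locally by construction. The lower bound---the main technical step---says that the $\tau^*$ additive term cannot be avoided: from any $k$-ECSM solution $(z, y)$ on $G$, the link part of $y$ (together with at most a constant number of \emph{patching} links per tree edge that is oversaturated by $z$) yields a feasible TAP solution whose cost is at most $\sum_e z_e + \sum_\ell y_\ell - A(k,n)$. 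This extraction is delicate because $k$-ECSM allows unbounded edge multiplicities; in particular, placing many copies on a single tree edge could in principle substitute for the corresponding link. The gadget is designed precisely so that such substitutions are either infeasible (violating some other gadget cut) or non-beneficial (costing at least as much as just adding the corresponding link).

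Assuming this equality, the theorem follows from a direct gap computation. A hypothetical $(1 + \epsTAP/(9k))$-approximation for $k$-ECSM would produce a multiset of cost at most $(1 + \epsTAP/(9k))(A(k,n) + \tau^*)$; extracting a TAP solution as above then yields a feasible TAP cover of cost at most $\tau^* + \frac{\epsTAP}{9k}(A(k,n) + \tau^*) \leq (1 + \epsTAP)\tau^*$, where the last inequality uses $A(k,n) + \tau^* \leq 9k\tau^*$ on the APX-hard TAP instances (since $A(k,n) = \Theta(kn)$ and $\tau^* = \Omega(n)$, the constant $9$ being chosen to absorb the ratio $A(k,n)/\tau^* = \Theta(k)$). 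This contradicts \Cref{thm:tapAPXHard} unless $\P = \NP$. The principal obstacle in the proof is the lower bound $\OPTkECSM[k](G) \geq A(k, n) + \tau^*$: the freedom afforded by unbounded edge multiplicities in $k$-ECSM admits many seemingly promising shortcut strategies (such as concentrating multiplicity on a few tree edges to avoid the TAP part), and ruling them all out requires a careful structural analysis of any optimal $k$-ECSM solution, likely via an uncrossing or exchange argument performed relative to the gadget structure.
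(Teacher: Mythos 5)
Your overall scaffolding (reduce from unweighted TAP using the promise $\OPTTAP \geq \tfrac14|E|$, arrange the $k$-ECSM optimum to be an unavoidable $\Theta(k|E|)$ term plus $\OPTTAP$, then do the gap calculation with the $9k$ denominator) matches the paper's outline, and the final arithmetic is essentially the same as the paper's. But the proposal has a genuine gap: the entire content of the theorem lives in the gadget that you leave unspecified. You correctly identify the danger---in $k$-ECSM one can put $k$ copies on a tree edge and thereby avoid buying any link covering it---but you only assert that ``the gadget is designed precisely so that such substitutions are either infeasible or non-beneficial,'' without exhibiting a gadget or an argument. With the naive construction you describe (tree edges plus links, take each tree edge $k-1$ times), the induced problem is a prize-collecting variant of TAP in which every tree edge can be bought out of the covering requirement at unit cost, and nothing in your write-up shows that this variant inherits the $(1+\epsTAP)$ gap; your claimed identity $\OPTkECSM = A(k,n)+\OPTTAP$ and, more importantly, the extraction of a feasible TAP cover of cost at most (solution cost) $- A(k,n)$ from an \emph{arbitrary} near-optimal solution are exactly the statements that remain unproven. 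This is the ``new trick'' Pritchard asked for, and it cannot be waved at via ``an uncrossing or exchange argument relative to the gadget structure.''

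For comparison, the paper's mechanism is quite specific: each tree edge $e=\{u,v\}$ is replaced by two parallel subdivided paths through new degree-two vertices $w_e, w'_e$ (so the gadget alone forces cost exactly $2k$ per tree edge), the links are kept at unit cost, and---crucially---$k$ is taken \emph{odd}. The rebalancing lemma (\Cref{lem:rebalance}) shows that any feasible integral solution can be modified, at no cost increase, so that each $w\in V_E$ has incident values $\lfloor k/2\rfloor$ and $\lceil k/2\rceil$; then the cut that picks the two $\lfloor k/2\rfloor$-edges of the gadget for $e$ carries only $k-1$ from gadget edges, so some link covering $e$ must be in the support. This is what makes the ``concentrate multiplicity on tree edges'' shortcut non-beneficial and yields, from any solution of cost $C$, a feasible TAP cover of cost at most $C-2k|E|$, after which your gap computation goes through. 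Note also that the parity is not a cosmetic choice: for even $k$ the balanced configuration satisfies every such cut without any link, so a correct proof must either use odd $k$ (as the paper does, which suffices since $k$ is part of the input) or design a different gadget; your proposal is silent on this point as well.
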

\noindent Such a hardness result was also identified as an open question by \textcite{Pri11}.



We first give a proof sketch in \cref{subsec:proofsketch}. In \cref{sec:rounding}, we then describe our main technical rounding theorem and how it implies our results on $k$-ECSS and $k$-ECSM. In particular, we show it allows us to compute a $k$-ECSS solution with cost at most that of the optimal $(k+10)$-ECSS solution, and we also show it implies a $1 + O(\sfrac{1}{k})$-approximate $k$-ECSM algorithm---settling Pritchard's $k$-ECSM conjecture. Later, in \cref{sec:hardness}, we show a matching $1 + \Omega(\sfrac{1}{k})$ hardness of approximation for $k$-ECSM. In \cref{sec:alg}, we describe our algorithm and give further intuition on why it gives strong guarantees before formally analyzing its performance in \cref{sec:proof}.


\subsection{Iterative Relaxation Barriers and Ghost Value Augmentations to Overcome Them}\label{subsec:proofsketch}

First note that for the rest of the paper, to slightly simplify notation, we will work with a solution to the \lpecss[k] and round it to a $(k-10)$-edge-connected graph. This is of course equivalent to working with $k$-edge-connectivity and the \lpecss[(k+10)].

Our approach to showing \Cref{thm:mainECSS} is to apply iterative LP relaxation methods to the \lpecss[k] and obtain a $(k-10)$-edge-connected graph. Iterative LP methods are a well-studied approach first pioneered by \textcite{Jai01}. See \textcite{LRS11} for a comprehensive overview. Generally, iterative relaxation algorithms repeatedly compute an optimal solution to a suitable LP and then make progress towards producing a desired output by either finding an edge with LP value $0$ that can be deleted, a newly integral edge that can be frozen at its integral value, or some LP constraint which is \emph{nearly} satisfied and can be dropped from future LP recomputations while approximately preserving feasibility.



Applying such an iterative relaxation approach to the \lpecss[k] is naturally suited to proving \Cref{thm:mainECSS} since we are allowed $O(1)$ slack in connectivity. Specifically, one might hope to argue that if no edge can be frozen then there is some constraint of the \lpecss[k] corresponding to a cut which has at least $k-O(1)$ frozen edges crossing it. Such a constraint can be safely dropped from future recomputations since our ultimate goal allows for slack $O(1)$ in connectivity. However, it is not too hard to see that such a natural approach faces a significant barrier and so a non-standard idea is required.

In what follows, we describe this approach and barrier in more detail and how our key non-standard idea of ``ghost value augmentations''  allows us to overcome this barrier.

\paragraph{A Standard Iterative Relaxation Approach.} A first attempt to prove \Cref{thm:mainECSS} is as follows, where we replace the constant 10 with an arbitrary constant $c \in \mathbb{Z}_{\ge 1}$. Repeat the following until there are no remaining variables.

\begin{quote}
	 Let $y$ be an extreme point solution to \lpecss[k]. For each edge $e$ with $y_e = 0$, delete $e$ from the LP. For each edge $e$ with $y_e = 1$, add the constraint $x_e = 1$ to the LP and call $e$ ``frozen.'' We let $F$ be all edges we have frozen so far.
	
	Now suppose that whenever we cannot delete or freeze a new edge, there is always a cut $S \subseteq V \setminus \{r\}$ such that $|\delta(S) \cap F| \ge k-c$. We call this the \textit{light cut property}.  In this case, we can delete the constraint $x(\delta(S)) \ge k$ from the LP. This is a safe operation since $\delta(S)$ already has at least $k-c$ frozen edges. 
\end{quote}	
 Once there are no remaining variables, we have an integral solution and can return the set of frozen edges. Provided one can still efficiently solve the LP even after dropping constraints and so long as the light cut property always holds when we cannot delete or freeze a new edge, standard arguments would demonstrate that the above algorithm always returns an integral $(k-c)$-ECSS solution of cost no more than the cost of $\LPOPTkECSS[k]$. 

\paragraph{A Barrier to the Standard Approach.} 
We do not know if the light cut property is true or not, and proving or disproving it is a very interesting open problem. However, there is a major barrier to proving it using known techniques for iterative relaxation, which we detail here. 

To prove results like the light cut property, one generally first demonstrates that the set of tight constraints at any stage of the algorithm's execution can be ``uncrossed'' to obtain a laminar family.\footnote{A family of sets ${\cal L}$ is called \textit{laminar} if it does not contain any pair of intersecting sets $S,T$, which are sets such that $S\cap T \not\in \{S, T, \emptyset\}$.
Moreover, two sets $S, T \subseteq V$ are \textit{crossing} if all of $S \cap T$, $V \setminus (S \cup T)$, $S \setminus T$, and $T \setminus S$ are non-empty.
We typically use these notions for vertex sets that correspond to cuts.
Because we assume that cuts do not contain $r$, the notions of crossing and intersecting sets coincide for cuts.
} 
In our case, a set $S \subseteq V$ is tight if $y(\delta(S)) = k$ and one wants to show that there is a laminar family ${\cal L} \subseteq 2^V$ such that every constraint $x(\delta(S)) \geq k$ where $S$ is tight is spanned by the constraints $\{ x(\delta(S)) \geq k \colon S \in {\cal L}\}$.\footnote{Here, and throughout this work, we treat $x$ as a  variable and $y$ as a fixed solution to our LP.}

The major barrier to the standard iterative relaxation approach is that it does not appear that uncrossing is possible. One sufficient criteria for uncrossing is if the constraints can be expressed as $x(\delta(S)) \geq f(S)$ where the ``requirement function''  $f:V \to \mathbb{Z}_{\ge 0}$ is skew supermodular~\cite{Jai01}. 

A function $f:V \to \mathbb{Z}_{\ge 0}$ is called skew supermodular if for all $S,T \subseteq V$ we have either 
\begin{align*}
  f(S) + f(T) &\le f(S \cap T) + f(S \cup T), \text{ or}\\
  f(S) + f(T) &\le f(S \smallsetminus T) + f(T \smallsetminus S).
\end{align*}	

At the beginning of our process, $f(S) = k$ for all $S \subseteq V$, and thus both of these inequalities trivially hold with equality. However, once we drop a constraint $S$ for which we had $|F \cap \delta(S)| \ge k-c$, this property fails to hold. In particular, we may have dropped the constraint for $S \cap T$ and $S \smallsetminus T$ so that $f(S \cap T) = f(S \smallsetminus T) = 0$. Thus, in this situation, the left-hand side of each equation would be $2k$ and the right-hand side $k$. For example, the situation in \Cref{fig:skewsupermodularissue} could occur, where we dropped cuts with at least $k-c$ frozen edges.

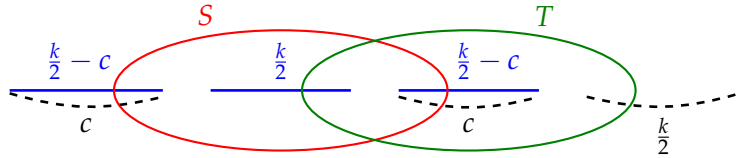
\begin{figure}[htb!]
	\centering 
  \begin{tikzpicture}[node distance=2.5cm, line width=1pt]
  \begin{scope}[every node/.style={inner sep=3mm}]
  \node (a) {};
  \node (b) [right of=a] {};
  \node (c) [right of=b] {};
  \end{scope}
  \node (invis-right) [right of=c,draw=none] {};
  \node (invis-left) [left of=a,draw=none] {};
  \node (invis-up) [above of=b,draw=none] {};

  \begin{scope}[-,blue]
    \draw (invis-left) -- (a) node[midway,above,xshift=-0.1cm] {\( \frac{k}{2}-c \)};
    \draw (a) -- (b) node[midway,above] {\( \frac{k}{2} \)};
    \draw (b) -- (c) node[midway,above,xshift=0.25cm] {\( \frac{k}{2}-c \)};
  \end{scope}

  \begin{scope}[-,dashed, bend right=15]
    \draw (invis-left) to node[below] {$c$} (a);
    \draw (b) to node[below] {$c$} (c);
    \draw (c) to node[below] {$\frac{k}{2}$} (invis-right);
  \end{scope}

  \node[ellipse, thick, red, draw, fit=(a) (b), inner sep=0pt, minimum height=16mm, label={[xshift=-1.0cm, yshift=-0.1cm, red]$S$}] {};
  \node[ellipse, thick, green!50!black, draw, fit=(b) (c), inner sep=0pt, minimum height=16mm,label={[xshift=1.0cm, yshift=-0.1cm, green!50!black]$T$}] {};
\end{tikzpicture}
	\caption{An example where skew supermodularity of the requirement function fails. The solid blue edges represent collections of frozen edges between sets. The dashed edges represent collections of non-frozen edges. Here, $S \smallsetminus T$ and $S \cap T$ have been dropped as they have at least $k-c$ frozen edges. However, $S,T,T \smallsetminus S$, and $S \cup T$ have fewer than $k-c$ frozen edges and thus remain. So, $f(S)=f(T)=k$ but $f(S \smallsetminus T)=f(S \cap T)=0$.}\label{fig:skewsupermodularissue}
\end{figure}

\Cref{fig:skewsupermodularissue} shows that we cannot apply the result of Jain as a black box. However, one can still successfully uncross in this situation. The reason is that the constraints $S \smallsetminus T, T \smallsetminus S, S \cap T, S \cup T$ are all still tight. Therefore, even though the constraints are not present in the LP, one can still replace the constraints $S$ and $T$ with the constraints $S \smallsetminus T$ and $T \smallsetminus S$ (or $S \cap T$ and $S \cup T$). 

The true issue arises when the connectivity of some sets drop below $k$, as in \Cref{fig:realissue}. In \Cref{fig:realissue}, \textit{none of} $S \smallsetminus T$, $T \smallsetminus S$, $S \cap T$, and $S \cup T$ are tight constraints. One could still consider adding them to the family. However, if we did this, we would face two major issues. 
\begin{enumerate}
    \item The constraints may not be integers, as is the case for all of the sets here. This leads to problems in the next phase of the iterative relaxation argument which uses integrality of the constraints to argue that if $S,T$ are two tight sets in the family then the symmetric difference $\delta(S) \Delta \delta(T)$ has size at least 2.
    \item  Unlike in the case in which $S$ and $T$ are minimum cuts of the graph (in which we can apply standard uncrossing) it is not necessarily the case that 
      $$\chi^{\delta(S)} + \chi^{\delta(T)} = \chi^{\delta(S \setminus T)} + \chi^{\delta(T \setminus S)},$$
    where $\chi^F$ for $F \subseteq E$ is the vector in ${\{0,1\}}^E$ that is 1 at the edges in $F$ and 0 elsewhere.
    In this example this equality does not hold due to the edge with value $a$. 
    Similarly, it is not necessarily the case that $$\chi^{\delta(S)} + \chi^{\delta(T)} = \chi^{\delta(S \cap T)} + \chi^{\delta(S \cup T)}.$$
To see this, one can extend the example by adding a small fractional edge between $S \smallsetminus T$ to $T \smallsetminus S$ and adjusting other edges accordingly (since only $S,T$ are tight, this is not difficult).

However, relations as the two highlighted above are central in classical uncrossing arguments.
\end{enumerate} 
Therefore, to prove the light cut property, one would likely have to deal with a fairly complicated family of tight sets.

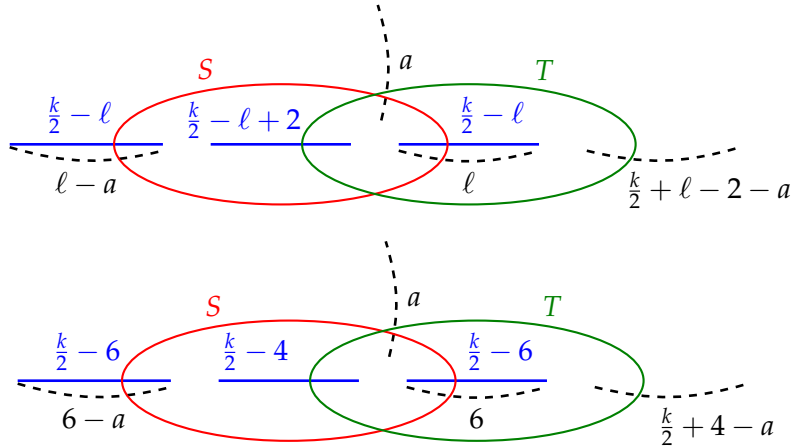
\begin{figure}[htb!]
\centering 
  \begin{tikzpicture}[node distance=2.5cm, line width=1pt]
  \begin{scope}[every node/.style={inner sep=3mm}]
   \node (a) {};
   \node (b) [right of=a] {};
   \node (c) [right of=b] {};
  \end{scope}
   \node (invis-right) [right of=c,draw=none] {};
   \node (invis-left) [left of=a,draw=none] {};
   \node (invis-up) [above of=b,yshift=-5mm,draw=none] {};

   \begin{scope}[-,blue]
     \draw (invis-left) -- (a) node[midway,above,xshift=-0.1cm] {$\frac{k}{2}-\ell$};
     \draw (a) -- (b) node[midway,above,xshift=-5.0mm,yshift=-0.7mm] {$\frac{k}{2}-\ell+2$};
     \draw (b) -- (c) node[midway,above,xshift=0.3cm] {$\frac{k}{2}-\ell$};
   \end{scope}

   \begin{scope}[-,dashed,bend right=15]
     \draw (invis-left) to node[below] {$\ell-a$} (a);
     \draw (b) to node[below] {$\ell$} (c);
     \draw (c) to node[below,xshift=6mm] {$\frac{k}{2}+\ell-2-a$} (invis-right);
     \draw (b) to node[right] {\(a\)} (invis-up);
   \end{scope}

  \node[ellipse, thick, red, draw, fit=(a) (b), inner sep=0pt, minimum height=16mm, label={[xshift=-1.0cm, yshift=-0.1cm, red]$S$}] {};
  \node[ellipse, thick, green!50!black, draw, fit=(b) (c), inner sep=0pt, minimum height=16mm,label={[xshift=1.0cm, yshift=-0.1cm, green!50!black]$T$}] {};
\end{tikzpicture}
  \begin{tikzpicture}[node distance=2.5cm, line width=1pt]
  \begin{scope}[every node/.style={inner sep=3mm}]
   \node (a) {};
   \node (b) [right of=a] {};
   \node (c) [right of=b] {};
  \end{scope}
   \node (invis-right) [right of=c,draw=none] {};
   \node (invis-left) [left of=a,draw=none] {};
   \node (invis-up) [above of=b,yshift=-5mm,draw=none] {};

   \begin{scope}[-,blue]
     \draw (invis-left) -- (a) node[midway,above,xshift=-0.1cm] {$\frac{k}{2}-6$};
     \draw (a) -- (b) node[midway,above,xshift=-0.45cm] {$\frac{k}{2} - 4$};
     \draw (b) -- (c) node[midway,above,xshift=0.3cm] {$\frac{k}{2} - 6$};
   \end{scope}

   \begin{scope}[-,dashed,bend right=15]
     \draw (invis-left) to node[below] {$6-a$} (a);
     \draw (b) to node[below] {$6$} (c);
     \draw (c) to node[below,xshift=6mm] {$\frac{k}{2}+4-a$} (invis-right);
     \draw (b) to node[right] {\(a\)} (invis-up);
   \end{scope}

  \node[ellipse, thick, red, draw, fit=(a) (b), inner sep=0pt, minimum height=16mm, label={[xshift=-1.0cm, yshift=-0.1cm, red]$S$}] {};
  \node[ellipse, thick, green!50!black, draw, fit=(b) (c), inner sep=0pt, minimum height=16mm,label={[xshift=1.0cm, yshift=-0.1cm, green!50!black]$T$}] {};
\end{tikzpicture}
\caption{An example of where uncrossing breaks down, where in the top figure we let $\ell=\lceil \frac{c+1}{2} \rceil$ and in the bottom figure we assume $c=10$ for concreteness. The blue edges represent collections of frozen edges and the dotted edges represent collections of fractional edges. To ensure feasibility we let $0 < a < 1$. In this example, we drop cuts when there are $k-c$ frozen edges (or $k-10$ in the bottom picture). Then, in both examples we have dropped $S \smallsetminus T$ and $S \cap T$, but we have not yet dropped $S$ and $T$.}\label{fig:realissue}
\end{figure}

\paragraph{Overcoming the Barrier with Ghost Values.} 
Instead of working with this uncrossable family, we introduce a relaxation approach we call ``ghost value augmentation.'' We consider the LP solution $y$ together with a ghost vector $g$ that augments $y$, so that the LP constraints are now of the form $x(\delta(S)) + g(\delta(S)) = (x+g)(\delta(S)) \ge k$. We say such a constraint is tight with respect to solution $y$ if $(y+g)(\delta(S)) = k$. This ghost vector will help us achieve uncrossing of tight sets. However, crucially, it will never be used in the final solution. This ensures that we never increase the cost of the solution compared to the LP.

We will still follow the general framework of iterative relaxation. Given an extreme point solution $y$, we will delete edges with $y_e = 0$ and freeze edges with $y_e = 1$. And, as before, we will drop constraints corresponding to tight sets $S$ with the property that $|\delta(S) \cap F| \ge k-O(1)$ (where $F$ is the set of frozen edges). However, we will only drop such a set $S$ if:
%
\begin{enumerate}[(i)]
  \item\label{item:dropcondLPConstr} \textbf{$S$ is a minimal tight set corresponding to an LP constraint}. We restrict ourselves to tight sets for which there is no $T \subsetneq S$ such that the cut constraint corresponding to $T$ is tight and still in the LP. Such sets $S$ are desirable because they have the additional property that they are either vertices or all edges with both endpoints inside them are frozen. 
  \item\label{item:dropcondLowFrac} \textbf{$\delta(S)$ has only $O(1)$ fractional edges}. This allows us to ensure that the cut $\delta(S)$ does not change much over the course of the remainder of the algorithm. In particular, this will let us derive \textit{upper bounds} on the number of edges crossing a dropped set, which in turn helps to lower bound the value of other cuts.
\end{enumerate}
Of course, point \ref{item:dropcondLowFrac} implies that there are $k-O(1)$ frozen edges, so it is sufficient to check \ref{item:dropcondLPConstr} and \ref{item:dropcondLowFrac} for all sets. These points together allow us to argue that after dropping such a constraint $S$ with $|S| \ge 2$, it is safe to contract $S$ to a vertex.\footnote{Contracting a vertex set $S\subseteq V$ in $G=(V,E)$ means that we remove $S$ from $V$ and add a single new vertex $S$.
An edge $(u,v)\in E$ before contraction with $u\in S$ and $v\in V\setminus S$ will become an edge between $S$ and $v$; moreover, edges with both endpoints in $S$ will be removed in the contracted graph.
As is common, when having an edge $e=(u,v)\in E$ before contraction that gets transformed into an edge $(S,v)$ after contraction, we still consider this to be the same edge.
In particular, any LP value or constraint on $e$ before contraction will be interpreted as a value or constraint, respectively, on the new edge after contraction.} This is because cuts contained in $S$ will not change by more than $O(1)$ throughout the execution of the entire algorithm, as they only contain edges inside $S$, which are all frozen by \ref{item:dropcondLPConstr}, and edges in $\delta(S)$, of which all but $O(1)$ are frozen by \ref{item:dropcondLowFrac}. See \Cref{fig:contract}.

\begin{figure}[htb!]
    \centering 
    \begin{tikzpicture}[line width=1pt,xscale=1.2,yscale=0.8]
  \begin{scope}[every node/.style={draw,circle,fill=white, inner sep=1mm}]
    \node (A) at (0,2) {};
    \node (B) at (2,2) {};
    \node (C) at (2,0) {};
    \node[fill=blue] (D) at (0,0) {};
  \end{scope}

  \begin{scope}[blue]
    \draw (A) -- node[above] {$\frac{k}{2} + 5$} (B);
    \draw (B) -- node[left] {$\frac{k}{2} + 5$} (C);
    \draw (C) -- node[below] {$\frac{k}{2} + 5$} (D);
  \end{scope}

    \node[draw=red,circle,fit=(A) (B) (C) (D),inner sep=0.2mm] {};

    \begin{scope}[blue,-]
      \draw (A) -- node[above,xshift=-2mm] {$\frac{k}{2}+5$} (-1.6,2);
      \draw (D) -- node[below,xshift=-2mm] {$\frac{k}{2}-4$} (-1.6,0);
    \end{scope}

    \begin{scope}[dashed,-,bend left=15]
      \draw (D) to ++(-1.0,-1.0);
      \draw (D) to ++(0,-1.0);
    \end{scope}

\end{tikzpicture}
\caption{The blue edges are frozen, and the dotted edges are fractional. Consider the red set $S$ with $2$ incident fractional edges. Since  $\delta(S)$ has only $2$ fractional edges, and the edges inside $S$ are all frozen, any set contained in $S$ is already safe, as is the case with the blue vertex.}\label{fig:contract}
\end{figure}
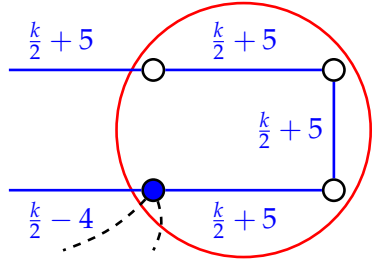

Contraction now gives us the crucial property that at every iteration of the algorithm, operating on a graph $G=(V,E)$ resulting from contracting some number of sets, we have $y(\delta(S)) \geq k$ for all $S \subseteq V \smallsetminus \{r\}$ for which $2 \le |S|$. For all vertices $v \in V$, we only have the weaker guarantee that $y(\delta(v)) \ge k-O(1)$ where we use the notation $\delta(v) = \delta(\{v\})$.
Even though the connectivity of the graph is not uniform, the fact that the only cuts below $k$ are the vertices allows us to show that either:
\begin{enumerate}[(a)]
  \item\label{item:canDoGhostValueAugm} There is an edge $e=(u,v)$ with $(y+g)(E(u,v)) \in [\frac{k}{2}-O(1),\frac{k}{2})$ (where $E(u,v)$ is the set of edges between $u$ and $v$). In this case, we perform a \textit{ghost value augmentation}: we artificially increase the value of $(y+g)(E(u,v))$ to at least $\frac{k}{2}$ by increasing $g_e$ by $O(1)$ for an edge in $E(u,v)$.
  \item\label{item:canUncross} Otherwise, the set of tight constraints can be successfully uncrossed. In this case, we argue that there must be a set $S$ that is safe to drop that has no tight children and at most $3$ fractional edges. If $S$ is not a singleton, then we additionally contract it.
\end{enumerate}
At first, \ref{item:canDoGhostValueAugm} may look like a strange property to expect for $k$-ECSS solutions. Indeed, the underlying input graph may not have any multi-edges, so that at the first iteration $y(E(u,v)) \le 1$ for all vertices $u,v$. However, as sets are contracted, these structures may begin to emerge as barriers for uncrossing. For example, in \cref{fig:realissue}, since $S \smallsetminus T$ and $S \cap T$ have been dropped, they are now singletons. This allows us to study $y(E(u,v))$ for $\{u\}=S \smallsetminus T$ and $\{v\} = S \cap T$. And one can see that $y(E(u,v)) = \frac{k}{2}-O(1)$, giving us a candidate for ghost value augmentation.\footnote{Note that \cref{fig:realissue} is not quite representative of the situations we will arrive at over the course of the algorithm since it was designed to handle the situation in which we drop all constraints with $k-O(1)$ frozen edges and not $O(1)$ fractional edges. However, it is quite similar to situations we study in the following sections. See \cref{fig:lowfracissue} for an instance tailored to our algorithm which shows the importance of ghost value augmentation.} For more intuition about why these structures should appear, one can study the cactus representation of minimum cuts (see~\cite{FF09} for a nice overview of this representation).

In both cases~\ref{item:canDoGhostValueAugm} and~\ref{item:canUncross} we make progress: we either drop a constraint that is safe to drop (and possibly contract its corresponding set), or we fix an edge in our current graph to at least $\frac{k}{2}$. We describe this process in more detail in \cref{sec:alg}. In \cref{sec:proof} we show that one of these two cases must occur and that at the end of the algorithm the frozen edges $(k-O(1))$-connect the graph.

\section{Main Rounding Theorem}\label{sec:rounding}


Our main result will follow immediately from a general rounding theorem. 

\begin{theorem}[Main Rounding Theorem]\label{thm:mainaux}
  There is a poly-time algorithm that, given $y \in \mathbb{R}^E_{\geq 0}$ where $y(\delta(S)) \geq k$ for all non-empty $S \subsetneq V$, returns a $z \in \mathbb{Z}_{\geq 0}^E$ that is:
  \begin{enumerate}[nosep]
    \item \textbf{Cost-Preserved: } $c^{\top} z \leq c^{\top} y$;
      \item \textbf{Integrally-Rounded: }$z_e \in \{\lfloor y_e \rfloor, \lceil y_e\rceil\}$;
      \item \textbf{Highly-Connected: } $z(\delta(S)) \geq k-9-\Ind{k \text{ odd}}$ for all non-empty $S \subsetneq V$.\footnote{$\Ind{k \text{ odd}}$ is $1$ if $k$ is odd and $0$ otherwise.}
  \end{enumerate}
\end{theorem}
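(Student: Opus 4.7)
The plan is to follow the iterative relaxation blueprint sketched in Section~1.1: maintain an LP whose cut constraints read $(x+g)(\delta(S)) \geq k$, where $g \in \mathbb{R}^E_{\geq 0}$ is a \emph{ghost vector} that augments connectivity inside the LP but is never charged to the output, and in each iteration either delete a zero-valued edge, freeze an integral edge, drop a tight cut that is already essentially covered by frozen edges, inject a small amount of ghost mass into a ``half-weight'' multi-edge bundle, or uncross the tight family to expose such a droppable cut. Because $g$ enters only as a boundary augmentation and never appears in the returned $z$, the cost-preservation property $c^\top z \leq c^\top y$ will follow from the fact that $y$ itself remains feasible for the LP $(x+g)(\delta(S)) \geq k$ at every step with objective $c^\top x$ alone.

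I would initialize $g = 0$, the frozen set $F = \emptyset$, and the active family $\mathcal{S} = \{S \subsetneq V : S \neq \emptyset\}$. Reparametrizing $x_e$ relative to $\lfloor y_e \rfloor$ enforces the rounding window $z_e \in \{\lfloor y_e \rfloor, \lceil y_e \rceil\}$ automatically. Each iteration extracts an extreme-point solution of the current LP, deletes zero-valued edges, and freezes edges whose value has become integral. It then scans for a set $S \in \mathcal{S}$ that is (i) tight, (ii) minimal among tight constraints in $\mathcal{S}$, and (iii) has at most three fractional edges in $\delta(S)$; such an $S$ is dropped, and if $|S| \geq 2$ is contracted. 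Contraction is safe because minimality forces every edge inside $S$ to be frozen, so the connectivity of any set properly inside $S$ is already pinned down up to the $\leq 3$ fractional edges in $\delta(S)$. The key invariant to maintain is that $(y+g)(\delta(S)) \geq k$ for every $|S| \geq 2$ in the current (contracted) graph, while singleton cuts may have fallen to $k - O(1)$.

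If none of these moves applies, the dichotomy from the introduction forces one of two cases. In case~(a) there is a pair of vertices $u, v$ with $(y+g)(E(u,v)) \in [\tfrac{k}{2} - O(1), \tfrac{k}{2})$; I would raise $g_e$ on some fractional $e \in E(u,v)$ until $(y+g)(E(u,v)) \geq \tfrac{k}{2}$, noting that the half-integrality of $\tfrac{k}{2}$ when $k$ is odd is exactly the source of the extra $\mathbb{1}[k \text{ odd}]$ in the final bound. In case~(b) no such bundle exists, and I would argue that the tight family uncrosses via the standard identity $\chi^{\delta(S)} + \chi^{\delta(T)} = \chi^{\delta(S \cap T)} + \chi^{\delta(S \cup T)}$, which holds precisely because any offending overlap would be a forbidden half-weight bundle ruled out by the failure of case~(a). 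A token / degrees-of-freedom argument on the resulting laminar family then produces a set with at most three fractional boundary edges and no tight descendants, feeding back into the drop-and-contract step. Polynomial termination follows because each iteration strictly decreases either the number of variables, $|\mathcal{S}|$, or the remaining slack before a forced drop triggered by accumulated ghost mass.

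The principal obstacle is case~(b): proving that, in the presence of singleton cuts whose $(y+g)$-value has fallen below $k$ from earlier drops and ghost mass concentrated on certain edges, the tight LP constraints still uncross into a laminar family spanning the tight equations. My plan is to treat $y+g$ uniformly as the current fractional solution and exploit the invariant $(y+g)(\delta(S)) \geq k$ for $|S| \geq 2$ to recover the supermodular behavior needed for uncrossing away from singletons; near singletons, the absence of case~(a) bundles prevents the cross term from breaking the uncrossing identity. Assembling the three guarantees: integral rounding follows from the freezing rule, cost-preservation from $g$ never appearing in $z$, and the final connectivity bound $k - 9 - \mathbb{1}[k \text{ odd}]$ from a careful accounting per dropped cut of at most three fractional boundary edges plus the $O(1)$ ghost mass accumulated on that cut, which must be shown to aggregate to the stated constant over the entire execution.
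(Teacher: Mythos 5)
Your plan is essentially the paper's own high-level outline (it runs the same ghost-value iterative relaxation, the same drop/contract rule, the same dichotomy), but the two places where you defer to ``a plan'' or ``a careful accounting'' are exactly where the real proof lives, and as stated your accounting would not go through. First, the uncrossing step is only asserted: the actual argument has to show that if a $y$-tight cut $S$ is not spanned by a maximal laminar family $\mathcal{L}$ of tight constraints, then (choosing $S$ to minimize the number of sets of $\mathcal{L}$ it crosses) the sets $S\cap L$ and one of $S\setminus L,\,L\setminus S$ must already have been dropped --- hence are contracted \emph{singletons} --- and the parallel-edge load between them satisfies $\tfrac{k}{2}-2 \le (y+g)(E(u,v)) < \tfrac{k}{2}$, contradicting that no ghost augmentation applies. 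This needs the invariant $(y+g)(\delta(\cdot)) \ge k-2$ for dropped sets (which itself uses the $\le 3$ fractional-edge condition), the fact that at most one edge in any parallel class is fractional, and a separate integrality argument to exclude the boundary case where the load equals exactly $\tfrac{k}{2}$; none of this is in your proposal, and ``the absence of case~(a) bundles prevents the cross term from breaking the identity'' is the conclusion, not a proof.

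Second, your final accounting --- bounding ``the $O(1)$ ghost mass accumulated on that cut \ldots over the entire execution'' --- is not how the bound can be obtained: a single cut of the original graph may be crossed by many ghost-augmented edges, and no per-cut $O(1)$ bound on ghost mass is proved (or needed) in the paper. The correct mechanism is that each edge receives at most one augmentation, and each augmentation certifies a \emph{frozen integral} parallel bundle of value at least $\tfrac{k}{2}-2$; then for any cut either at most one augmented edge crosses it (so $g(\delta(S)) \le 2$ and the LP constraint does the work) or at least two do, in which case the two frozen bundles alone give $z(\delta(S)) \ge k-4$. Moreover, your contraction-safety argument only covers sets \emph{contained in} a contracted set $R$; for cuts of the original graph that \emph{cross} some $R\in\mathcal{R}$ (which can cross many contracted sets) you need the additional facts that $z$ puts at least $\tfrac{k}{2}-4$ on the edges between $R\cap S$ and $R\setminus S$, and the upper bound $z(\delta(R)) \le k+2$ --- this upper bound is the real reason the drop rule demands at most $3$ fractional boundary edges --- combined via the decomposition $z(\delta(S)) \ge z(\overline{E}(S\cap R, R\setminus S)) + \tfrac12\bigl(z(\delta(S\setminus R)) + z(\delta(S\cup R)) - z(\delta(R))\bigr)$, which is how the constant $k-9$ actually arises. (Minor point: the $\Ind{k \text{ odd}}$ term comes simply from running the even case with $k-1$ when $k$ is odd, not from half-integrality of $\tfrac{k}{2}$ inside the augmentation step.)
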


We now observe that both our main result for $k$-ECSS (\Cref{thm:mainECSS}) and our main result for $k$-ECSM in polynomial time (\Cref{thm:mainECSM}) are immediate from this rounding theorem.
\mainECSS*
\begin{proof}
    The result is immediate from applying \Cref{thm:mainaux} to any optimal solution $y$ to \lpecss[(k+10)]. $y$ is poly-time computable by the Ellipsoid Method~\cite{groetschel_1981_ellipsoid} and standard separation oracles.
\end{proof}

We note that in fact the same proof shows we can get the same guarantee for a slightly more general problem than $k$-ECSS. In particular, edges can be given arbitrary lower and upper bounds, and we can still obtain a solution in polynomial time (this includes the case in which $k$ and possibly the lower and upper bounds are exponentially large compared to the input size). 

\mainECSM*
\begin{proof}[Proof of \Cref{thm:mainECSM} without assuming that $k$ is given in unary]\phantomsection\label{pf:mainECSMStrongPoly} Let $y$ be an optimal solution to \lpecss[(k+10)] of cost at most $\LPOPTkECSM[(k+10)]$. Next, apply \Cref{thm:mainaux} to $y$ to compute a solution $z$. Return (the edge multiset naturally corresponding to) $z$ as our solution. $y$ is computable by the Ellipsoid Method~\cite{groetschel_1981_ellipsoid} and standard separation oracles. Furthermore, by the guarantees of \Cref{thm:mainaux} we have that our solution is feasible for $k$-ECSM and computable in poly-time. Lastly, by \Cref{thm:mainaux} and \Cref{eq:scaling} its cost is upper bounded by
\begin{align*}
    \LPOPTkECSM[(k+10)] \leq \left(1 + \frac{10}{k} \right) \cdot \LPOPTkECSM[k]. \qquad \qedhere
\end{align*}


\end{proof}



\noindent Thus, in what follows, we focus on showing \cref{thm:mainaux}. Furthermore, observe that in order to do so it suffices to show the result for even $k$ since, if we are given an odd $k$, applying the result to (the even number) $k-1$ immediately gives the result for (the odd number) $k$. Thus, in what follows we assume that $k$ is even.

\section{Algorithm for the Rounding Theorem}\label{sec:alg}
Having reduced our problem to showing \Cref{thm:mainaux}, we proceed to describe the algorithm for \Cref{thm:mainaux}. 

As discussed earlier, our algorithm for \Cref{thm:mainaux} uses iterative relaxation techniques. Informally our algorithm is as follows. We repeatedly solve an LP that tries to round $y$. Each time we solve our LP, either our solution is integral in which case we return our solution, it has a newly integral edge which we freeze at its current value, it has an edge with value $0$ which we delete, or we perform one of the following two relaxations of our LP:
\begin{enumerate}
    \item \textbf{Ghost Value Augmentation:} we costlessly increase the LP value between two carefully chosen vertices. Specifically, if there are two vertices $u$ and $v$ that have nearly $\sfrac{k}{2}$ total LP value on edges between them (namely total edge value in $\left[\sfrac{k}{2}-2, \sfrac{k}{2}\right)$) then we add ``ghost values'' by increasing the LP value between $u$ and $v$ by $2$ at cost $0$.
    \item \textbf{Drop/Contract:} we drop a carefully chosen set's constraint and contract it. Specifically, if there is a tight set $S$ (i.e., a set with exactly $k$ total LP edge mass leaving it) corresponding to a constraint in the LP, such that $S$ is
      \begin{enumerate*}[label= (\roman*)]
    \item minimal in the sense that it contains no strict subset corresponding to a tight LP constraint, and
    \item has at most $3$ fractional edges incident to it,
  \end{enumerate*}
    then we remove the constraint corresponding to $S$ from our LP and (if the set contains $2$ or more vertices) we contract it into a single vertex.
\end{enumerate}
\noindent As we later argue, we will always be in one of the above cases. The final solution we return will ignore the ghost values, and we will show that even after ignoring the ghost values our solution is well-connected.

We now more precisely define our LP given an input vector $y \in \mathbb{R}^E_{\geq 0}$ which we would like to round.  We denote by $\lfloor y \rfloor$ and $\lceil y \rceil$ the integral vector obtained from $y$ by taking the floor and ceiling of each component respectively. We let $x \in [\lfloor y \rfloor, \lceil y \rceil]$ denote that $\lfloor y_e \rfloor \leq x_e \leq \lceil y_e \rceil$ for each $e \in E$. To achieve a clean and elegant bookkeeping of our ghost values, we will save them in a separate vector $g\in \mathbb{Z}^E_{\geq 0}$. Thus, for a given $y$ and $g$, the LP we will solve is the following (potentially with extra constraints for  frozen coordinates).
\begin{equation}\label{lp:ghost}
\begin{array}{rrcll}
  \min &c^\top x     &     &                       &                                                     \\
       &x(\delta(S)) &\geq &k - g(\delta(S))       &\forall S\subseteq V\setminus \{r\}, S\neq \emptyset \\
       &x            &\in  & [\lfloor y \rfloor, \lceil y \rceil]. &\tag{$k\mathrm{-EC~Ghost}~\mathrm{LP}$}
\end{array}
\end{equation}
For a ghost value vector $g\in \mathbb{Z}_{\geq 0}^E$, we call the constraint $x(\delta(S)) \geq k - g(\delta(S))$ the \emph{$g$-cut constraint for $S$}. We say that such a constraint is \emph{$y$-tight} if $y(\delta(S)) = k - g(\delta(S))$. Analogously, we will say that a constraint $x_e = p_e$ for $p_e \in \mathbb{Z}_{\geq 1}$ on a single edge $e$ is \emph{$y$-tight} if $y_e = p_e$.


Our algorithm is formally described in \Cref{alg:main} and uses the following notation. For any vector $y\in \mathbb{R}^E$, we denote by $\fracpart(y)\coloneqq \{e\in E\colon y_e \not\in \mathbb{Z}\}$ all edges with a fractional $y$-value. To clearly distinguish between the original input graph and the graph at each iteration of our algorithm, we will denote the original input graph by $\overline{G}=(\overline{V},\overline{E})$  and the graph used in each iteration of the algorithm (i.e., $\overline{G}$ after some vertex contractions and edge deletions) by $G=(V,E)$. We will be explicit about the specific graph considered. For any two vertex sets $S,T \subseteq V$, we denote by $E(S,T)\subseteq E$ all edges with one endpoint in $S$ and one in $T$.
For vertices $u,v$ and $S\subseteq V$, we also use the shorthand $E(u,v) \coloneqq E(\{u\}, \{v\})$ and $E(u,S) \coloneqq E(\{u\},S)$. For $S \subseteq V$, we let $E[S] := \{\{u, v\} \in E : u,v \in S\}$ be all edges with both endpoints in $S$. We note that even though one could perform multiple ghost value augmentations or multiple cut relaxations in a single iteration of the while loop, \cref{algline:ghostValueAugm} only performs a single such operation per iteration for simplicity.

\DontPrintSemicolon%
\begin{algorithm2e}
  \renewcommand{\Return}{\textbf{Return}\xspace}
  \SetKwBlock{Initialization}{Initialization}{}

  \Initialization{
    $z(e)=0$ for all $e\in \overline{E}$. \tcp*[f]{$z\in \mathbb{Z}_{\geq 0}^{\overline{E}}$ will be the returned solution.}\;
    $g(e)=0$ for all $e\in \overline{E}$. \tcp*[f]{Start with all ghost values being set to $0$.}\;
    $G = (V,E) \coloneqq (\overline{V}, \overline{E})$.\;
    Let \LPA be~\ref{lp:ghost} using $y$ and update $y$ to an optimal vertex solution.\label{algline:solveLPInit}\;
    Delete from $G$ all edges $e\in E$ with $y_e=0$.\;
  }
  \smallskip

  \While{$y\not\in \mathbb{Z}^E$}{
    \uIf{there are distinct vertices $u,v\in V$ with  $(y+g)(E(u,v)) \in \left[\frac{k}{2} - 2, \frac{k}{2}\right)$}{
      \textbf{Ghost Value Augmentation:} set $g_e = g_e + 2$ for an arbitrary edge $e\in E(u,v)$.
    }\label{algline:ghostValueAugm}
  \ElseIf{there is a $y$-tight $g$-cut constraint $x(\delta(S))\geq k - g(\delta(S))$ in \LPA such that\label{algline:ifSetRelax}
  \begin{enumerate}[label= (\roman*), nosep]
    \item \LPA does not contain a $y$-tight $g$-cut constraint $x(\delta(T)) \geq k - g(\delta(T))$ for $T\subsetneq S$, and
  \item $|\fracpart(y)\cap \delta(S)| \leq 3$,
\end{enumerate}}{
  $z_e = y_e$ for all $e\in E[S]$.\;
  \textbf{Drop/Contract:} Contract set $S$ in $G$ and remove $g$-cut constraint for $S$ from \LPA.\;
}\label{algline:relaxation}

  Compute an optimal vertex solution $y$ to \LPA (with ghost values g).\;
  Delete from $G$ (and from \LPA) all edges $e\in E$ with $y_e + g_e=0$.\label{algline:resolveLP}\;

   For all $e\in E$ with $y_e\in \mathbb{Z}$, we add the constraint $x_e=y_e$ to \LPA.\label{algline:addEqualityConstr}
  }
  \smallskip

  $z_e = y_e$ for all $e\in E$.\;

  \smallskip

  \Return~$z$.\;

  \caption{Main Algorithm. ($k$ is even.)}\label{alg:main}
\end{algorithm2e}

\paragraph*{Algorithm Intuition.} We summarize the intuition for our algorithm. As discussed earlier, a natural approach to rounding our vector $y$ would be to argue that whenever we re-solve our LP and it does not have a newly integral edge, there must be a constraint of our LP corresponding to a set that has at most $O(1)$ fractional edges crossing it. Such a constraint can be safely dropped from our LP and doing so corresponds to our drop/contract relaxation. However, standard arguments that the non-existence of such a set implies a newly integral edge would require showing that the relevant cut constraints can be uncrossed into a laminar family. This uncrossing is not necessarily possible if we have already dropped some of our cut constraints.  The somewhat unusual operation of ghost value augmentations will rescue us from this situation. In particular, we will see that the cases where uncrossing fails are exactly those when we are able to perform a ghost value augmentation. Indeed, a ghost value augmentation can be thought of as its own sort of cut relaxation: putting a ghost value of $2$ units on some edge $e\in E$, simply corresponds to replacing the cut constraints $x(\delta(S)) \geq k$ by $x(\delta(S)) \geq k-2$ for all $S\subseteq V$ with $e\in \delta(S)$.

\section{Analysis of Our Algorithm}\label{sec:proof}
We proceed to analyze our algorithm. Doing so will require addressing several non-trivial issues. First, it is not a priori clear that our algorithm terminates and, in particular, that we can always make progress by some relaxation or edge deletion or freezing. Second, even if the algorithm terminates, it is not clear that the returned vector $z$ is integral, much less that it has value at least $k-O(1)$ on every cut. For integrality, we will need to argue that, whenever we contract a set, all of its internal edges are integral. For near-$k$-edge-connectivity, it is particularly not clear that we do not end up with a single cut across which we have performed many ghost value augmentations, and so we will need to argue that this does not happen. Lastly, there are several efficiency concerns to address, including how to find the set $S\subseteq V$ of \cref{algline:relaxation} if there is one satisfying the stated criteria.

Before addressing these challenges, we introduce some notation we use throughout our analysis. 
For brevity, when saying that a property holds at \emph{any iteration of the algorithm}, we mean that it holds at the beginning of any iteration of the while loop in \cref{alg:main}.
As in \cref{alg:main}, we will let \LPA be the LP used by our algorithm at the beginning of an iteration. Note that although we compute our vertex solution $y$ at the end of an iteration and then possibly delete edges from $G$ and \LPA, even after deleting said edges $y$ remains a vertex to the resulting \LPA.\footnote{This can be seen by, e.g., observing that $y$ is a vertex solution if and only if it is feasible and it is the unique solution to $\bar{A}x = b$ (i.e., the columns of $\bar{A}$ are independent) where $\bar{A}$ is the subsystem of constraints of $A$ tight for $y$. For \LPA, the columns of $\bar{A}$ remain independent even after deleting the column and row corresponding to an $e$ with $y_e = 0$ so $y$ without this coordinate remains a vertex solution.} In other words, $y$ is a vertex solution to \LPA at the beginning of each iteration.
Likewise, we denote by $G=(V,E)$, $y\in \mathbb{R}_{\geq 0}^E$, and $g\in \mathbb{Z}_{\geq 0}^E$ the current graph, the current vertex solution to \LPA, and the current ghost values, respectively, at that iteration. Throughout our analysis, we will represent by $\mathcal{R} \subseteq 2^{\overline{V}\setminus \{r\}}$ all sets that are contracted throughout the algorithm.
More precisely, for some set $R\subseteq \overline{V}\setminus \{r\}$, we have $R\in \mathcal{R}$ if $R$ is not a singleton and during some point in the algorithm we had a vertex that, when undoing the contractions, corresponds to the vertex set $R$.
Because we perform contractions consecutively, the family $\mathcal{R}$ is laminar.
At any iteration of the algorithm, each vertex $v\in V$ of the current graph $G=(V,E)$ is either an original vertex, i.e., $v\in \overline{V}$, or it corresponds to a set $R\in \mathcal{R}$ that was contracted in a prior iteration of the algorithm.

\subsection{Termination of Algorithm}\label{sec:termination}
We start by showing that \cref{alg:main} terminates.

To show that \cref{alg:main} terminates, we show that, at any iteration of the while loop, if $y$ is not yet integral, and we cannot perform a ghost value augmentation, then we can perform a cut relaxation.
Because cut relaxations require cut constraints with high integrality, i.e., the number of fractional edges must be at most $3$, we will derive sparsity results in the following.
These results provide upper bounds on the number of $y$-fractional edges, or show that certain edges must have integral $y$-values.

We start with a basic property on the $(y+g)$-load on each cut. A consequence is that at any iteration, we have $(y+g)(\delta(v)) \ge k-2$ for any vertex $v$, and for every $S \subseteq V$ with $2 \le |S| \le n-2$ we have $(y+g)(\delta(v)) \ge k$. Thus our graph is close to being fractionally $k$-edge-connected.
\begin{lemma}\label{lem:yPlusGHigh}
  At any iteration of \cref{alg:main}, we have
  \begin{equation*}
    (y+g)(\delta(S)) \geq k -2 \qquad \forall S\subseteq V\setminus \{r\}, S\neq\emptyset.
  \end{equation*}
  Also, if for any non-empty set $S\subseteq V\setminus \{r\}$, we have $(y+g)(\delta(S)) = k-2$, then $y_e\in \mathbb{Z}_{\geq 0} \;\forall e\in \delta(S)$.
\end{lemma}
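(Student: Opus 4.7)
The plan is to split on whether the $g$-cut constraint for $S$ is currently in $\LPA$. If it is, LP feasibility of $y$ gives $(y+g)(\delta(S)) \geq k$, which is strictly better than the required bound (so the equality condition of the lemma does not apply here). Note that this case in fact covers every $S$ with $|S|\geq 2$ in the current graph: constraints are removed from $\LPA$ only through the \textbf{Drop/Contract} step at \cref{algline:relaxation}, which immediately contracts the dropped set into a single vertex, so any respecting subset of the current $V$ spanning two or more current vertices was never dropped.

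Assume then that the constraint for $S$ was dropped at some earlier iteration $i'$. By the conditions in \cref{algline:ifSetRelax} we had $(y_{i'}+g_{i'})(\delta(S)) = k$ and $|F|\leq 3$, where $F := \fracpart(y_{i'}) \cap \delta(S)$. Writing $\alpha_e := y_{i'}(e) - \lfloor y_{i'}(e)\rfloor \in (0,1)$ for $e\in F$, the crucial numerical observation is $\sum_{e\in F}\alpha_e \leq 2$: since $y_{i'}(\delta(S)) = k - g_{i'}(\delta(S)) \in \mathbb{Z}$ and the integer-valued edges of $\delta(S)\setminus F$ contribute an integer, $\sum_{e\in F}\alpha_e$ is itself a nonnegative integer, and it is strictly less than $|F|\leq 3$, hence lies in $\{0,1,2\}$ (in particular $|F|\neq 1$).

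The next step compares $(y_j+g_j)(\delta(S))$ at the current iteration $j$ to $(y_{i'}+g_{i'})(\delta(S)) = k$. Each edge of $\delta(S)\setminus F$ is integer at iteration $i'$ and hence frozen (by \cref{algline:addEqualityConstr} at the end of iteration $i'-1$, or trivially when $y_e^{(0)}\in\mathbb{Z}$ makes the box endpoints coincide), so its $y$-value does not change in iteration $j$. For $e\in F$, the box constraint $y_j(e)\geq \lfloor y_e^{(0)}\rfloor = \lfloor y_{i'}(e)\rfloor = y_{i'}(e) - \alpha_e$ bounds the $y$-decrease on $e$ by $\alpha_e$. Ghost values are monotone non-decreasing, and any edge deleted between iterations $i'$ and $j$ must have had $y_e + g_e = 0$ at deletion, which by monotonicity of $g$ forces $g_{i'}(e) = 0$, so deleted edges contribute $0$ to both sums. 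Putting everything together yields $(y_j+g_j)(\delta(S)) \geq k - \sum_{e\in F}\alpha_e \geq k-2$.

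For the equality clause, $(y_j+g_j)(\delta(S)) = k-2$ forces both $\sum_{e\in F}\alpha_e = 2$ and $y_j(e) = \lfloor y_{i'}(e)\rfloor$ for every $e\in F$, so every $e\in F$ becomes integer in $y_j$; combined with the frozen integer values on $\delta(S)\setminus F$, this gives $y_j(e)\in\mathbb{Z}_{\geq 0}$ for every $e\in\delta(S)$, as desired. The one step I expect to require the most care is the integrality bound $\sum_{e\in F}\alpha_e\leq 2$ rather than the naive bound $<3$; without exploiting the integrality of $y_{i'}(\delta(S))$ one would only obtain $\geq k-3$ and lose the characterization of equality.
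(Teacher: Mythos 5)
Your proof is correct and follows essentially the same route as the paper's: reduce to the iteration where the $g$-cut constraint for $S$ was relaxed, use that at most $3$ edges of $\delta(S)$ were fractional there together with integrality (of $k-g(\delta(S))$, resp.\ of the floored sum) to round the bound up from $>k-3$ to $\geq k-2$, and then propagate this to the current iteration via frozen integral values, the box lower bounds, and monotonicity of the ghost values, with the equality case forcing coordinatewise equality $y_e=\lfloor y_{i'}(e)\rfloor$. Your explicit treatment of deleted edges and of the box bound $\lfloor y_e^{(0)}\rfloor=\lfloor y_{i'}(e)\rfloor$ just spells out details the paper's argument ("integral $y$-values get fixed, hence $y\geq\lfloor\widetilde{y}\rfloor$, and $g\geq\widetilde{g}$") leaves implicit.
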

\begin{proof}
  If the $g$-cut constraint corresponding to $S$ is still in \LPA, then we even have $(y+g)(\delta(S)) \geq k$.
  Otherwise, let $\widetilde{y}$ and $\widetilde{g}$ be the \LPA solution and ghost values, respectively, at the iteration when the $g$-cut constraint corresponding to $S$ got relaxed.
  Hence, $(\widetilde{y}+\widetilde{g})(\delta(S)) \geq k$, because, as before, the $g$-cut constraint corresponding to $S$ is part of \LPA at the beginning of the iteration when $S$ gets relaxed.
  Moreover, $|\fracpart(\widetilde{y})\cap \delta(S)| \leq 3$, which implies
  \begin{equation*}
    (\lfloor \widetilde{y} \rfloor + \widetilde{g})(\delta(S)) > k-3.
  \end{equation*}
  Because the left-hand side is integral, we get
  \begin{equation*}
    (\lfloor \widetilde{y} \rfloor + \widetilde{g})(\delta(S)) \geq k-2.
  \end{equation*}
  The first statement now follows by observing that integral $y$-values get fixed, and hence $y \geq \lfloor \widetilde{y} \rfloor$, and that $g \geq \widetilde{g}$, because ghost values are non-decreasing.
  Moreover, this reasoning shows that to get $(y+g)(\delta(S)) = k-2$, we need $y(\delta(S)) = \lfloor \widetilde{y} \rfloor (\delta(S))$.
  Because $y\geq \lfloor \widetilde{y} \rfloor$, this implies as desired $y_e = \lfloor \widetilde{y}_e \rfloor \in \mathbb{Z}_{\geq 0} \;\forall e\in \delta(S)$.
\end{proof}

The following lemma formalizes that $y$ has low fractionality on any set of parallel edges. Below, recall that $y$ is a vertex solution to the relevant LP in each iteration.
\begin{lemma}\label{lem:atMostOneFracInEuv}
  At any iteration of the algorithm, we have
  \begin{equation*}
    |\fracpart(y) \cap E(u,v)| \leq 1 \qquad \forall u,v\in V, u\neq v.
  \end{equation*}
\end{lemma}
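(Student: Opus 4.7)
The plan is to use the classical vertex-solution perturbation argument: if two parallel edges $e_1,e_2 \in E(u,v)$ were both fractional in $y$, we could shift mass between them without violating any constraint of \LPA, exhibiting $y$ as the midpoint of two distinct feasible solutions and contradicting the fact that $y$ is a vertex.

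First I would suppose toward a contradiction that there exist distinct $u,v \in V$ and two edges $e_1,e_2 \in \fracpart(y) \cap E(u,v)$. Define the perturbation vector $\chi \in \mathbb{R}^E$ by $\chi_{e_1} = 1$, $\chi_{e_2} = -1$, and $\chi_e = 0$ for all other edges. The key geometric observation is that for every non-empty $S \subseteq V \setminus \{r\}$, the edges $e_1$ and $e_2$ both join $u$ to $v$, so $e_1 \in \delta(S)$ if and only if $e_2 \in \delta(S)$. Consequently $\chi(\delta(S)) = 0$ for every such $S$, and therefore $(y \pm \varepsilon \chi)(\delta(S)) = y(\delta(S)) \geq k - g(\delta(S))$ for every $\varepsilon$. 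In other words, the perturbation preserves every $g$-cut constraint currently present in \LPA.

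Next I would verify that the remaining two families of constraints of \LPA are also preserved for sufficiently small $\varepsilon > 0$. Since $y_{e_1},y_{e_2} \in \fracpart(y)$, we have $\lfloor y_{e_i}\rfloor < y_{e_i} < \lceil y_{e_i}\rceil$ for $i=1,2$, so for small enough $\varepsilon > 0$ both $y + \varepsilon \chi$ and $y - \varepsilon \chi$ lie in the box $[\lfloor y \rfloor, \lceil y \rceil]$. The only other constraints that can be in \LPA are the equalities $x_e = y_e$ introduced at \cref{algline:addEqualityConstr}, but these are added only when $y_e \in \mathbb{Z}$; since $e_1$ and $e_2$ are fractional, no such equality is imposed on them, and the other coordinates of the perturbation are zero, so these equalities are preserved as well. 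Hence $y + \varepsilon \chi$ and $y - \varepsilon \chi$ are both feasible solutions of \LPA.

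Finally, writing $y = \tfrac{1}{2}\bigl((y+\varepsilon\chi) + (y-\varepsilon\chi)\bigr)$ exhibits $y$ as a proper convex combination of two distinct feasible solutions, contradicting the fact that $y$ is a vertex solution of \LPA at this iteration. This contradiction establishes the claim. The only delicate point, and the one I would take the most care over, is to ensure that the enumeration of constraints of \LPA is complete — in particular accounting for the fact that ghost-value augmentations and cut relaxations only weaken or remove cut constraints but never introduce new ones tying $e_1$ and $e_2$ together, and that frozen edges do not interfere because they are integral while $e_1,e_2$ are fractional.
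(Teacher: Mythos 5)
Your proposal is correct and follows essentially the same route as the paper: perturb $y$ by $\pm\varepsilon(\chi^{\{e_1\}}-\chi^{\{e_2\}})$, note that every cut constraint contains $e_1$ if and only if it contains $e_2$ so the cut constraints are unaffected, and that the only constraints distinguishing the two edges are integral box or equality constraints, which are slack at fractional coordinates — contradicting that $y$ is a vertex of \LPA. No gaps to report.
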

\begin{proof}
  Assume for the sake of deriving a contradiction that there is a pair of vertices $u,v \in V, u\neq v$ with $|\fracpart(y)\cap E(u,v)| \geq 2$.
  Let $e_1, e_2 \in \fracpart(y)\cap E(u,v)$ be two distinct edges.
  Observe that for
  \begin{equation*}
    \epsilon = \min\left\{y_{e_1} - \lfloor y_{e_1}\rfloor, \lceil y_{e_1} \rceil - y_{e_1}, y_{e_2} - \lfloor y_{e_2} \rfloor, \lceil y_{e_2}\rceil - y_{e_2}\right\} > 0,
  \end{equation*}
  we have that both $y + \epsilon (\chi^{\{e_1\}} - \chi^{\{e_2\}})$ and $y - \epsilon (\chi^{\{e_1\}} - \chi^{\{e_2\}})$ are feasible for the LP.
  This holds because there is no constraint that contains $e_1$ and not $e_2$ or vice-versa, except for integral bounds (lower/upper bounds, or equality constraints) on the values of $y_{e_1}$ and $y_{e_2}$.
  This contradicts that $y$ is a vertex solution to the LP.
\end{proof}

To obtain strong enough sparsity to show that a cut relaxation is possible, we use a classic strategy.
Namely, we first show that, at any iteration of \cref{alg:main}, a vertex solution can be described as the unique solution to a very structured system of $y$-tight LP constraints.
More precisely, the $y$-tight $g$-cut constraints in this system correspond to sets that form a laminar family.
This is where we crucially exploit the use of ghost values, without which a statement as below would be wrong, as demonstrated in \cref{fig:lowfracissue}. 

\begin{figure}[htb!]
\centering 
  \begin{tikzpicture}[node distance=2.5cm, line width=1pt]
  \begin{scope}[every node/.style={inner sep=3mm}]
   \node (a) {};
   \node (b) [right of=a] {};
   \node (c) [right of=b] {};
  \end{scope}
   \node (invis-right) [right of=c,draw=none] {};
   \node (invis-right-up) [right of=c,yshift=10mm,xshift=-3mm,draw=none] {};
   \node (invis-left) [left of=a,draw=none] {};
   \node (invis-up) [above of=b,yshift=-5mm,draw=none] {};

   \begin{scope}[-,blue]
     \draw (invis-left) -- (a) node[midway,above,xshift=-0.1cm] {$\frac{k}{2}-1$};
     \draw (a) -- (b) node[midway,above,xshift=-5.0mm,yshift=-0.7mm] {$\frac{k}{2}-1$};
     \draw (b) -- (c) node[midway,above,xshift=0.3cm] {$\frac{k}{2}-1$};
     \draw (c) -- (invis-right) node[midway,above,xshift=0.3cm] {$\frac{k}{2}-1$};
   \end{scope}

   \begin{scope}[-,dashed,bend right=15]
     \draw (invis-left) to node[below] {} (a);
     \draw (b) to node[below right] {} (c);
     \draw (c) to node[below,xshift=6mm] {} (invis-right);
     \draw (b) to node[right] {} (invis-up);
   \end{scope}
   \begin{scope}[-,dashed,bend right=30]
		\draw (b) to node[below right] {} (c);
		\draw (c) to node[below,xshift=6mm] {} (invis-right);
   \end{scope}
   \draw[-,dashed] (c) to[bend left=15] node[below,xshift=6mm] {} (invis-right-up);

  \node[ellipse, thick, red, draw, fit=(a) (b), inner sep=0pt, minimum height=16mm, label={[xshift=-1.0cm, yshift=-0.1cm, red]$S$}] {};
  \node[ellipse, thick, green!50!black, draw, fit=(b) (c), inner sep=0pt, minimum height=16mm,label={[xshift=1.0cm, yshift=-0.1cm, green!50!black]$T$}] {};
\end{tikzpicture}
\caption{A situation in which ghost value augmentation is necessary. All dotted edges have value 1/2, so $S$ and $T$ are tight but not dropped. $S \cap T$ and $S \smallsetminus T$, however, have been dropped and thus are allowed to drop below connectivity $k$. Therefore it is not possible to uncross $S$ and $T$. Note that the integrality of the rightmost blue edge leaving $T$ is not necessary. In particular, this rightmost blue edge could be any number of fractional edges and the instance would have the same behavior.}\label{fig:lowfracissue}
\end{figure}
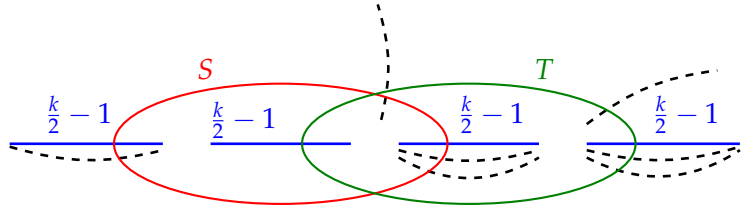

\begin{lemma}\label{lem:lamTightSys}
  Consider any iteration of \cref{alg:main} where no ghost value augmentation can be performed.
  Let $\mathcal{L}\subseteq \{S \subseteq V\setminus \{r\} \colon y(\delta(S)) = k - g(\delta(S))\}$ be a maximal laminar family corresponding to $y$-tight $g$-cut constraints in \LPA.
  Then, the linear equation system consisting of 
  \begin{equation*}
    x(\delta(L)) = k - g(\delta(L)) \quad \forall L\in \mathcal{L},
  \end{equation*}
  together with all $y$-tight constraints on single edges in \LPA, i.e., constraints of type $x_e = p_e$ for some $e\in E$ and $p_e\in \mathbb{Z}$, form a full column rank system\footnote{A system of linear equations $Ax=b$ has full column rank if the rank of $A$ is equal to the number of columns of $A$.} of $y$-tight constraints of \LPA.
  Thus, because $y$ is a vertex solution to \LPA, it is the unique solution to this system.
\end{lemma}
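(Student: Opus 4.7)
The plan is to prove \cref{lem:lamTightSys} by the usual maximality-plus-uncrossing template from iterative rounding, with the no-ghost-augmentation hypothesis supplying the ``effective skew supermodularity'' that is normally inherent in the requirement function. Since $y$ is a vertex solution to \LPA, by basic polyhedral theory the characteristic vectors of all $y$-tight constraints span $\mathbb{R}^E$ and $y$ is their unique solution. The tight constraints split into the single-edge equalities $x_e=y_e$ for $e$ with $y_e\in\mathbb{Z}$ (present by \cref{algline:addEqualityConstr}) and the $y$-tight $g$-cut constraints for sets $S$ whose constraint is still in \LPA; write $\mathcal{F}$ for the latter family. It therefore suffices to show that every $\chi^{\delta(S)}$ with $S\in\mathcal{F}$ lies in
\[
W := \operatorname{span}\!\big(\{\chi^{\delta(L)}:L\in\mathcal{L}\}\cup\{\chi^{\{e\}}:y_e\in\mathbb{Z}\}\big).
\]

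Suppose for contradiction some $S\in\mathcal{F}$ has $\chi^{\delta(S)}\notin W$; pick one minimizing the number of $L\in\mathcal{L}$ it crosses. By maximality of $\mathcal{L}$, the set $S$ must cross some $L\in\mathcal{L}$ (else $\mathcal{L}\cup\{S\}$ would be a strictly larger laminar family of $y$-tight $g$-cut constraints). I would then apply the two classical edge-characteristic identities
\begin{align*}
\chi^{\delta(S)}+\chi^{\delta(L)} &= \chi^{\delta(S\cap L)}+\chi^{\delta(S\cup L)}+2\,\chi^{E(S\setminus L,\,L\setminus S)},\\
\chi^{\delta(S)}+\chi^{\delta(L)} &= \chi^{\delta(S\setminus L)}+\chi^{\delta(L\setminus S)}+2\,\chi^{E(S\cap L,\,V\setminus(S\cup L))}.
\end{align*}
Using $(y+g)(\delta(S))=(y+g)(\delta(L))=k$ and \cref{lem:yPlusGHigh}, both cross terms are bounded by $2$ in $(y+g)$-value. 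The key structural observation is that any $W\subseteq V$ with $|W|\geq 2$ must still have its $g$-cut constraint in \LPA, since otherwise $W$ would have been dropped and contracted to a singleton; consequently $(y+g)(\delta(W))\geq k$ whenever $|W|\geq 2$. Thus whenever both uncrossed sets on one side of an identity have size at least $2$, the corresponding cross term is forced to $0$ and both uncrossed cuts are $y$-tight, placing them in $\mathcal{F}$ and yielding the standard uncrossing equality. The uncrossed $S_1,S_2\in\mathcal{F}$ each cross strictly fewer sets of $\mathcal{L}$ than $S$ does, so by minimality their $\chi^{\delta(\cdot)}$ lie in $W$, and with $\chi^{\delta(L)}\in W$ we conclude $\chi^{\delta(S)}\in W$—a contradiction.

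The one remaining case—exactly the situation in \cref{fig:lowfracissue}—is when an uncrossed set is a singleton whose cut constraint was previously dropped, so \cref{lem:yPlusGHigh} only gives $(y+g)(\delta(\{v\}))\geq k-2$ rather than $\geq k$. This is where I would invoke the no-ghost-augmentation hypothesis. When both sides of the uncrossing involve dropped singletons (say $\{a\}=S\setminus L$ and $\{v\}=S\cap L$), combining $(y+g)(\delta(S))=k$ with $(y+g)(\delta(a)),(y+g)(\delta(v))\geq k-2$ via the identity
\[
2(y+g)(E(a,v)) \;=\; (y+g)(\delta(a))+(y+g)(\delta(v))-(y+g)(\delta(\{a,v\}))
\]
forces $(y+g)(E(a,v))\geq k/2-2$; but then the no-ghost-augmentation hypothesis boosts this to $(y+g)(E(a,v))\geq k/2$. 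Propagating this lower bound through the remaining cuts $L$ and $T=L$ (and symmetric application to the singletons on the other side) either restores the correction-term-vanishing needed for uncrossing, or else exhibits some pair $u,v\in V$ with $(y+g)(E(u,v))\in[k/2-2,k/2)$, directly contradicting the hypothesis.

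The main obstacle is precisely the last step: carrying out the case analysis on which of $S\cap L,\,S\cup L,\,S\setminus L,\,L\setminus S$ are dropped singletons (and which are not) and showing that in every such configuration, if both standard uncrossings fail, then some pair lands in the forbidden $(y+g)$-range. The constants $2$ (the ghost-value-augmentation increment in \cref{algline:ghostValueAugm}) and the slack of $2$ in \cref{lem:yPlusGHigh} must match the width of the interval $[k/2-2,k/2)$ exactly, which is where the specific thresholds in \cref{alg:main} become essential.
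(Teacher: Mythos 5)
Your proposal follows the same route as the paper's proof: pick a $y$-tight cut constraint $\chi^{\delta(S)}$ outside the span with a minimum number of crossings with $\mathcal{L}$, uncross against a crossing $L\in\mathcal{L}$ via the two characteristic-vector identities, use that any set of size at least $2$ still has its $g$-cut constraint in \LPA to kill the cross terms and invoke minimality, and in the leftover case find two contracted singletons admitting a ghost value augmentation. But the proposal stops exactly where the real work begins, and you say so yourself (``the main obstacle is precisely the last step''). Concretely, you never establish the analogue of \cref{claim:singletonSets}: that in a failing configuration the constraint for $S\cap L$ has actually been dropped \emph{and is not implied by the reference system}, that $(y+g)(\delta(S\cap L))\leq k$, and that one of $S\setminus L$, $L\setminus S$ (the set $Q_1$) has been dropped with $(y+g)(\delta(Q_1))\leq k$, together with the extra consequences when these hold with equality. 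These \emph{upper} bounds are what give $a\leq \frac{k}{2}$ for $a=(y+g)(E(S\cap L,\,Q_1))$ --- you only derive the lower bound $a\geq \frac{k}{2}-2$ --- and proving them requires re-using the minimal-crossing choice of $S$ to show that the $\cup$/difference partners are implied by the reference system, an argument you only sketch for the ``both uncrossed sets have size at least $2$'' case. (Minor point: your size dichotomy also omits singletons that are original, non-contracted vertices, whose constraints are still in \LPA; these behave like the size-$\geq 2$ case, but the relevant split is ``dropped'' versus ``still in \LPA,'' not cardinality.)

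More seriously, the boundary value $(y+g)(E(u,v))=\frac{k}{2}$ is perfectly consistent with the no-augmentation hypothesis, since the forbidden interval $\left[\frac{k}{2}-2,\frac{k}{2}\right)$ is half-open; so ``boosting'' your lower bound to $\geq \frac{k}{2}$ is not a contradiction by itself. The paper spends the second half of its proof ruling out $a=\frac{k}{2}$: it uses \cref{lem:atMostOneFracInEuv} and the evenness of $k$ to force the edges in $E(S\setminus L, S\cap L)$ to be integral (hence fixed by single-edge equations of the reference system), deduces $b=\frac{k}{2}$ and $E(S\cap L, V\setminus(S\cup L))=\emptyset$, and then shows --- with a case split on whether the constraint for $L\setminus S$ was dropped --- that $\chi^{\delta(S\cap L)}$ would be spanned by the reference system, contradicting the ``not implied'' part of the claim. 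Your sentence ``propagating this lower bound \ldots either restores the correction-term-vanishing \ldots or else exhibits some pair in the forbidden range'' is precisely the step that needs this argument, and none is given. As written, the proposal is a correct plan matching the paper's approach, but with the decisive case analysis missing it does not constitute a proof.
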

\begin{proof}
  Assume for the sake of contradiction that there is an iteration of the algorithm where no ghost value augmentation can be performed, and such that there is a maximal laminar family $\mathcal{L} \subseteq 2^{V\setminus \{r\}}$ corresponding to $y$-tight $g$-cut constraints in \LPA fulfilling the following:
  The linear equation system consisting of all equations corresponding to $y$-tight $g$-cut constraints of sets in $\mathcal{L}$, together with equations corresponding to all $y$-tight constraints on single edges in \LPA, does not have full column rank.
  We call this linear equation system the \emph{reference system}.

  The reference system not having full column rank implies that there must be a $y$-tight $g$-cut constraint $x(\delta(S)) \geq k - g(\delta(S))$ not implied by it.
  Among all such sets $S\subseteq V\setminus \{r\}$, we choose one where
  \begin{equation*}
    \mathcal{L}_S \coloneqq \{L \in \mathcal{L} \colon \text{$L$ and $S$ are crossing}\}
  \end{equation*}
  has smallest cardinality.
  Because the constraint $x(\delta(S)) \geq k - g(\delta(S))$ is not implied by the reference system, and we did not add that constraint to it, it must be that $S$ crosses some set in $\mathcal{L}$.
  Hence, $|\mathcal{L}_S| \geq 1$.
  Let $L\in \mathcal{L}_S$.
  We continue by showing that the $g$-cut constraints corresponding to certain sets must have been dropped, allowing us to reduce to a setting similar to \cref{fig:lowfracissue}.

  \begin{claim}\label{claim:singletonSets} We have
    \hspace{0mm}\begin{enumerate}[label= (\roman*),nosep]
      \item\label{item:SintLSingleton}
        \begin{itemize}[nosep]
          \item The $g$-cut constraint corresponding to $S\cap L$ has been dropped from \LPA and the equation $x(\delta(S\cap L)) = k - g(\delta(S\cap L))$ is not implied by the reference system, 
          \item $(y+g)(\delta(S\cap L)) \leq k$, and
          \item if $(y+g)(\delta(S\cap L)) = k$, then $(y+g)(\delta(S\cup L))=k$ and $E(S\setminus L, L\setminus S) = \emptyset$.
        \end{itemize}
      \item\label{item:oneOfSymDiffSingleton} There is a set $Q_1$ among $\{S\setminus L, L\setminus S\}$ (let $Q_2$ be the other set) such that
        \begin{itemize}[nosep]
          \item the $g$-cut constraint corresponding to $Q_1$ has been dropped from \LPA,
          \item $(y+g)(\delta(Q_1)) \leq k$, and
          \item if $(y+g)(\delta(Q_1)) = k$, then also $(y+g)(\delta(Q_2))=k$ and $E(S\cap L, V\setminus (S\cup L)) = \emptyset$.
        \end{itemize}
    \end{enumerate}
  \end{claim}
  \begin{proof}
    We start by proving~\ref{item:SintLSingleton}.
    If $(y+g)(\delta(S\cap L)) < k$, then the $g$-constraint corresponding to $S\cap L$ must have been dropped, and all conditions of~\ref{item:SintLSingleton} are fulfilled.
    Hence, assume from now on $(y+g)(\delta(S\cap L)) \geq k$.
    
    We use the following well-known basic relation, which can be verified by checking that the shown equation holds for each coordinate (note that every coordinate corresponds to an edge): 
    \begin{equation}\label{eq:basicCupCapRel}
      \chi^{\delta(S)} + \chi^{\delta(L)} = \chi^{\delta(S\cup L)} + \chi^{\delta(S\cap L)} + 2 \chi^{E(S\setminus L, L \setminus S)}.
    \end{equation}
    By taking the scalar product of the above equation with $y + g$, we get
    \begin{equation}\label{eq:ygCapCup}
      (y+g)(\delta(S)) + (y+g)(\delta(L)) = (y+g)(\delta(S\cup L)) + (y+g)(\delta(S\cap L)) + 2(y+g)(E(S\setminus L, L\setminus S)).
    \end{equation}
    Note that the $g$-cut constraint corresponding to $S\cup L$ has not been dropped yet because $|S\cup L| \geq 2$.
    Hence, $(x+g)(\delta(S\cup L))\geq k$.
    Together with $(y+g)(\delta(S)) = (y+g)(\delta(L)) = k$ and $(y+g)(\delta(S\cap L)) \geq k$, \Cref{eq:ygCapCup} thus implies $(y+g)(\delta(S\cap L)) = (y+g)(\delta(S\cup L)) = k$ and $(y+g)(E(S\setminus L, S\setminus L))=0$.
    The last equation implies $E(S\setminus L, L\setminus S) = \emptyset$, because $\supp(y+g)=E$, which holds because we deleted all edges with $(y+g)$-value zero.
    Hence,~\eqref{eq:ygCapCup} simplifies to
    \begin{equation*}
      \chi^{\delta(S)} + \chi^{\delta(L)} = \chi^{\delta(S\cup L)} + \chi^{\delta(S\cap L)}. 
    \end{equation*}
    Because $\chi^{\delta(L)}$ is a row of our reference system and $\chi^{\delta(S)}$ is not spanned by the rows of our reference system, we have that either the equation $x(\delta(S\cup L)) = k - g(\delta(S\cup L))$ or $x(\delta(S\cap L))=k - g(\delta(S\cap L))$ is not implied by our reference system.
    Note that both equations correspond to $y$-tight $g$-cut constraints as shown above.
    Because the $g$-cut constraint corresponding to $S\cup L$ is still part of \LPA as $|S\cup L|\geq 2$, and $\mathcal{L}_{S\cup L}\subsetneq \mathcal{L}_S$,\footnote{%
      $\mathcal{L}_{S\cup L}\subsetneq \mathcal{L}_S$ (and analogously $\mathcal{L}_{S\cap L}, \mathcal{L}_{S\setminus L}, \mathcal{L}_{L\setminus S} \subsetneq \mathcal{L}_S$) follows from the following observation on laminar families.
    Let $\mathcal{L}$ be a laminar family over some finite ground set $N$, and let $S\subseteq N$ and $L\in \mathcal{L}$ be such that $S$ crosses $L$.
    Then any set in $\mathcal{L}$ that crosses $S\cup L$ also crosses $S$.
    However, the set $L$, which crosses $S$, does not cross $S\cup L$.
    } %
    this $g$-cut constraint must be implied by the reference system.
    For otherwise, we could have chosen $S\cup L$ instead of $S$, which violates our choice of $S$.
    Hence, the $g$-cut constraint corresponding to $S\cap L$ is not implied by our reference system.
    Because also $\mathcal{L}_{S\cap L} \subsetneq \mathcal{L}_S$, the $g$-cut constraint corresponding to $S\cap L$ cannot be part of \LPA anymore, as this would again violate our choice of $S$.

    \medskip

    For point~\ref{item:oneOfSymDiffSingleton} we can follow an analogous approach as for~\ref{item:SintLSingleton}.
    If there is a set $Q\in \{S\setminus L, L\setminus S\}$ with $(y+g)(\delta(Q)) <k$, then we choose $Q_1\coloneqq Q$.
    Indeed, the $g$-cut constraint corresponding to $Q_1$ must have been dropped from \LPA because $(y+g)(\delta(Q_1))<k$.
    Thus, $Q_1$ fulfills all conditions of point~\ref{item:oneOfSymDiffSingleton}.
    Hence, from now on assume $(y+g)(\delta(S\setminus L)) \geq k$ and $(y+g)(\delta(L\setminus S)) \geq k$.

    We use the following well-known basic relation:
    \begin{equation*}
      \chi^{\delta(S)} + \chi^{\delta(L)} = \chi^{\delta(S\setminus L)} + \chi^{\delta(L\setminus S)} + 2 \chi^{E(S\cap L, V\setminus (S\cup L))}.
    \end{equation*}
    Because $(y+g)(\delta(S))=k$ and $(y+g)(\delta(L))=k$, we must have $(y+g)(\delta(S\setminus L)) = k$, $(y+g)(\delta(L\setminus S)) = k$, and $E(S\cap L, V\setminus (S\cup L)) = \emptyset$.
    This implies
    \begin{equation*}
      \chi^{\delta(S)} + \chi^{\delta(L)} = \chi^{\delta(S\setminus L)} + \chi^{\delta(L\setminus S)}.
    \end{equation*}
    Because $\chi^{\delta(L)}$ is a row in our reference system and $\chi^{\delta(S)}$ is not spanned by the rows of your reference system, we have that either the equation $x(\delta(S\setminus L)) = k - g(\delta(S\setminus L))$ or $x(\delta(L\setminus S)) = k - g(\delta(L\setminus S))$ (or both) is not implied by the reference system.
    Note that both equations correspond to $y$-tight $g$-cut constraints as shown above.
    Let $Q_1\in \{S\setminus L, L\setminus S\}$ be such that $x(\delta(Q_1)) = k - g(\delta(Q_1))$ is not implied by the reference system.
    Because $\mathcal{L}_{Q_1}\subsetneq \mathcal{L}_S$, the $g$-cut constraint corresponding to $Q_1$ must have been dropped from \LPA.
    For otherwise, we could have chosen $Q_1$ instead of $S$, which violates our choice of $S$.
    Hence, $Q_1$ fulfills all conditions of point~\ref{item:oneOfSymDiffSingleton}.\qedhere~(\cref{claim:singletonSets})
  \end{proof}

By \cref{claim:singletonSets}~\ref{item:SintLSingleton}, the $g$-cut constraint corresponding to $S\cap L$ got relaxed/dropped from \LPA and $x(\delta(S \cap L)) = k - g(\delta(S\cap L))$ is not implied by the reference system.
Moreover, \cref{claim:singletonSets}~\ref{item:oneOfSymDiffSingleton} implies that the constraint corresponding to at least one of the sets $S\setminus L$ or $L\setminus S$ got dropped.
Recall that dropped sets got contracted and therefore correspond to singletons.
We finish the proof by showing that a ghost value augmentation could have been performed with respect to the singleton $S\cap L$ and either $S\setminus L$ or $L\setminus S$.

  To this end consider the different $(y+g)$-loads between the four sets $S\setminus L$, $S\cap L$,  $L\setminus S$, and $V\setminus (S\cup L)$.
  For brevity let

  \noindent\begin{minipage}{0.5\linewidth}
    \begin{align*}
      a &\coloneqq (y+g)(E(S\cap L, S\setminus L))\\
      b &\coloneqq (y+g)(E(S\cap L, L\setminus S))\\
      c &\coloneqq (y+g)(E(L\setminus S, V\setminus (S\cup L)))\\
      d &\coloneqq (y+g)(E(S\setminus L, V\setminus (S\cup L)))\\
      e &\coloneqq (y+g)(E(S\setminus L, L\setminus S))\\
      f &\coloneqq (y+g)(E(S\cap L, V\setminus (S\cup L))).
    \end{align*}
  \end{minipage}%
  \begin{minipage}{0.5\linewidth}
    \vspace{5mm}
    \begin{tikzpicture}

  \begin{scope}[line width=1pt]

    \begin{scope}
      \coordinate (c1) at (0,0);
      \coordinate (c2) at (2,0);

      \draw (c1) ellipse (1.5cm and 1.0cm);
      \draw (c2) ellipse (1.5cm and 1.0cm);
    \end{scope}

    \begin{scope}
      \node at ($(c1) + (-1.2cm, -1.0cm)$) {$S$};
      \node at ($(c2) + (1.2cm, -1.0cm)$) {$L$};
    \end{scope}

    \coordinate (c) at ($(c1)!0.5!(c2)$);
    \coordinate (out) at ($(c) + (0cm, 1.5cm)$);

    \begin{scope}[red]
      \def\s{2mm}
      \draw ($(c1)+(-2*\s,0)$) to[bend right] node[below] {$a$} ($(c)+(-\s,0)$);
      \draw ($(c2)+(2*\s,0)$) to[bend left] node[below] {$b$} ($(c)+(\s,0)$);
      \draw ($(c2)+(2*\s,2*\s)$) to[bend right] node[pos=0.8,right] {$c$} (out-|c2);
      \draw ($(c1)+(-2*\s,2*\s)$) to[bend left] node[pos=0.8,left] {$d$} (out-|c1);
      \draw ($(c1)+(-\s,-3*\s)$) to[out=-35, in=-135] node[pos=0.5,below] {$e$} ($(c2)+(\s,-3*\s)$);
      \draw (out) to[bend left] node[pos=0.2,right] {$f$} ($(c)+(0,\s)$);
    \end{scope}

  \end{scope}

\end{tikzpicture}
  \end{minipage}

\bigskip

  The above definitions immediately lead to the following basic relations
  \begin{equation}\label{eq:abcdefRel}
    \begin{array}{l@{\;}c@{\;}l@{\;}c@{\;}l}
      k     &=    &(y+g)(\delta(S))            &= &b + d + e + f \\
      k     &=    &(y+g)(\delta(L))            &= &a + c + e + f \\
      k     &\geq &(y+g)(\delta(S\cap L))      &= &a + b + f     \\
      k - 2 &\leq &(y+g)(\delta(S\cap L))      &= &a + b + f     \\
      k - 2 &\leq &(y+g)(\delta(S\setminus L)) &= &a + d + e     \\
      k - 2 &\leq &(y+g)(\delta(L\setminus S)) &= &b + c + e, 
    \end{array}
  \end{equation}
  where the equalities at the start of the first two lines hold because $S$ and $L$ correspond to $y$-tight $g$-cut constraints,
  the inequality in the third line follows from \cref{claim:singletonSets}~\ref{item:SintLSingleton},
  and the inequalities at the start of the last three lines hold by \cref{lem:yPlusGHigh}.

  By subtracting the first relation above from the sum of the forth and fifth one, we get
  \begin{equation}\label{eq:aLarge}
    a \geq \frac{k}{2} - 2.
  \end{equation}

  Analogously, by subtracting the second relation in~\eqref{eq:abcdefRel} from the sum of the forth and sixth one, we get
  \begin{equation}\label{eq:bLarge}
    b \geq \frac{k}{2} - 2.
  \end{equation}

  Let $Q_1\in \{S\setminus L, L \setminus S\}$ be the set as described in \cref{claim:singletonSets}~\ref{item:oneOfSymDiffSingleton}.
  We will show that a ghost value augmentation could have been performed between $Q_1$ and $S\cap L$.
  Note that these two sets correspond to singletons, and~\eqref{eq:aLarge}/\eqref{eq:bLarge} show that the (parallel) edges between them fulfill the lower bound condition necessary to apply a ghost value augmentation.
  It remains to show that, if $Q_1 = S\setminus L$, we have $a < \frac{k}{2}$, and analogously, if $Q_1 = L\setminus S$, then $b< \frac{k}{2}$.
  Then all conditions are fulfilled to apply a ghost value augmentation, which is the desired contradiction.

  The proof for the two cases is identical; hence, we assume from now on $Q_1 = S\setminus L$.
  We have
  \begin{align*}
    a = (y+g)(E(S\setminus L, S\cap L))
      = \frac{1}{2}\left[ (y+g)(\delta(S\cap L)) + (y+g)(\delta(S\setminus L)) - (y+g)(\delta(S))\right]
      \leq \frac{k}{2},
  \end{align*}
  where the inequality follows from the first and third relation in~\eqref{eq:abcdefRel}, and from $(y+g)(\delta(S\setminus L)) \leq k$, which holds by \cref{claim:singletonSets}~\ref{item:oneOfSymDiffSingleton} and the assumption $Q_1 = S\setminus L$.
  It remains to show that $a\neq \frac{k}{2}$.
  Assume for the sake of deriving a contradiction that $a = \frac{k}{2}$.
  We therefore must have $(y+g)(\delta(S\cap L)) = k$ and $(y+g)(\delta(S\setminus L)) =k$, which implies by \cref{claim:singletonSets}~\ref{item:oneOfSymDiffSingleton} in particular $(y+g)(\delta(S\cup L))=k$, $(y+g)(\delta(L\setminus S))=k$, and $E(S\cap L, V\setminus (S\cup L)) = \emptyset$.
  The contradiction we derive will be that the equation $x(\delta(S\cap L)) = k - g(\delta(S\cap L))$ is implied by \LPA, which violates \cref{claim:singletonSets}~\ref{item:SintLSingleton}.
  Note that because $E(S\cap L, V\setminus (S\cup L)) = \emptyset$, we get
  \begin{equation}\label{eq:descSIntL}
    \chi^{\delta(S\cap L)} = \chi^{E(S\setminus L, S\cap L)} + \chi^{E(L\setminus S, S\cap L)}.
  \end{equation}
  Moreover, $a=\frac{k}{2}\in \mathbb{Z}$ (recall we are assuming $k$ is even) implies $y(e)\in \mathbb{Z}$ for $e\in E(S\setminus L, S\cap L)$ because of \cref{lem:atMostOneFracInEuv}.
  Hence, the $y$-values on the edge $E(S\setminus L, S\cap L)$ are fixed by the integrality constraints, which are part of the reference system.
  Moreover, we also have $b= (x+y)(\delta(S\cap L)) - a = \frac{k}{2}$, where the first equality holds because $E(S\cap L, V\setminus (S\cup L))=\emptyset$.
  Hence, if also the $g$-cut constraint corresponding to $L\setminus S$ got dropped from \LPA, then we have analogously $y(e)\in \mathbb{Z}$ for $e\in E(L\setminus S, S\cap L)$.
  However, then~\eqref{eq:descSIntL} implies that $x(\delta(S\cap L)) = k - g(\delta(S\cap L))$ is implied by \LPA, which is a contradiction.
  Thus, it remains to consider the case where the $g$-cut constraint corresponding to $L\setminus S$ is still part of \LPA.
  Note that
  \begin{equation*}
    \chi^{E(L\setminus S, S\cap L)} = \frac{1}{2} \left[ \chi^{\delta(L\setminus S)} + \chi^{\delta(S)} - \chi^{\delta(S\cup L)} \right].
  \end{equation*}
  Observe that $\chi^{\delta(L\setminus S)}$, $\chi^{\delta(S\cup L)}$, and $\chi^{\delta(S)}$ are all row vectors of our reference system.
  Hence, also the row vector $\chi^{E(L\setminus S, S\cap L)}$ is implied by our reference system.
  This in turn implies by~\eqref{eq:descSIntL} that the equation $x(\delta(S\cap L)) = k - g(\delta(S\cap L))$ is implied by our reference system, which leads to the desired contradiction.
  \hfill\qedhere~(\cref{lem:lamTightSys})
\end{proof}

For completeness, we now present a classic reasoning, adjusted to our context, showing that the sparsity of \LPA at any iteration of the algorithm can be bounded by the number of linearly independent $y$-tight $g$-cut constraints in a full column rank equation system of tight LP constraints.
We state the result for an equation system defining the \LPA vertex $y$ with an arbitrary family $\mathcal{L}$ corresponding to $y$-tight $g$-cut constraints.
We later apply the statement with an equation system of $y$-tight constraints coming from \cref{lem:lamTightSys}, where the family $\mathcal{L}$ is laminar.
\begin{lemma}\label{lem:sparsityFromLamFamily}
  Consider an iteration of \cref{alg:main} where $y$ is not yet integral, and no ghost value augmentation can be performed.
  Consider a full column rank system of equations corresponding to $y$-tight constraints of \LPA, and let $\mathcal{L}\subseteq 2^{V\setminus \{r\}}$ be all cuts of $y$-tight $g$-cut constraints that correspond to an equation in the equation system.
  Then $|\fracpart(y)|\leq |\mathcal{L}|$.
\end{lemma}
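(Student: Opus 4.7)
The plan is to invoke the standard sparsity argument from iterative rounding, adapted to our setting where tight constraints come in exactly two flavors: cut constraints for sets in $\mathcal{L}$ and single-edge constraints $x_e = p_e$ for $p_e \in \mathbb{Z}$. Since the given system has full column rank, its row space has dimension $|E|$, so we may extract a basis $B$ consisting of $|E|$ linearly independent row vectors. Each vector in $B$ is either the characteristic vector $\chi^{\delta(L)}$ of the cut of some $L \in \mathcal{L}$, or the unit vector $\chi^{\{e\}}$ coming from a $y$-tight single-edge constraint $x_e = p_e$.

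First I would observe the key integrality dichotomy: any $y$-tight single-edge constraint $x_e = p_e$ with $p_e \in \mathbb{Z}$ forces $y_e = p_e \in \mathbb{Z}$, hence $e \notin \fracpart(y)$. This is the only place where the integrality of the right-hand side matters, and it is guaranteed because the only single-edge constraints \LPA ever contains are the integer bounds $x \in [\lfloor y \rfloor, \lceil y \rceil]$ and the equality constraints $x_e = y_e$ added in \cref{algline:addEqualityConstr} for integral coordinates.

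Next, let $\mathcal{L}' \subseteq \mathcal{L}$ be the cuts whose cut constraints appear in $B$, and let $E' \subseteq E$ be the edges whose single-edge constraints appear in $B$. By linear independence and the dimension count, $|\mathcal{L}'| + |E'| = |E|$. By the observation above, $E' \subseteq E \setminus \fracpart(y)$, so $|E'| \leq |E| - |\fracpart(y)|$. Rearranging yields
\begin{equation*}
  |\mathcal{L}| \;\geq\; |\mathcal{L}'| \;=\; |E| - |E'| \;\geq\; |E| - \bigl(|E| - |\fracpart(y)|\bigr) \;=\; |\fracpart(y)|,
\end{equation*}
which is the claim.

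The only thing that could go wrong is the dichotomy step (existence of only these two types of $y$-tight constraints in \LPA), but this is immediate from the definition of \LPA together with how the algorithm maintains it. Thus there is no serious obstacle; the statement is a routine consequence of full column rank once the shape of the constraint system is recognized.
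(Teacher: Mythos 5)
Your proposal is correct and follows essentially the same argument as the paper: both reduce to a square (basis) subsystem of the full column rank system, note that every $y$-tight single-edge constraint has an integral right-hand side and hence pins an edge outside $\fracpart(y)$, and then conclude $|\fracpart(y)| \leq |E| - |E'| \leq |\mathcal{L}|$ by counting rows against columns. No substantive difference from the paper's proof.
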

\begin{proof}
  First, we can assume that the considered equation system, for simplicity we call it the \emph{reference system}, is a square system.
  Indeed, if the reference system is not square, then we can successively remove equations from the system that are implied by the other equations of the system until we get a square system.
Moreover, the implication of the statement for the square system implies the statement for the original system.

  The square reference system has two types of equations:
  \begin{itemize}[nosep,rightmargin=2mm]
    \item equations corresponding to $y$-tight $g$-cut constraints, i.e., $x(\delta(S)) = k - g(\delta(S)) \;\forall S\in \mathcal{L}$, and
    \item equations corresponding to $y$-tight constraints on single edges, i.e., these are of the form $x_e = p_e$, where $e\in E$ and $p_e\in \mathbb{Z}_{\geq 1}$.
  \end{itemize}
  Let $F\subseteq E$ be all edges for which an equation of the second type is in the system.
  Because the reference system is square, we have $|E| = |\mathcal{L}| + |F|$.
  Moreover, all edges in $F$ have integral $y$-values, which implies the result because of
  \begin{equation*}
    |\fracpart(y)| \leq |E| - |F| = |\mathcal{L}|. \qedhere
  \end{equation*} 
\end{proof}

Finally, the following lemma shows that we make progress in each iteration of \cref{alg:main}.
\begin{lemma}\label{lem:progress}
  At any iteration of \cref{alg:main}, if $y$ is not integral and no ghost value augmentation can be applied (i.e., the algorithm is at an iteration where it reaches \cref{algline:ifSetRelax}), then there is a $g$-cut constraint of \LPA that can be relaxed.
\end{lemma}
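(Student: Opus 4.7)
The plan is to take the relax-candidate $S$ to be a leaf (i.e., an inclusion-minimal element) of a maximal laminar family of $y$-tight cut constraints, and then to verify both conditions of \cref{algline:ifSetRelax} for it.

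Since the algorithm reached \cref{algline:ifSetRelax}, no ghost value augmentation applies, so \cref{lem:lamTightSys} yields a maximal laminar family $\mathcal{L}\subseteq 2^{V\setminus\{r\}}$ of $y$-tight $g$-cut constraints in \LPA such that these cut equations, together with all $y$-tight single-edge constraints in \LPA, form a full column rank system with $y$ as its unique solution. Because $y\notin\mathbb{Z}^E$, \cref{lem:sparsityFromLamFamily} gives $|\fracpart(y)|\leq|\mathcal{L}|$, so $\mathcal{L}$ is nonempty and has at least one leaf. For any such leaf $L$, condition~(i) of \cref{algline:ifSetRelax} follows by an easy case analysis: if some $T\subsetneq L$ corresponded to a $y$-tight $g$-cut constraint still in \LPA, then either $T\in\mathcal{L}$ (contradicting that $L$ is a leaf), or by maximality of $\mathcal{L}$ the family $\mathcal{L}\cup\{T\}$ is not laminar, forcing $T$ to cross some $L^\ast\in\mathcal{L}$; laminarity of $\mathcal{L}$ leaves only the options $L^\ast\subsetneq L$ (contradicting leaf-ness of $L$), $L\subseteq L^\ast$ (forcing $T\subseteq L^\ast$ and contradicting that $T$ crosses $L^\ast$), and $L^\ast\cap L=\emptyset$ (forcing $T\cap L^\ast=\emptyset$, again contradicting crossing).

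For condition~(ii) I will run a Jain-style token argument. Give each fractional edge two tokens, one per endpoint, and route each token to the smallest set of $\mathcal{L}$ containing that endpoint (discarding the token if no such set exists). The total number of assigned tokens is at most $2|\fracpart(y)|\leq 2|\mathcal{L}|$. The crucial claim is that every $L\in\mathcal{L}$ collects at least two tokens; otherwise the row $\chi^{\delta(L)}$ in the full-rank system of \cref{lem:lamTightSys} would be a rational combination of rows for $y$-tight single-edge constraints on edges in $\delta(L)$ together with the rows $\chi^{\delta(L')}$ for the children $L'\in\mathcal{L}$ of $L$, contradicting full column rank. Summing over $\mathcal{L}$ then forces equality throughout: every $L\in\mathcal{L}$ receives exactly two tokens. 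For a leaf $L$ this quantity equals $2\cdot|\fracpart(y)\cap E[L]|+|\fracpart(y)\cap\delta(L)|$, and eliminating the degenerate subcase $|\fracpart(y)\cap\delta(L)|=0$ (which would again make $L$'s cut equation redundant) yields $|\fracpart(y)\cap\delta(L)|=2\leq 3$, which is condition~(ii).

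The main obstacle will be the ``at least two tokens'' step, particularly for internal (non-leaf) sets $L\in\mathcal{L}$: one has to argue carefully that without enough fractional endpoints in the gap between $L$ and its children in $\mathcal{L}$, the vector $\chi^{\delta(L)}$ must lie in the span of the child cut indicators together with the indicators of already-fixed single edges. This is the technical heart of the Jain-style argument, and it crucially exploits both the laminar tree structure of $\mathcal{L}$ and the fact that every $y$-integral edge currently carries an equality constraint in \LPA.
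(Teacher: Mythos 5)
Your verification of condition~(i) for a leaf of the maximal laminar family is correct and is a nice direct observation. However, the heart of the proposal---condition~(ii)---rests on a claim that is false as stated. Under your static token assignment (each endpoint's token goes to the smallest set of $\mathcal{L}$ containing it), a non-leaf set $L$ may receive no tokens at all, and this does not force $\chi^{\delta(L)}$ into the span of its children's cut rows and the integral single-edge rows: if, say, $L$ has two children $C_1,C_2$ with $L=C_1\cup C_2$ and the only nearby fractional edges lie in $E(C_1,C_2)$, then every token of those edges is routed to $C_1$, $C_2$, or deeper sets, so $L$ gets zero tokens, yet $\chi^{\delta(L)}=\chi^{\delta(C_1)}+\chi^{\delta(C_2)}-2\chi^{E(C_1,C_2)}$ is in general \emph{not} implied by the child rows together with integral-edge rows, precisely because of the fractional edges between the children. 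Moreover, even if you could derive such a dependence, it would not contradict \cref{lem:lamTightSys}: that lemma guarantees only full column rank (the rows span $\mathbb{R}^E$), not linear independence of the rows, and since you never prune $\mathcal{L}$, redundant cut rows are entirely possible. Hence the step ``every $L\in\mathcal{L}$ collects at least two tokens,'' from which you deduce the exact count of two at a leaf and thus $|\fracpart(y)\cap\delta(L)|\le 3$, does not follow; the obstacle you flag at the end is not something that careful bookkeeping can fix within a one-shot static assignment.

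The paper's proof repairs exactly these two points and uses more machinery. It first removes redundant cut constraints from $\mathcal{L}$, so that a linear dependence among the remaining rows is genuinely contradictory; it then argues by contradiction (assuming no cut can be relaxed) with a bottom-up \emph{redistribution} of tokens, maintaining the invariant that each processed set keeps two tokens while the root of each processed subtree holds an excess of at least four, which is what lets a parent with two or more children be paid by its children even when the ``gap'' contains no fractional endpoints. For a minimal set with at most three fractional endpoints it invokes the no-relaxation assumption to exhibit a minimal $y$-tight cut strictly inside it that would itself satisfy both relaxation conditions (contradiction), and for a set with a single child $C$ it uses integrality of $y(\delta(L))$ and $y(\delta(C))$ (ghost values are integral) to show $\delta(L\setminus C)$ contains at least two fractional edges, supplying the missing tokens. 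These ingredients---pruning, upward flow of excess, the minimal-set contradiction, and the integrality argument for single-child sets---are what your sketch lacks, and without them the counting $|\fracpart(y)|\le|\mathcal{L}|$ from \cref{lem:sparsityFromLamFamily} cannot be converted into the sparsity bound needed at a leaf.
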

\begin{proof}
  Let $\mathcal{L} \subseteq \{S\subseteq V\setminus \{r\}\colon x(\delta(S)) = k - g(\delta(S))\}$ be a maximal laminar family of sets corresponding to $y$-tight $g$-cut constraints of \LPA.
  By \cref{lem:lamTightSys}, these constraints, together with all $y$-tight constraints of \LPA on single edges, correspond to a full column rank equation system with $y$ being its unique solution.
  We successively remove from $\mathcal{L}$ constraints that are redundant in that system, until no $g$-cut constraint corresponding to a set in $\mathcal{L}$ is redundant.

  Assume for the sake of deriving a contradiction that no cut relaxation can be applied.
  We will derive a contradiction by showing that this would imply $|\mathcal{L}| > |\fracpart(y)|$, which contradicts \cref{lem:sparsityFromLamFamily}.
  To this end we use a token counting argument, where we assign two tokens to each edge in $\fracpart(y)$, and assign those tokens to the sets $\mathcal{L}$ such that each set in $\mathcal{L}$ gets at least $2$ tokens, and at least one set gets strictly more than $2$ tokens.

  First, each edge $\{u,v\} \in \fracpart(y)$ assigns one token to the smallest set in $\mathcal{L}$ that contains $u$ (if there is such a set) and one to the smallest set in $\mathcal{L}$ that contains $v$ (if there is such a set).
  We then consider the sets in $\mathcal{L}$ in any smallest-to-largest order, and reassign excess tokens from children to their parent.\footnote{We use the usual notions like \emph{children}, \emph{parents}, and \emph{descendants} for the laminar family $\mathcal{L}$.
  More precisely, for $L_1, L_2\in \mathcal{L}$ with $L_2\subsetneq L_1$, we call $L_1$ an \emph{ancestor} of $L_2$, and $L_2$ is a \emph{descendant} of $L_1$.
  If $L_2$ is a descendant of $L_1$ and there is no set $L\in \mathcal{L}$ with $L_2 \subsetneq L \subsetneq L_1$, then $L_2$ is a \emph{child} of $L_1$, and $L_1$ is called the \emph{parent} of $L_2$.%
}
  With a \emph{smallest-to-largest} order, we mean any order such that when considering $L\in \mathcal{L}$, then all descendants of $L$ in $\mathcal{L}$ have already been considered.
  We maintain the following invariant: After considering a set $L\in \mathcal{L}$, we have reassigned the tokens of $L$ and its descendants in a way that $L$ has at least $4$ tokens, and each of its descendants has $2$ tokens.
  By showing that this invariant can be maintained, the results follows because at the end of the procedure, the maximal sets in $\mathcal{L}$ will have obtained at least $4$ tokens, and all other sets in $\mathcal{L}$ obtained $2$ tokens.

  We start by showing that the invariant holds for each minimal set $L$ in $\mathcal{L}$.
  If the edges $\fracpart(y)$ have at least $4$ endpoints in $L$, then the invariant holds.
  Otherwise, we have $|\fracpart(y) \cap E(L,V)|\leq 3$, and because we assumed that we cannot apply a cut relaxation to $L$, there must be a smaller set $S\subseteq L$ corresponding to a $y$-tight $g$-cut constraint.
  We choose $S$ to be a minimal such set.
  However, because $\delta(S) \subseteq E(L,V)$, we have $|\fracpart(y) \cap \delta(S)| \leq  |\fracpart(y)\cap E(L,V)| \leq 3$, i.e., the set $\delta(S)$ contains at most $3$ edges with fractional $y$-values.
  This implies that we could have applied a cut relaxation to $S$, which contradicts our assumption that no cut relaxation was possible.

  Consider now a non-minimal set $L$ in $\mathcal{L}$, and assume that the invariant holds for all of its children.
  If $L$ has at least two children in $\mathcal{L}$, then it can get $2$ tokens from each of them, and the invariant holds for $L$.
  Hence, assume that $L$ has only one child $C\in \mathcal{L}$ in $\mathcal{L}$.
  In this case, $L$ can get two tokens from $C$, which has $4$ tokens.
  We complete the proof by showing that there are at least two edges of $\fracpart(y)$ with one endpoint in $L\setminus C$, which will give an additional $2$ tokens to $L$, to obtain the $4$ tokens required by the invariant.
  First observe that we must have
  \begin{equation*}
    \fracpart(y)\cap \delta(C) \neq \fracpart(y)\cap \delta(L);
  \end{equation*}
  since otherwise the $y$-tight $g$-cut constraint corresponding to $L$ is implied by the $y$-tight $g$-cut constraint that corresponds to $C$ and the $y$-tight equality constraints on single edges.
  This implies
  \begin{equation*}
    \fracpart(y) \cap \delta(L\setminus C) \neq \emptyset,
  \end{equation*}
  and already shows that $L$ obtains at least one more token due to a fractional edge with an endpoint in $L\setminus C$.
  Because both $L$ and $C$ correspond to $y$-tight $g$-cut constraints, we have
  \begin{align}
    y(\delta(L)) &= k - g(\delta(L)), \text{ and}\\
    y(\delta(C)) &= k - g(\delta(C)).
  \end{align}
  Due to integrality of the ghost values $g$, this implies $y(\delta(L))\in \mathbb{Z}$ and $y(\delta(C)) \in \mathbb{Z}$.
  Thus,
  \begin{equation}\label{eq:diffDeltaCDeltaLInt}
    y(\delta(L)) - y(\delta(C)) = y(E(L\setminus C, V\setminus L)) - y(E(L\setminus C, C)) \in \mathbb{Z}.
  \end{equation}
  Note that
  \begin{equation*}
    \delta(L\setminus C) = E(L\setminus C, V\setminus L) \cup E(L\setminus C, C),
  \end{equation*}
  and~\eqref{eq:diffDeltaCDeltaLInt} thus implies that $\delta(L\setminus C)$ cannot have a single edge with fractional $y$-value.
  Because $\fracpart(y)\cap \delta(L\setminus C)\neq \emptyset$, we have $|\fracpart(y)\cap \delta(L\setminus C)| \geq 2$, showing as desired that $L$ gets at least two tokens from edges of $\fracpart(y)$ with one endpoint in $L\setminus C$.
\end{proof}

Note that \cref{lem:progress} readily implies that \cref{alg:main} terminates.
Indeed, the number of ghost value augmentations one can perform on an edge $e$ is no more than $\lceil \sfrac{k}{4} \rceil$, because then just the ghost values alone already provide a load of at least $\sfrac{k}{2}$ on $e$, and $e$ therefore does not qualify anymore for a ghost value augmentation.
Moreover, the sets we relax correspond to the laminar family $\mathcal{R}$, which can have size at most $O(|\overline{V}|)$.

However, bounding the number of ghost value augmentations per edge by $\lceil \sfrac{k}{4} \rceil$ turns out to be very loose.
Actually, as we will show next, we can perform at most one ghost value augmentation per edge, which is a result that is also helpful later on.
This holds because whenever a ghost value augmentation is performed on an edge with endpoints $u$ and $v$, then a large $y$-value on $E(u,v)$ is integral and, because we fix/freeze integral values, we will have a load of at least $\lfloor y \rfloor (E(u,v))$ between the vertices $u$ and $v$ in any future iteration where $u$ and $v$ did not yet get contracted through a cut relaxation.
This load will be too high for another ghost value augmentation to be performed on an edge with endpoints $u$ and $v$.

The following lemma implies in particular that at most one ghost value augmentation can be applied per edge.
\begin{lemma}\label{lem:noPosGhostValInEuv}
  At any iteration of the algorithm when a ghost value augmentation is performed between two vertices $u$ and $v$, then no edge in $E(u,v)$ has strictly positive ghost value, i.e., $g_f=0$ for $f\in E(u,v)$.
\end{lemma}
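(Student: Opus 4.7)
The plan is to argue by induction on iterations of the main while loop, choosing the earliest counterexample. Suppose for contradiction that at some iteration we are about to perform a ghost value augmentation between vertices $u$ and $v$ with some $f\in E(u,v)$ satisfying $g_f > 0$; take the first such iteration. Since ghost values grow only in increments of $2$, we have $g_f \geq 2$, and this value was created at some earlier iteration where the algorithm applied a ghost value augmentation to $f$ itself, with $f \in E_{\text{pre}}(\tilde u,\tilde v)$ for some vertices $\tilde u, \tilde v$ at that point. The minimality of our counterexample rules out any positive ghost values on $E_{\text{pre}}(\tilde u,\tilde v)$ at the start of this earlier iteration; in particular $g_{\text{pre}}(E_{\text{pre}}(\tilde u,\tilde v)) = 0$. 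Moreover, because $f$ is still present and lies in $E(u,v)$, no intervening contraction can have merged $\tilde u$ and $\tilde v$, so at the current iteration $\tilde u \subseteq u$ and $\tilde v \subseteq v$.

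The key quantitative step is a sparsity/integrality observation at the prior iteration. Since $g_{\text{pre}}(E_{\text{pre}}(\tilde u,\tilde v)) = 0$, the triggering condition for the augmentation on $f$ simplifies to $y_{\text{pre}}(E_{\text{pre}}(\tilde u,\tilde v)) \in [\tfrac{k}{2} - 2,\, \tfrac{k}{2})$. By \cref{lem:atMostOneFracInEuv} applied to $\tilde u,\tilde v$, at most one edge of $E_{\text{pre}}(\tilde u,\tilde v)$ is $y_{\text{pre}}$-fractional, so the sum $N$ of $y_{\text{pre}}$-values over the integer-valued edges of $E_{\text{pre}}(\tilde u,\tilde v)$ is an integer with $N > \tfrac{k}{2} - 3$. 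Since $k$ is even this forces $N \geq \tfrac{k}{2} - 2$.

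To finish, I propagate these integer edges to the current iteration. Each such edge becomes frozen via the equality constraint $x_e = y_e$ added in \cref{algline:addEqualityConstr} at the end of the iteration in which it first becomes integral, so its $y$-value is preserved in every subsequent iteration; it has $y_e \geq 1$ and hence cannot be deleted, and since $\tilde u$ and $\tilde v$ stay in distinct current vertices it is not removed by contraction either. Consequently, all these edges still lie in $E(u,v)$ at the current iteration, contributing at least $N$ to $y(E(u,v))$, while $f$ alone contributes at least $g_f \geq 2$ to $g(E(u,v))$. Combining gives
\[
(y+g)(E(u,v)) \;\geq\; N + g_f \;\geq\; \tfrac{k}{2} - 2 + 2 \;=\; \tfrac{k}{2},
\]
contradicting the condition $(y+g)(E(u,v)) < \tfrac{k}{2}$ required to perform a ghost value augmentation between $u$ and $v$. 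The main subtlety I anticipate is ensuring that the LP re-solving, edge deletions, and set contractions carried out between the two iterations genuinely preserve the integer edges of $E_{\text{pre}}(\tilde u,\tilde v)$ together with their values; the freezing rule in \cref{algline:addEqualityConstr} and the observation that $\tilde u$ and $\tilde v$ never get merged together are exactly what makes this work.
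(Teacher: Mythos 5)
Your proof is correct and follows essentially the same route as the paper: trace back to the prior iteration when the augmentation was applied to $f$, use \cref{lem:atMostOneFracInEuv} together with evenness of $k$ to get an integral load of at least $\frac{k}{2}-2$ on the parallel class, observe that these integral values are frozen and survive to the current iteration, and add the $+2$ from the augmentation to exceed the $\frac{k}{2}$ threshold. The only difference is your minimal-counterexample wrapper forcing the prior ghost values on $E_{\mathrm{pre}}(\tilde u,\tilde v)$ to be zero, which the paper renders unnecessary by bounding $\lfloor \widetilde{y}+\widetilde{g}\rfloor(F)$ directly (ghost values are integral and non-decreasing), but this is an organizational rather than substantive difference.
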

\begin{proof}
Consider an iteration of the algorithm, with solution $y$ to \LPA and ghost values $g$, and let $e\in E$ be an edge with endpoints $u$ and $v$ and strictly positive ghost value, i.e., $g_e > 0$.
  We show the statement by showing that no ghost value augmentation can be applied in the current iteration to any edge in $E(u,v)$.
Consider a prior iteration when a ghost value augmentation was applied to $e$.
  (Such an iteration must exist because $g_e > 0$.)
  Let $F$ be the set of edges parallel to $e$ at the moment when this ghost value augmentation was applied to $e$, and let $\widetilde{y}$ and $\widetilde{g}$ be the \LPA solution and ghost values, respectively, at that iteration.
  Thus, $\widetilde{g}$ are the ghost values right before the ghost value augmentation to $e$ happened.
Hence, $F \subseteq E(u,v)$, and
  \begin{equation*}
    (\widetilde{y}+\widetilde{g})(F) \geq \frac{k}{2} -2.
  \end{equation*}
  Because ghost values are always integral, $k$ is even by assumption, and, by \cref{lem:atMostOneFracInEuv}, $\widetilde{y}$ has at most one fractional value, we have
  \begin{equation*}
    \lfloor \widetilde{y} + \widetilde{g} \rfloor (F) \geq \left\lfloor \frac{k}{2} -2 \right\rfloor = \frac{k}{2} - 2.
  \end{equation*}
  Moreover, as the ghost value of $e$ gets augmented by two units, and because integral $\widetilde{y}$-values are fixed, we have
  \begin{equation*}
    (y+g)(F) \geq \lfloor \widetilde{y}+\widetilde{g}\rfloor(F) + 2 \geq \frac{k}{2},
  \end{equation*}
  showing as desired that no ghost value augmentation can be applied to any edge in $E(u,v)$ at the current iteration.
\end{proof}

Using the above lemmas, we can now provide a better bound on the number of iterations of \cref{alg:main} than the $\lceil \frac{k}{4} \rceil \cdot |E| + O(|\overline{V}|)$ mentioned above.
This lemma will also be useful again later on.
\begin{lemma}\label{lem:mainAlgIterBound}
  \cref{alg:main} terminates within $O(|\overline{V}|)$ iterations.
\end{lemma}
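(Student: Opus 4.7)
The plan is to bound the total number of while-loop iterations of \cref{alg:main} by $O(|\overline{V}|)$. By \cref{lem:progress}, every non-terminating iteration performs exactly one of two operations: a ghost value augmentation or a drop/contract. I would bound the iterations of each type separately.

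For the drop/contract iterations, I would show that the dropped sets, interpreted as subsets of $\overline{V}$ after undoing all prior contractions, form a laminar family. Once a set $S$ of current super-vertices is contracted into a single super-vertex, every later dropped set, expressed in terms of $\overline{V}$, is either a superset of $S$ or disjoint from $S$; hence no two dropped sets cross. Since a laminar family over $\overline{V}$ has size at most $2|\overline{V}|-1$, there are at most $O(|\overline{V}|)$ drop/contract iterations. This family is essentially the laminar family $\mathcal{R}$ together with the singleton drops, and both contributions are subsumed by the laminar bound.

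For the ghost augmentation iterations, I would leverage \cref{lem:noPosGhostValInEuv}, which implies that once a pair of super-vertices $(u,v)$ is augmented, no subsequent augmentation can involve a disjoint pair $(u',v')$ with $u \subseteq u'$ and $v \subseteq v'$, because the edge with $g \geq 2$ persists in $E(u',v')$. To turn this blocking relation into a linear bound, I would charge each augmentation to a distinct super-vertex, or to a later contraction that absorbs the augmented edge into its interior, using the laminar structure of the contraction family $\mathcal{R}$. A convenient formalization is to maintain a forest whose nodes are super-vertices that ever exist and whose edges record augmentations; the blocking condition from \cref{lem:noPosGhostValInEuv}, together with the fact that each contraction merges a collection of super-vertices into one node of the forest, should show that this forest has only $O(|\overline{V}|)$ edges throughout the entire run.

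The main obstacle will be this second bound. A naive application of \cref{lem:noPosGhostValInEuv} only yields at most one augmentation per edge, i.e.\ $O(|\overline{E}|)$ augmentations, which is too weak. Obtaining $O(|\overline{V}|)$ requires interplaying the blocking structure of augmented pairs with the laminar family of contracted sets, for instance by charging each augmentation either to one of the $|\overline{V}|$ initial singleton super-vertices or to an element of $\mathcal{R}$ that later absorbs the augmented edge. Once this amortization is carried out, summing the two contributions gives the claimed $O(|\overline{V}|)$ bound on the total number of iterations.
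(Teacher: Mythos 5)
Your bound on the drop/contract iterations is fine and matches the paper: the relaxed sets (after undoing contractions) form a laminar family over $\overline{V}$, so there are $O(|\overline{V}|)$ of them. The gap is in the ghost-augmentation count, and it is exactly the step you flag as ``the main obstacle.'' The missing idea is not a cleverer charging scheme but a sparsity observation about the \emph{initialization}: in \cref{algline:solveLPInit} no constraint has been dropped yet, so the vertex solution $y$ to~\ref{lp:ghost} can be handled by standard combinatorial uncrossing, giving a square full-rank system indexed by a laminar family and hence $|\supp(y)| = O(|\overline{V}|)$. All edges with $y_e=0$ are then deleted, and no edge is ever created later (contractions only remove edges). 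Consequently the ``naive'' consequence of \cref{lem:noPosGhostValInEuv} --- at most one augmentation per edge --- is not an $O(|\overline{E}|)$ bound but already an $O(|\overline{V}|)$ bound, which together with \cref{lem:progress} finishes the proof. This is the paper's argument.

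Your proposed amortization, as stated, would not close the gap on its own. \cref{lem:noPosGhostValInEuv} only blocks a \emph{repeat} augmentation between a pair of current super-vertices once some edge in their bundle carries positive ghost value; it says nothing about augmentations between distinct pairs. Without invoking sparsity of the support, the blocked-pairs structure alone permits marking a fresh pair each time, i.e.\ up to a quadratic number of augmentations among the original singletons with no contraction ever absorbing them, so a charging scheme that assigns each augmentation to a singleton of $\overline{V}$ or to a set of $\mathcal{R}$ cannot be justified purely from the blocking relation and laminarity of $\mathcal{R}$. Any correct completion has to bring in the LP structure, and the simplest way to do so is the support bound above.
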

\begin{proof}
  By \cref{lem:noPosGhostValInEuv}, at most one ghost value augmentation can be performed per edge.
  Hence, to bound the number of iterations, it suffices to bound the number of edges contained in $G=(V,E)$ during the first iteration.
  Note that after~\cref{algline:solveLPInit} of \cref{alg:main}, we have $|\supp(y)| = |E| \leq |\overline{V}|$, because vertex solutions to~\ref{lp:ghost} are sparse, i.e., $|\supp(y)| = O(|\overline{V}|)$.
  This follows from standard combinatorial uncrossing techniques, which imply that the vector $y$ computed during the initialization of the algorithm is the unique solution to a square full rank system corresponding to tight cut constraints for cut sets in a laminar family $\mathcal{L}$.
  Hence, because the system is square, we have $|\supp(y)| \leq |\mathcal{L}|$.
  Finally, a laminar family $\mathcal{L}$ over $\overline{V}$ can have size at most $O(|\overline{V}|)$.
  Thus, the number of ghost value augmentations is bounded by $O(|\overline{V}|)$.

  Moreover, the number of cut relaxations is bounded by $|\mathcal{R}| = O(|\overline{V}|)$, because $\mathcal{R}$ is a laminar family over $\overline{V}$.

  The result now follows because, by \cref{lem:progress}, each iteration of the algorithm either performs a ghost value augmentation, a cut relaxation, or we have that $y$ is integral in which case the algorithm terminates.
\end{proof}

\subsection{Guarantees on Cut Constraints}

Our goal now is to show that the solution $z$ returned by \cref{alg:main} satisfies $z(\delta(S)) \geq k-9$ for all $S\subseteq \overline{V}\setminus \{r\}$ with $S\neq \emptyset$ (again, recall we are assuming that $k$ is even; hence the $9$ instead of $10$).
To achieve guarantees on $z$, we crucially and repeatedly exploit that integral $y$-values get frozen/fixed.
This guarantees that at any iteration of \cref{alg:main}, the solution $y$ to \LPA provides the following lower bound on entries of $z$, i.e., the solution returned by the algorithm.
\begin{equation*}
  \lfloor y \rfloor_{e} \leq z_e \qquad \forall e\in E.
\end{equation*}
We recall that $z$ in the above inequality is the final solution returned by our algorithm (not an intermediate value of $z$).

We therefore start by lower bounding the $\lfloor y \rfloor$-values on different edge sets, starting with edges parallel to an edge to which a ghost value augmentation was applied.
\begin{lemma}\label{lem:largeIntYAtGhostAug}
  At any iteration of \cref{alg:main} where we apply a ghost value augmentation, say on an edge $e$ with endpoints $u$ and $v$, we have $\lfloor y \rfloor (E(u,v)) \geq \frac{k}{2}-2$.
\end{lemma}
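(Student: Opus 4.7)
The plan is to chain together three facts already established in the paper: the trigger condition for a ghost value augmentation, the absence of prior ghost values on parallel edges (\cref{lem:noPosGhostValInEuv}), and the at-most-one fractional parallel edge property (\cref{lem:atMostOneFracInEuv}), then close with an integrality/parity argument that uses the assumption $k$ is even.

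First, since the algorithm performs a ghost value augmentation on edge $e$ with endpoints $u,v$, the \textbf{if} condition in \cref{algline:ghostValueAugm} gives
\[
  (y+g)(E(u,v)) \geq \tfrac{k}{2} - 2.
\]
Next, I would invoke \cref{lem:noPosGhostValInEuv} applied to the current iteration: because a ghost value augmentation is being applied to an edge between $u$ and $v$, no edge in $E(u,v)$ can already carry positive ghost value, i.e.\ $g_f = 0$ for all $f \in E(u,v)$. Substituting this in,
\[
  y(E(u,v)) = (y+g)(E(u,v)) \geq \tfrac{k}{2} - 2.
\]

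For the final step, I would bound the difference between $y$ and $\lfloor y \rfloor$ on $E(u,v)$. By \cref{lem:atMostOneFracInEuv}, at most one edge in $E(u,v)$ has a fractional $y$-value, and that fractional part is strictly less than $1$. Hence
\[
  \lfloor y \rfloor(E(u,v)) > y(E(u,v)) - 1 \geq \tfrac{k}{2} - 3.
\]
Now $\lfloor y \rfloor(E(u,v))$ is an integer, and since $k$ is even (as assumed throughout the analysis), so is $\tfrac{k}{2}-2$. A strict integer inequality $\lfloor y \rfloor(E(u,v)) > \tfrac{k}{2} - 3$ therefore upgrades to $\lfloor y \rfloor(E(u,v)) \geq \tfrac{k}{2} - 2$, completing the proof.

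No step looks to be a real obstacle: the content is packaging three previously proven properties together. The only place where one has to be careful is the last jump, where one must remember that $k$ being even (so that $\tfrac{k}{2} - 2 \in \mathbb{Z}$) is what lets us round a strict inequality down to $\tfrac{k}{2} - 2$ rather than the weaker $\tfrac{k}{2} - 3$; without that parity assumption, we would only obtain $\lfloor y \rfloor(E(u,v)) \geq \tfrac{k}{2} - 3$, which is too weak for the downstream analysis.
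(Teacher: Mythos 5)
Your proposal is correct and follows essentially the same route as the paper's proof: invoke \cref{lem:noPosGhostValInEuv} to drop the ghost values on $E(u,v)$, use the augmentation trigger to get $y(E(u,v)) \geq \frac{k}{2}-2$, and then use \cref{lem:atMostOneFracInEuv} together with evenness of $k$ to round down to $\frac{k}{2}-2$. The only cosmetic difference is that the paper phrases the last step as $\lfloor y \rfloor(E(u,v)) \geq \lfloor \frac{k}{2}-2 \rfloor = \frac{k}{2}-2$, while you argue via a strict integer inequality; these are the same argument.
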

\begin{proof}
  By \cref{lem:noPosGhostValInEuv}, right before applying a ghost value augmentation on edge $e$ between vertices $u$ and $v$, we have $g_f=0$ for all $f\in E(u,v)$.
  Because a ghost value augmentation can now be applied to $E(u,v)$, we have $y(E(u,v)) = (y+g)(E(u,v)) \geq \frac{k}{2} -2$.
  Finally, as $y$ has at most one fractional value in $E(u,v)$ due to \cref{lem:atMostOneFracInEuv} and $k$ is even, we obtain as desired
  \begin{equation*}
    \lfloor y \rfloor (E(u,v)) \geq \left\lfloor \frac{k}{2} - 2 \right\rfloor = \frac{k}{2} - 2.\qedhere
  \end{equation*}
\end{proof}

The following lemma shows that $y$-values within a relaxed set are integral.
This shows in particular that our algorithm does return an integral vector $z$.
\begin{lemma}\label{lem:integralInContractedSet}
  Consider any iteration of the algorithm that will relax a cut $S\subseteq V\setminus \{r\}$.
  Then $y_e\in \mathbb{Z}$ for $e\in E[S]$.
\end{lemma}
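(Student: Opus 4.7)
The plan is to argue by contradiction using the full column rank characterization from \cref{lem:lamTightSys}. Suppose some $e^* \in E[S]$ has $y_{e^*} \notin \mathbb{Z}$. Since the iteration relaxes $S$, line~\ref{algline:ifSetRelax} fires, which means no ghost value augmentation was possible in this iteration (so \cref{lem:lamTightSys} applies) and $S$ itself corresponds to a $y$-tight $g$-cut constraint still in \LPA that satisfies the minimality condition~(i).

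Next, I would invoke \cref{lem:lamTightSys} with a maximal laminar family $\mathcal{L}$ of sets corresponding to $y$-tight $g$-cut constraints in \LPA chosen to contain $S$. Such an $\mathcal{L}$ exists: the family $\{S\}$ is laminar, and one can greedily add further $y$-tight $g$-cut constraints in \LPA while preserving laminarity until maximality. The lemma then gives a full column rank ``reference system'' consisting of the equations $x(\delta(L)) = k - g(\delta(L))$ for $L\in \mathcal{L}$ together with all $y$-tight single-edge equations $x_e=p_e$ of \LPA, with $y$ as its unique solution.

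I would now observe that the column of $y_{e^*}$ in the reference system is zero, which contradicts full column rank. For the $g$-cut equations: since $S\in \mathcal{L}$ and $\mathcal{L}$ is laminar, any $L\in \mathcal{L}$ is either disjoint from $S$, equal to $S$, strictly contained in $S$, or strictly contains $S$. Condition~(i) on $S$ rules out $L\subsetneq S$, since such $L$ would correspond to a $y$-tight $g$-cut constraint in \LPA strictly inside $S$. In each of the three remaining cases, both endpoints of $e^* \in E[S]$ lie simultaneously inside $L$ or simultaneously outside $L$, so $e^* \notin \delta(L)$, i.e., $e^*$ does not appear in any $g$-cut equation of the reference system. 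For the single-edge equations: since $y_{e^*} \notin \mathbb{Z}$, the box bounds $\lfloor y_{e^*}\rfloor \le x_{e^*} \le \lceil y_{e^*}\rceil$ are strict inequalities (hence not $y$-tight), and the algorithm's line~\ref{algline:addEqualityConstr} only adds equality constraints $x_e = y_e$ when $y_e \in \mathbb{Z}$, so no single-edge equation involves $e^*$ either. This yields the contradiction and completes the proof.

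The only nontrivial step is verifying that one may choose the maximal laminar family in \cref{lem:lamTightSys} to contain $S$; the remainder is bookkeeping built on the laminarity plus minimality of $S$. Note that this lemma in particular ensures that, at the moment of contraction, all edges of $E[S]$ that we freeze via $z_e = y_e$ in \cref{alg:main} are integral, justifying that the algorithm ultimately returns an integral vector $z$.
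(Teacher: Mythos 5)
Your proposal is correct and follows essentially the same route as the paper: invoke \cref{lem:lamTightSys} with a maximal laminar family chosen to contain $S$, use laminarity together with the minimality condition~(i) to show no $g$-cut equation of the reference system involves an edge of $E[S]$, note that a fractional $y_{e^*}$ cannot appear in any $y$-tight single-edge constraint, and conclude that a fractional edge inside $S$ would give a zero column, contradicting full column rank. Your explicit justification that the maximal laminar family can be taken to contain $S$ is a small but welcome addition that the paper leaves implicit.
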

\begin{proof}
  Let $\mathcal{L} \subseteq 2^{V\setminus \{r\}}$ be a maximal laminar family of $y$-tight $g$-cut constraints that contains the set $S$.
  Consider a reference equation system containing an equation for each $g$-cut constraint for cut sets in $\mathcal{L}$ and all $y$-tight constraints on single edges.
  By \cref{lem:lamTightSys}, this reference system has full column rank and its unique solution is thus $y$.
  Because $S$ will be relaxed, there is no $y$-tight $g$-cut constraint $x(\delta(T))\geq k - g(\delta(S))$ with $T\subsetneq S$ in the reference system.
  Hence, the reference system cannot contain an equation corresponding to any $y$-tight $g$-cut constraint of this type.
  This implies in particular that no edge $e\in E[S]$ is part of an equation in the reference system that corresponds to a $y$-tight $g$-cut constraint.
  However, if there was an edge $e\in E[S]$ with $y_e \not\in \mathbb{Z}$, then such an edge would also not be part of any $y$-tight constraint on single edges, because these are constraints requiring that $y$-values on certain edges are integral, and $y_e$ is not integral.
  This violates that each edge in the support of $y$ must be contained in at least one equation of any full column rank equation system over the edges, thus implying the statement.
\end{proof}

Before making a statement about the $z$-loads on cuts, we observe that we have high $\lfloor y \rfloor$-loads on cuts right before the algorithm either relaxes a set containing them or terminates.
\begin{lemma}\label{lem:cutGoodAfterRelax}
  Consider an iteration of \cref{alg:main}.
  Let $S\subseteq V\setminus \{r\}$ with $S\neq \emptyset$, and assume that at this iteration of the algorithm either a cut $R\subseteq V$ with $S\subseteq R$ gets relaxed, or the algorithm terminates. Then
  \begin{equation*}
    \lfloor y \rfloor (\delta(S)) \geq k - 4.
  \end{equation*}
\end{lemma}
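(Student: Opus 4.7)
My plan is to do case analysis on the lemma's hypothesis, treating the termination scenario and the relaxation scenario in parallel and combining \cref{lem:yPlusGHigh}, \cref{lem:integralInContractedSet}, and sparsity of $\fracpart(y)$ on $\delta(S)$.

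First I would handle the relaxation case, where some $R \subseteq V$ with $S \subseteq R$ is the set being relaxed at this iteration. By the \emph{if}-conditions on \cref{algline:ifSetRelax}, $R$ corresponds to a minimal $y$-tight $g$-cut constraint in \LPA and satisfies $|\fracpart(y) \cap \delta(R)| \leq 3$. Applying \cref{lem:integralInContractedSet} to $R$ gives $y_e \in \mathbb{Z}$ for every $e \in E[R]$, so any edge of $\delta(S)$ with fractional $y$-value must leave $R$ entirely, and therefore $\fracpart(y) \cap \delta(S) \subseteq \fracpart(y) \cap \delta(R)$. In particular $|\fracpart(y) \cap \delta(S)| \leq 3$. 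In the termination case $y\in\mathbb{Z}^E$, and so the analogous bound $|\fracpart(y)\cap\delta(S)|=0$ is trivial.

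Next I would combine this with the load bound $(y+g)(\delta(S)) \geq k-2$ from \cref{lem:yPlusGHigh}. Writing
\begin{equation*}
  \lfloor y \rfloor (\delta(S)) = y(\delta(S)) - \sum_{e \in \fracpart(y)\cap \delta(S)} \{y_e\},
\end{equation*}
the sum of fractional parts is strictly less than $3$, so $\lfloor y\rfloor(\delta(S)) > y(\delta(S)) - 3 \geq k - 5 - g(\delta(S))$. Since $\lfloor y\rfloor(\delta(S))$, $k$, and $g(\delta(S))$ are all integers, this strengthens to $\lfloor y\rfloor(\delta(S)) \geq k-4-g(\delta(S))$. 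When $S = R$ I would further exploit the fact that $y(\delta(R)) = k - g(\delta(R)) \in \mathbb{Z}$, forcing the fractional parts on $\delta(R)$ to sum to an integer in $\{0,1,2\}$, which yields the slightly stronger $\lfloor y\rfloor(\delta(R)) \geq k - 2 - g(\delta(R))$. In the termination case, $\lfloor y \rfloor = y$ directly gives $\lfloor y\rfloor(\delta(S)) \geq k-2-g(\delta(S))$.

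The main remaining step—and the one I expect to be the key obstacle—is bounding $g(\delta(S))$ by the constant needed to reach $k-4$. In both scenarios no ghost value augmentation is applicable at the current iteration (in the termination case because $y \in \mathbb{Z}^E$ and we exit before a GVA; in the relaxation case because the algorithm only reaches \cref{algline:ifSetRelax} after failing the GVA check). Thus for every pair $u,v \in V$ we have $(y+g)(E(u,v)) \notin [\tfrac{k}{2}-2,\tfrac{k}{2})$. Combining this with \cref{lem:noPosGhostValInEuv} (at most one augmentation per edge, so $g_e \in \{0,2\}$) and \cref{lem:largeIntYAtGhostAug} (any GVA'd edge contributes $\lfloor y\rfloor(E(u,v)) \geq \tfrac{k}{2}-2$ of frozen integral mass to the cut between its current endpoints), I expect to show that ghost values cannot accumulate across $\delta(S)$ beyond the tolerated slack—otherwise the frozen integral mass on $\delta(S)$ would already exceed $k$, contradicting either $(y+g)(\delta(R))=k$ (relaxation case) or the structure of a final integral $y$ (termination case). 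Once this is in hand, substituting the bound on $g(\delta(S))$ into the above inequalities yields $\lfloor y\rfloor(\delta(S)) \geq k-4$, as required.
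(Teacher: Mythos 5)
There is a genuine gap, and it sits exactly where you flagged it: the bound on $g(\delta(S))$ is not just unproven, the contradiction strategy you sketch for it cannot work. Nothing prevents $g(\delta(S))$ from being $2$ or even $4$ or more: a ghost-value-augmented edge (or several, from distinct parallel classes) can cross $S$, and the resulting large frozen mass on $\delta(S)$ does not contradict $(y+g)(\delta(R))=k$, because $\delta(S)\subseteq \delta(R)\cup E[R]$ and the frozen parallel classes may lie partly or wholly inside $E[R]$ (and even when they lie in $\delta(R)$ the tightness of $R$ is not violated). The paper's proof does not bound $g(\delta(S))$ at all in the heavy-ghost regime; instead it makes a case distinction. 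If at least two edges of $\delta(S)$ carry ghost value, then by \cref{lem:noPosGhostValInEuv} their parallel classes $F_1,F_2$ (taken at the times of the respective augmentations) are disjoint, by \cref{lem:largeIntYAtGhostAug} each has frozen mass $\lfloor y\rfloor(F_j)\geq \frac{k}{2}-2$, and both classes lie in $\delta(S)$, so $\lfloor y\rfloor(\delta(S))\geq k-4$ directly --- no contradiction, the conclusion itself. Only when at most one augmented edge crosses $S$, i.e.\ $g(\delta(S))\leq 2$, does one run the arithmetic you propose. Your plan to force $g(\delta(S))$ small and then substitute into a single inequality cannot be repaired into a proof without essentially rediscovering this split.

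A second, quantitative problem: you start from the uniform bound $(y+g)(\delta(S))\geq k-2$ of \cref{lem:yPlusGHigh}, which is too weak. Even granting $g(\delta(S))\leq 2$, your chain gives only $\lfloor y\rfloor(\delta(S))\geq k-4-g(\delta(S)) = k-6$. The paper first reduces without loss of generality to the case where the $g$-cut constraint for $S$ is still in \LPA --- if it was dropped earlier, apply the lemma at the iteration in which $S$ itself got relaxed and use that integral $y$-values are frozen, so $\lfloor y\rfloor(\delta(S))$ only increases afterwards --- and then uses $(y+g)(\delta(S))\geq k$. With that baseline, at most three fractional edges on $\delta(S)$ (your derivation of this via \cref{lem:integralInContractedSet} and $\delta(S)\subseteq\delta(R)\cup E[R]$ is fine, and matches the paper) and $g(\delta(S))\leq 2$, integrality gives $\lfloor y\rfloor(\delta(S))\geq k-2-g(\delta(S))\geq k-4$. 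So the two missing ingredients are the WLOG reduction to a still-present constraint and the two-augmented-edges case handled by disjoint frozen parallel classes rather than by contradiction.
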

\begin{proof}
  We start by observing that without loss of generality, we can assume that the $g$-cut constraint corresponding to $S$ is still part of \LPA during the considered iteration.
  If this is not the case, then we consider the iteration where $S$ got relaxed, and apply the result to that iteration, during which the above assumption holds.
  If we denote by $\widetilde{y}$ the $y$-values of the iteration when $S$ got relaxed, then we get $\lfloor \widetilde{y} \rfloor (\delta(S)) \geq k-4$, which implies $\lfloor y \rfloor (\delta(S)) \geq k-4$, because integral $y$-values get fixed.

  Hence, we assume from now on that the $g$-cut constraint corresponding to $S$ is still part of \LPA at the considered iteration.
  We make a case distinction based on the number of edges $e\in \delta(S)$ to which a ghost value augmentation was applied, i.e., the number of edges $e\in \delta(S)$ with $g_e = 2$.
  If there are at least two distinct edges $e_1, e_2\in \delta(S)$ with $g_{e_1} = g_{e_2} = 2$, then let $F_j \subseteq \overline{E}$ be the edges that were parallel to $e_j$, for $j\in \{1,2\}$, when the ghost value augmentation was applied to $e_j$.
  \cref{lem:noPosGhostValInEuv} implies $F_1\cap F_2 = \emptyset$.
  Moreover, by \cref{lem:largeIntYAtGhostAug} and the fact that integral $y$-values get fixed, we have $\lfloor y \rfloor (F_j) \geq \frac{k}{2}-2$ for $j\in \{1,2\}$.
  Hence, $\lfloor y \rfloor (\delta(S)) \geq \lfloor y \rfloor (F_1) + \lfloor y \rfloor (F_2) \geq k - 4$, as desired.

  Assume now that a ghost value augmentation was applied to at most one edge in $\delta(S)$, i.e., $g(\delta(S)) \leq 2$.
  Because the $g$-cut constraint corresponding to $S$ is still part of \LPA, we get
  \begin{equation}\label{eq:gCutStillValidInCutGoodAfterRelax}
    (y+g)(\delta(S)) \geq k.
  \end{equation}
  If the algorithm terminates at this moment, we have that $y$ is integral, and hence
  \begin{equation*}
    \lfloor y \rfloor (\delta(S)) = y(\delta(S)) = k - g(\delta(S)) \geq k - 2,
  \end{equation*}
  because $g(\delta(S)) \leq 2$.

  Otherwise, we are in the case where a set $R\supseteq S$ gets relaxed in the current iteration.
  Note that $\delta(S) \subseteq \delta(R) \cup E[R]$.
  Moreover, because $|\fracpart(y) \cap \delta(R)|\leq 3$, and $y_e\in \mathbb{Z}$ for all $e\in E[R]$ due to \cref{lem:integralInContractedSet}, we have $|\fracpart(y)\cap \delta(S)|\leq 3$.
  Together with~\eqref{eq:gCutStillValidInCutGoodAfterRelax}, this implies $(\lfloor y \rfloor + g)(\delta(S)) > k - 3$.
  Because the left-hand side is integral, we get
  \begin{equation*}
  (\lfloor y \rfloor + g)(\delta(S)) \geq k-2,
  \end{equation*}
  which implies
  \begin{equation*}
    \lfloor y \rfloor \geq k - 2 - g(\delta(S)) \geq k - 4,
  \end{equation*}
  because $g(\delta(S)) \leq 2$.
\end{proof}

Whereas the previous statement provided guarantees for the current solution $y$ at some iteration of the algorithm, we now derive from this guarantees for the final solution $z$ returned by \cref{alg:main}.
Recall that $\mathcal{R}$ is the set family whose sets correspond to non-singleton sets we relaxed (by contraction and dropping the corresponding cut from \LPA); see the beginning of \Cref{sec:proof} for a definition.
We start by providing a guarantee for cut sets $S\in \overline{V}\setminus \{r\}$ that are ``compatible'' with the laminar family $\mathcal{R}$, i.e., adding $S$ to $\mathcal{R}$ preserves laminarity. 
\begin{lemma}\label{lem:lamCompCutsGood}
  For any $S\subseteq \overline{V}\setminus \{r\}$ such that $\mathcal{R}\cup \{S\}$ is laminar, we have
  \begin{equation*}
    z(\delta(S)) \geq k - 4.
  \end{equation*}
\end{lemma}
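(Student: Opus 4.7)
The plan is to pick a specific iteration $t^*$ of \cref{alg:main} at which to apply \cref{lem:cutGoodAfterRelax} to $S$ and then transfer the resulting bound to the final vector $z$. I would take $t^*$ to be the iteration at which the inclusion-minimal $R^*\in \mathcal{R}$ containing $S$ is relaxed, if such an $R^*$ exists, and otherwise the last iteration of the algorithm, when it terminates.

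The first main step is to verify that at iteration $t^*$ the set $S$ really is a union of vertices of the current graph $G=(V,E)$, so that \cref{lem:cutGoodAfterRelax} can be invoked. Contractions in $\mathcal{R}$ must occur in bottom-up laminar order (a child $R$ must become a single vertex before its parent is contracted; otherwise after the parent's contraction $R$ could never again be its own vertex), so at iteration $t^*$ the already-contracted sets are precisely those $R\in \mathcal{R}$ with $R\subsetneq R^*$. For each such $R$, the minimality of $R^*$ rules out $R\supseteq S$ and the laminarity of $\mathcal{R}\cup \{S\}$ rules out $R$ crossing $S$, leaving $R\subseteq S$ or $R\cap S=\emptyset$. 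Hence the $G$-vertices contained in $R^*$ partition $R^*$ in a way that refines $\{S\cap R^*,\,R^*\setminus S\}$, so $S$ is a union of $G$-vertices. Applying \cref{lem:cutGoodAfterRelax} with cut $R^*$ being relaxed (or with the termination branch in the second case) then gives $\lfloor y\rfloor(\delta(S))\geq k-4$, where $\delta(S)$ is measured in $G$ at iteration $t^*$.

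The second step is to transfer this bound into one on $z(\delta_{\overline{G}}(S))$. Any edge $e=\{u,v\}\in \delta_{\overline{G}}(S)$ cannot have its endpoints merged into a single $G$-vertex at iteration $t^*$ by the structural observation above, so $e$ is either still present in $G$ and belongs to $\delta_G(S)$, or was deleted in a prior iteration due to $y_e$ hitting $0$, in which case $z_e=0$. For surviving edges, I would argue $z_e\geq \lfloor y_e\rfloor$ at iteration $t^*$ by cases: if $y_e$ is integral, then either $e\in E[R^*]$ and the algorithm directly sets $z_e=y_e$ (with $y_e\in \mathbb{Z}$ guaranteed by \cref{lem:integralInContractedSet}), or $e\notin E[R^*]$ and $y_e$ gets frozen by the equality constraint added on integer edges; if $y_e$ is fractional, then $\lfloor y_e\rfloor$ equals the lower bound of the box constraint in \ref{lp:ghost}, which is preserved under every subsequent LP resolve, so the final $z_e$ retains at least this floor. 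Summing these per-edge inequalities yields $z(\delta_{\overline{G}}(S))\geq \lfloor y\rfloor(\delta(S))\geq k-4$.

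The main subtlety I expect will be the careful bookkeeping between $S$ viewed as a subset of $\overline{V}$ and $S$ viewed as a subset of the contracted vertex set $V$ at iteration $t^*$; the laminarity hypothesis on $\mathcal{R}\cup\{S\}$ is exactly what keeps these two views compatible throughout the argument, and without it neither the appeal to \cref{lem:cutGoodAfterRelax} nor the edge-by-edge comparison of the two cuts would go through.
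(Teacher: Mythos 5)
Your proof is correct and follows essentially the same route as the paper: distinguish whether some $R\in\mathcal{R}$ contains $S$, apply \cref{lem:cutGoodAfterRelax} at the iteration where the minimal such $R$ is relaxed (or at termination), and transfer the bound via $z \geq \lfloor y \rfloor$. The extra bookkeeping you supply (that laminarity of $\mathcal{R}\cup\{S\}$ makes $S$ a union of current vertices at that iteration, and the per-edge justification of $z_e \geq \lfloor y_e\rfloor$) is exactly what the paper leaves implicit, so no changes are needed.
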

\begin{proof}
  If there is no set $R\in \mathcal{R}$ with $S\subseteq R$, then the $g$-cut constraint corresponding to $S$ is in \LPA until the last iteration, and we can apply \cref{lem:cutGoodAfterRelax} to the last iteration of the algorithm to obtain $\lfloor y \rfloor (\delta(S)) \geq k-4$.
  The result follows because the $z$-values on the edges $\delta(S)$ are the same as the $y$-values during the last iteration.

  Otherwise, let $R\in \mathcal{R}$ with $S\subseteq R$ be the smallest set in $\mathcal{R}$ containing $S$, and consider the iteration where $R$ gets relaxed.
  By \cref{lem:cutGoodAfterRelax}, we obtain $\lfloor y \rfloor (\delta(S)) \geq k-4$, and the result follows because integer values get fixed, and hence $z \geq \lfloor y \rfloor$.
  Indeed, this implies $z(\delta(S)) \geq \lfloor y \rfloor (\delta(S)) \geq k-4$.
\end{proof}

We now turn to showing that $z$ has a large load on any cut set $S\subseteq \overline{V}\setminus \{r\}$, even if $S$ is not compatible with $\mathcal{R}$.
Such a non-compatible set $S$ will cross at least one set $R\in \mathcal{R}$.
We first show that there is a large $z$-value on edges between $R\setminus S$ and $R\cap S$.
(The set $S$ in the statement below corresponds to the set $S\cap R$ when $S$ is a cut set crossing $R$.)

\begin{lemma}\label{lem:kOver2InsideRelaxed}
  Let $R\in \mathcal{R}$ and $S\subsetneq R$ with $S\neq \emptyset$.
  Then $z(\overline{E}(S \cap R ,R\setminus S)) \geq \frac{k}{2} - 4$.
\end{lemma}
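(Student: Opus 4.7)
The plan is to induct on the iteration at which $R \in \mathcal{R}$ is contracted. Consider the iteration $\tau$ just before $R$'s contraction, with current graph $G=(V,E)$, solution $y$, and ghost values $g$. Let $V_R \subseteq V\setminus\{r\}$ denote the current vertices whose underlying original vertex set lies in $R$, so that $V_R$ is precisely what gets collapsed at $\tau$. By \cref{lem:integralInContractedSet}, $y_e \in \mathbb{Z}$ for every $e \in E[V_R]$, and the step ``$z_e = y_e$ for all $e \in E[S]$'' of \cref{alg:main} fixes these integer values into the final $z$. The argument then splits on whether $V_R$ contains a \emph{mixed} vertex, meaning one corresponding to a previously-contracted set $R' \in \mathcal{R}$ with $R' \subsetneq R$ whose underlying original vertices meet both $S$ and $R \setminus S$.

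If such a mixed $R'$ exists, then $S \cap R'$ is a nonempty proper subset of $R'$ and $R'$ was contracted strictly before $R$, so the inductive hypothesis applied to $(R', S\cap R')$ gives $z(\overline{E}(S\cap R', R'\setminus S)) \geq \frac{k}{2}-4$; since $\overline{E}(S\cap R', R'\setminus S) \subseteq \overline{E}(S, R\setminus S)$, the bound follows. Otherwise, $V_R$ partitions into two nonempty sets $A$ (vertices whose underlying original sets lie in $S$) and $B$ (those in $R\setminus S$), and every surviving original edge of $\overline{E}(S, R\setminus S)$ corresponds to a current edge in $E(A,B) \subseteq E[V_R]$. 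Combining \cref{lem:yPlusGHigh} applied to $A$ and $B$ with the $y$-tightness of $R$ and the identity $\chi^{\delta(A)} + \chi^{\delta(B)} = \chi^{\delta(V_R)} + 2\chi^{E(A,B)}$ yields
\begin{equation*}
(y+g)(E(A,B)) \;\geq\; \tfrac{1}{2}\bigl[(k-2)+(k-2)-k\bigr] \;=\; \tfrac{k}{2}-2.
\end{equation*}

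The main obstacle is to turn this bound on $y+g$ into one on $z$, since ghost values on $E(A,B)$ do not themselves appear in $z$. If no edge of $E(A,B)$ carries a positive ghost value, then $y(E(A,B)) \geq \frac{k}{2}-2$, and as these $y$-values are integer and frozen into $z$ at contraction, $z(E(A,B)) \geq \frac{k}{2}-2$. Otherwise some $e_1 \in E(A,B)$ has $g_{e_1} > 0$, meaning a ghost value augmentation was performed on $e_1$ at some earlier iteration $\tau_1$ with endpoints $u_1,v_1$; by \cref{lem:largeIntYAtGhostAug}, $\lfloor y \rfloor_{\tau_1}(E_{\tau_1}(u_1,v_1)) \geq \frac{k}{2}-2$. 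Because contractions only merge vertices, the endpoints of $u_1$ and $v_1$ eventually land inside $A$ and $B$ respectively, so the original edges underlying $E_{\tau_1}(u_1,v_1)$ lie in $\overline{E}(S, R\setminus S)$; their integer $y$-values at $\tau_1$ are frozen via equality constraints and therefore pass into $z$, giving $z(\overline{E}(S,R\setminus S)) \geq \frac{k}{2}-2$. In every case we obtain $z(\overline{E}(S\cap R, R\setminus S)) \geq \frac{k}{2}-2 \geq \frac{k}{2}-4$, closing the induction.
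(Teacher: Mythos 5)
Your proof is correct, and it differs from the paper's in the step that does the real work. The paper reduces to the same ``clean'' situation as you (its minimality condition on crossing sets of $\mathcal{R}$ plays the role of your induction-on-contraction-time with the mixed-vertex case), but then it invokes \cref{lem:cutGoodAfterRelax} to get the ghost-free bounds $y(\delta(A)) \geq k-4$ and $y(\delta(B)) \geq k-4$ at the contraction iteration, and combines them with $y(\delta(R)) \leq (y+g)(\delta(R)) = k$ and the identity $y(E(A,B)) = \tfrac{1}{2}\left( y(\delta(A)) + y(\delta(B)) - y(\delta(R))\right)$ to conclude $z(E(A,B)) = y(E(A,B)) \geq \tfrac{k}{2}-4$; all handling of ghost values is thus delegated to \cref{lem:cutGoodAfterRelax}. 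You instead work with $(y+g)$ via \cref{lem:yPlusGHigh} and tightness of $R$, getting $(y+g)(E(A,B)) \geq \tfrac{k}{2}-2$, and then strip the ghost contribution locally: either $g$ vanishes on $E(A,B)$, or a single ghost-augmented parallel class already carries integral $y$-mass $\geq \tfrac{k}{2}-2$ by \cref{lem:largeIntYAtGhostAug}, which is frozen (floors never decrease, by the box constraints and the added equality constraints) and lands in $\overline{E}(S, R\setminus S)$ by the same containment-of-underlying-sets argument the paper uses elsewhere. Your route buys a slightly stronger constant ($\tfrac{k}{2}-2$ instead of $\tfrac{k}{2}-4$) in the non-mixed case and avoids \cref{lem:cutGoodAfterRelax} entirely, at the cost of redoing a piece of its ghost-value bookkeeping inline. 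One cosmetic slip: your closing sentence claims $\tfrac{k}{2}-2$ ``in every case,'' but in the mixed-vertex case your inductive hypothesis only supplies $\tfrac{k}{2}-4$; since the lemma asserts $\tfrac{k}{2}-4$, the induction still closes (or you could strengthen the induction hypothesis to $\tfrac{k}{2}-2$, which your argument in fact supports).
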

\begin{proof}
  We first observe that we can assume the following minimality condition for the pair $R$ and $S$: there is no set $\widetilde{R}\in \mathcal{R}$ with $\widetilde{R} \subsetneq R$ so that $S$ and $\widetilde{R}$ are crossing.
  Assume that we proved the statement only for such sets $R$ and $S$ fulfilling this minimality condition.
  Given an arbitrary $R\in \mathcal{R}$ and $S\subsetneq R$ with $S\neq \emptyset$, we define $\widetilde{R}$ to be the smallest set $\widetilde{R} \in \mathcal{R}$ contained within $R$ that crosses $S$.
  Moreover, we set $\widetilde{S} \coloneqq S\cap \widetilde{R}$.
  This pair $\widetilde{R}$ and $\widetilde{S}$ fulfills the above-mentioned minimality condition.
  Hence,
  \begin{equation*}
    z(\overline{E}(\widetilde{S}, \widetilde{R}\setminus \widetilde{S})) \geq \frac{k}{2}-4.
  \end{equation*}
  Moreover, because $\overline{E}(\widetilde{S}, \widetilde{R} \setminus \widetilde{S}) \subseteq \overline{E}(S, R\setminus S)$, we also get $z(\overline{E}(S, R\setminus S)) \geq \frac{k}{2} - 4$, as desired.

  Hence, assume from now on that $R$ and $S$ fulfill the above-mentioned minimality condition.
  This minimality condition implies
  \begin{enumerate}[nosep]
    \item\label{item:RSLaminar} $\mathcal{R} \cup \{S\}$ is laminar, and
    \item\label{item:RNoSLaminar} $\mathcal{R} \cup \{R\setminus S\}$ is laminar.
  \end{enumerate}
  Consider now the iteration of the algorithm when the set $R$ gets relaxed.
  Due to~\eqref{item:RSLaminar} and~\eqref{item:RNoSLaminar}, there are vertex sets $A,B \subseteq V\setminus \{r\}$ of the current graph $G=(V,E)$ that correspond to $S$ and $R\setminus S$, respectively, i.e., $\delta(A) = \delta(S)$ and $\delta(B) = \delta(R\setminus S)$.
  In other words, when undoing all contractions within $A$ we get $S$, and when undoing all contractions within $B$ we get $R\setminus S$.
  Hence, by \cref{lem:cutGoodAfterRelax}, we have
  \begin{equation}\label{lem:lowerBoundSAndRNoS}
  \begin{aligned}
    y(\delta(S)) = y(\delta(A)) &\geq k-4, \text{ and} \\
    y(\delta(R\setminus S)) = y(\delta(B)) &\geq k-4.
  \end{aligned}
  \end{equation}
  The desired result now follows from:
  \begin{align*}
    z(\overline{E}(S, R\setminus S)) &= z(E(S, R\setminus S)) \\
    &= y(E(S, R\setminus S)) \\
    &= \frac{1}{2}\left( y(\delta(S)) + y(\delta(R\setminus S)) - y(\delta(R)) \right) \\
    &\geq \frac{k}{2} - 4.
  \end{align*}
  The first equation holds because at the iteration when $R$ gets contracted, no edge with one endpoint in $S$ and one in $R\setminus S$ has been contracted yet.
  The second equation holds because when $R$ gets relaxed, \cref{alg:main} sets the $z$-values to the $y$-values for edges in $E[R]$.
  Finally, the inequality follows from~\eqref{lem:lowerBoundSAndRNoS} and $y(\delta(R)) \leq (y+g)(\delta(R)) = k$, which holds because $R$ corresponds to a $y$-tight $g$-cut constraint as it gets relaxed at the considered iteration.
\end{proof}

Before proving our final guarantee for $z$-values of any cut set $S\subseteq \overline{V}\setminus \{r\}$ with $S\neq \emptyset$, we show that the edges crossing a set $R \in \mathcal{R}$ will have a $z$-load of at most $k + 2$.
This is the last ingredient we need to provide a lower bound on the $z$-value of the edges crossing any cut set.
\begin{lemma}\label{lem:zUpperBound}
  For any $R \in \mathcal{R}$, we have
  \begin{equation*}
    z(\delta(R)) \leq k + 2.
  \end{equation*}
\end{lemma}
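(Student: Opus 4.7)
The plan is to analyze the single iteration $t^*$ at which $R$ gets contracted/relaxed, and then to control how much the $z$-value on any edge of $\delta(R)$ can grow between $t^*$ and the end of the algorithm. Let $\widetilde{y}$ and $\widetilde{g}$ be the \LPA-solution and ghost values at the start of iteration $t^*$. By the two conditions checked in \cref{algline:ifSetRelax}, we know $(\widetilde{y} + \widetilde{g})(\delta(R)) = k$ and the set $F := \fracpart(\widetilde{y}) \cap \delta(R)$ satisfies $|F| \le 3$.

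Next, I would split $\delta(R)$ into its integer-valued and fractional parts. For $e \in \delta(R) \setminus F$, the value $\widetilde{y}_e$ is integral, so the algorithm adds the equality $x_e = \widetilde{y}_e$ on \cref{algline:addEqualityConstr} (of iteration $t^*$ or some earlier iteration), freezing $y_e$ for all later iterations. Hence $z_e = \widetilde{y}_e$. For $e \in F$, the per-edge upper bound $x_e \le \lceil y_e \rceil$ in~\ref{lp:ghost}, where $y$ denotes the original input vector, persists throughout the algorithm; since $\widetilde{y}_e$ is strictly fractional we have $\lceil y_e \rceil = \lceil \widetilde{y}_e \rceil$, and $z$ is integral, so $z_e \le \lceil \widetilde{y}_e \rceil$. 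Summing these bounds,
\[ z(\delta(R)) \;\le\; \widetilde{y}(\delta(R)) + \sum_{e \in F}\bigl(\lceil \widetilde{y}_e \rceil - \widetilde{y}_e\bigr) \;=\; k - \widetilde{g}(\delta(R)) + \Bigl(|F| - \sum_{e \in F}\fracpart(\widetilde{y}_e)\Bigr). \]

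The crux is now a parity/integrality observation. Because $\widetilde{y}(\delta(R)) = k - \widetilde{g}(\delta(R))$ and $\widetilde{g}$ is integer-valued, the sum $\sigma := \sum_{e \in F} \fracpart(\widetilde{y}_e)$ is a non-negative integer strictly smaller than $|F|$. This rules out $|F| = 1$ outright (no positive integer lies in $(0,1)$), and for $|F| \in \{2,3\}$ it forces $\sigma \ge 1$, so $|F| - \sigma \le |F| - 1 \le 2$; for $|F| = 0$ the bracketed term vanishes. In every case,
\[ z(\delta(R)) \;\le\; k - \widetilde{g}(\delta(R)) + 2 \;\le\; k + 2, \]
using $\widetilde{g}(\delta(R)) \ge 0$.

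The main obstacle I expect is precisely this integrality argument: the naive accounting that allows each of the up to three fractional edges to round up by almost a full unit gives only $z(\delta(R)) \le k + 3$, which would degrade the constant in the main rounding theorem. The observation that ghost values are chosen to be even integers (so that $k - \widetilde{g}(\delta(R))$ is integral even when $k$ is even) is exactly what makes $\sigma$ a genuine positive integer whenever any fractional edges are present, and saves the last unit needed to land at $k+2$.
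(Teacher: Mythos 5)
Your proof is correct and follows essentially the same route as the paper: fix the iteration where $R$ is relaxed, use that integral values on $\delta(R)$ are frozen while each of the at most $3$ fractional edges can only rise to its ceiling, and use integrality to save one unit. Your observation that $\sigma=\sum_{e\in F}\bigl(\widetilde y_e-\lfloor\widetilde y_e\rfloor\bigr)$ is a positive integer when $F\neq\emptyset$ is just a repackaging of the paper's step $\lfloor y\rfloor(\delta(R))\leq k-1$ (only integrality of $k$ and of the ghost values is needed, not their evenness), so the two arguments coincide.
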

\begin{proof}
  Consider the iteration when $R$ gets relaxed.
  At this iteration we have $(y+g)(\delta(R)) = k$, because the $g$-cut constraint corresponding to $R$ is $y$-tight. 
  
  If $|\fracpart(y) \cap \delta(R)| = 0$ then in any future iteration with \LPA solution $\widetilde{y}$ we have $\widetilde{y}(\delta(R)) \leq k$.
  
  On the other hand, suppose $|\fracpart(y) \cap \delta(R)| > 0$ in the iteration in which we relax $R$. It follows that $\lfloor y \rfloor(\delta(R)) \leq k - 1$. However, since $R$ is about to be relaxed we know $|\fracpart(y) \cap \delta(R)| \leq 3$ and so it follows that in any future iteration with \LPA solution $\widetilde{y}$, we have 
  \begin{align*}
      (\widetilde{y})(\delta(R)) \leq \lfloor y \rfloor(\delta(R)) + |\fracpart(y) \cap \delta(R)| \leq k + 2.
  \end{align*}
  Thus, $z(\delta(R)) \leq k+2$.
\end{proof}

Finally, \cref{lem:cutLB} shows that $z$ has a large load on the edges crossing any cut set $S\subseteq \overline{V}\setminus \{r\}$.
More precisely, the load will be at least $k-9$ (for $k$ even), as claimed by our main rounding theorem, \cref{thm:mainaux}.
\begin{lemma}\label{lem:cutLB}
For all $S\subseteq \overline{V}\setminus \{r\}$ where $S \neq \emptyset$ we have
  \begin{equation*}
    z(\delta(S)) \geq k - 9.
  \end{equation*}
\end{lemma}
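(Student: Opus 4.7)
I would argue by case analysis on how $S$ interacts with the laminar family $\mathcal{R}$. Let $\mathcal{R}_S^{\max}$ denote the inclusion-wise maximal elements of $\mathcal{R}$ that cross $S$; by laminarity of $\mathcal{R}$ these are pairwise disjoint.

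When $|\mathcal{R}_S^{\max}| = 0$, the set $S$ itself is $\mathcal{R}$-laminar and \cref{lem:lamCompCutsGood} gives $z(\delta(S)) \geq k-4$ directly. When $|\mathcal{R}_S^{\max}| \geq 2$, any two distinct (hence disjoint) $R_1, R_2 \in \mathcal{R}_S^{\max}$ yield, via \cref{lem:kOver2InsideRelaxed} applied to the nonempty strict subsets $S \cap R_i \subsetneq R_i$, two disjoint edge subsets $\overline{E}(S \cap R_i, R_i \setminus S)$ of $\delta(S)$, each with $z$-value at least $\frac{k}{2}-4$; hence $z(\delta(S)) \geq k-8$.

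The remaining case $|\mathcal{R}_S^{\max}| = 1$ is where the real work sits. Let $R$ be the unique maximal set in $\mathcal{R}$ crossing $S$. I would proceed in three steps: (i)~show, using laminarity of $\mathcal{R}$ together with the maximality and uniqueness of $R$, that both $S \cup R$ and $S \setminus R$ are compatible with $\mathcal{R}$, so that \cref{lem:lamCompCutsGood} yields $z(\delta(S \cup R)), z(\delta(S \setminus R)) \geq k-4$; (ii)~on the four-block partition $V = (S\cap R) \sqcup (S\setminus R) \sqcup (R\setminus S) \sqcup (V\setminus(S\cup R))$, derive the identity
\[
    \chi^{\delta(S \cup R)} + \chi^{\delta(S \setminus R)} - \chi^{\delta(R)} = 2\,\chi^{\overline{E}(S \setminus R,\, V \setminus (S \cup R))}
\]
by a routine edge-by-edge check, and combine it with step~(i) and $z(\delta(R)) \leq k+2$ from \cref{lem:zUpperBound} to conclude $z(\overline{E}(S \setminus R, V \setminus (S \cup R))) \geq \frac{k}{2}-5$; (iii)~observe that $\overline{E}(S \cap R, R \setminus S)$, which by \cref{lem:kOver2InsideRelaxed} has $z$-value at least $\frac{k}{2}-4$, is disjoint from the edge set in~(ii), and that both are contained in $\delta(S)$. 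Summing these two contributions gives $z(\delta(S)) \geq (\frac{k}{2}-4) + (\frac{k}{2}-5) = k-9$.

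I expect the main obstacle to be the single-maximal-crossing case. The individual bounds available are too weak by themselves: \cref{lem:lamCompCutsGood} does not apply to $S$ directly, and \cref{lem:kOver2InsideRelaxed} alone gives only $\frac{k}{2}-4$. So one has to manufacture a second disjoint ``$\frac{k}{2}$-chunk'' inside $\delta(S)$. The linear combination in step~(ii) is what does this, and it depends on step~(i), whose case check crucially uses that $R$ is the \emph{unique} maximal element of $\mathcal{R}_S^{\max}$---otherwise a crossing $R' \in \mathcal{R}$ disjoint from $R$ could destroy the $\mathcal{R}$-laminarity of both $S \cup R$ and $S \setminus R$, preventing the application of \cref{lem:lamCompCutsGood} that fuels step~(ii).
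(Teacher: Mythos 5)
Your proposal is correct and follows essentially the same route as the paper: the same three-way case analysis on how $\mathcal{R}$ crosses $S$, the same use of \cref{lem:lamCompCutsGood}, \cref{lem:kOver2InsideRelaxed}, and \cref{lem:zUpperBound}, and your identity in step~(ii) is exactly the quantity $\frac{1}{2}\bigl(z(\delta(S\setminus R)) + z(\delta(S\cup R)) - z(\delta(R))\bigr)$ that the paper invokes via a ``simple counting of edges.'' The only cosmetic difference is that you split cases by the number of maximal crossing sets rather than by the existence of two disjoint crossing sets, which is an equivalent and equally valid partition of the cases.
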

\begin{proof}
  The result follows from \cref{lem:lamCompCutsGood} if $\mathcal{R}\cup \{S\}$ is laminar.
  Hence, assume from now on that at least one set $R\in \mathcal{R}$ crosses $S$.

  If there are at least two disjoint sets $R_1,R_2\in \mathcal{R}$ that cross $S$, then we have
  \begin{equation*}
    z(\delta(S)) \geq z(\overline{E}(R_1\cap S, R_1\setminus S)) + z(\overline{E}(R_2\cap S, R_2\setminus S)) \geq 2 \cdot \left(\frac{k}{2} - 2\right) = k-4,
  \end{equation*}
  by \cref{lem:kOver2InsideRelaxed}.

  Hence, assume in what follows that there are no two disjoint sets in $\mathcal{R}$ that cross $S$.
  Let $R\in \mathcal{R}$ be the maximal set in $\mathcal{R}$ that crosses $S$.
  Maximality of $R$ and the fact that there is no set in $\mathcal{R}$ disjoint from $R$ that crosses $S$ implies:
  \begin{itemize}[nosep]
    \item $\mathcal{R} \cup \{R \cup S\}$ is laminar, and
    \item $\mathcal{R} \cup \{S\setminus R\}$ is laminar.
  \end{itemize}
  By a simple counting of edges leaving the relevant sets, we have
  \begin{align}
    z(\delta(S)) &\geq z(\overline{E}(S\cap R, R\setminus S)) + \frac{1}{2}\Big( z(\delta(S\setminus R)) + z(\delta(S\cup R)) - z(\delta(R))\Big).
  \end{align}
  The result now follows by using
  \begin{itemize}[nosep]
    \item $z(\overline{E}(S\cap R, R\setminus S)) \geq \frac{k}{2} - 4$ by \cref{lem:kOver2InsideRelaxed},
    \item $z(\delta(S\setminus R)) \geq k - 4$ by \cref{lem:lamCompCutsGood}, which applies due to laminarity of $\mathcal{R} \cup \{R\cup S\}$,
    \item $z(\delta(S\cup R)) \geq k -4$ by \cref{lem:lamCompCutsGood}, which applies due to laminarity of $\mathcal{R}\cup \{S\setminus R\}$, and
    \item $z(\delta(R)) \leq k+2$, which holds because of \cref{lem:zUpperBound}. \qedhere
  \end{itemize}
\end{proof}

\subsection{Cost of Returned Solution}

The required cost bound is easily proven.
\begin{observation}\label{obs:costBound}
  The solution $z$ returned by \cref{alg:main} has cost bounded by $c^{\top} y$ where $y$ is the input vector of \Cref{thm:mainaux}.
\end{observation}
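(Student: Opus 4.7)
The plan is to establish a monotonicity invariant: at the start of every iteration of the while loop, the cost already committed to $z$ by past Drop/Contract steps plus $c^\top y$ for the current \LPA vertex solution $y$ is at most $c^\top y^{(0)}$, where $y^{(0)}$ denotes the input vector to \Cref{thm:mainaux}. At termination, the final assignment $z_e = y_e$ on the remaining edges converts the $c^\top y$ term into the rest of $c^\top z$, so the invariant yields the claim.

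The base case is immediate: $y^{(0)}$ is feasible for the initial \LPA (all ghost values are $0$, cut values are at least $k$, and the box constraints $\lfloor y^{(0)}\rfloor \le x \le \lceil y^{(0)} \rceil$ hold with equality), so the optimal vertex solution computed during initialization has cost at most $c^\top y^{(0)}$ and no cost has yet been committed to $z$. For the inductive step I would examine each branch of the loop. In a ghost-value augmentation only $g$ changes and the cut constraints loosen, so the current $y$ stays feasible and re-optimizing can only lower $c^\top y$; no cost is committed. When the equality constraints $x_e = y_e$ are added for newly integral edges, the current $y$ satisfies them automatically. Deletions of edges with $y_e + g_e = 0$ affect neither cost nor feasibility.

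The substantive case is the Drop/Contract step for a set $S$: the algorithm commits $c^\top(y|_{E[S]})$ to $z$ and removes the edges $E[S]$ from the LP through the contraction. I would verify that $y|_{E \setminus E[S]}$ is feasible for the new \LPA by noting that every surviving cut in the contracted graph corresponds, after undoing contractions, to a cut of the original graph whose $g$-cut constraint was already present in \LPA (only the constraint for $S$ is dropped), and that $E[S]$ does not meet any such cut, so the LHS of each surviving cut constraint is unchanged. Thus the new vertex optimum $y^{\text{new}}$ satisfies $c^\top y^{\text{new}} \le c^\top(y|_{E \setminus E[S]})$, and adding the newly committed $c^\top(y|_{E[S]})$ recovers exactly the previous $c^\top y$, preserving the invariant. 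I do not anticipate a real obstacle; the only point that needs care is the bookkeeping at Drop/Contract, namely that the committed portion together with the new LP optimum accounts precisely for the old LP value.
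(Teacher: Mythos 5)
Your proposal is correct and follows essentially the same route as the paper: the paper's proof also rests on the observation that after each ghost value augmentation, cut relaxation (using the old solution restricted to non-contracted edges), or added equality constraint, the previous LP solution remains feasible, so the committed cost plus the current LP optimum never exceeds the initial LP value, which is at most $c^{\top} y$ by feasibility of the input vector. You merely make explicit the invariant and the bookkeeping at the Drop/Contract step that the paper's shorter argument leaves implicit.
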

\begin{proof}
  We denote by $\LPOPTkGhost[k]$ the optimal value of the \LPA solution computed in \cref{algline:solveLPInit} of \cref{alg:main}. Observe that
  \begin{align*}
    \LPOPTkGhost[k] \leq c^{\top} y
  \end{align*}
  since $y$ is feasible for this LP (where again, $y$ is the input to \Cref{thm:mainaux}).
  Whenever we change \LPA in \cref{alg:main}, we do so by either increasing ghost values in \cref{algline:ghostValueAugm}, performing a cut relaxation in \cref{algline:relaxation}, or by adding equality constraints in \cref{algline:addEqualityConstr}.
  All of these operations have the property that the previous LP solution is still valid for the new LP.
  (In case of a cut relaxation, we can use the values of the previous LP solution on the non-contracted edges.)
  This implies that the returned solution $z$ fulfills $c^{\top} z \leq \LPOPTkGhost[k] \leq c^{\top} y$, as desired.
\end{proof}

\subsection{Polynomial Running Time and Putting Things Together}
Lastly, we prove the polynomial runtime of our algorithm and then observe that, by the above analysis, we can conclude \Cref{thm:mainaux}.

\begin{lemma}\label{lem:solveLPEfficient}
  Determining a vertex solution to any of the linear programs solved in \cref{alg:main}, i.e., during initialization in \cref{algline:solveLPInit} and later in the iterations in \cref{algline:resolveLP}, can be done in poly-time.
\end{lemma}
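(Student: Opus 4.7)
The plan is to exhibit a polynomial-time separation oracle for each LP solved by \cref{alg:main}, apply the ellipsoid method of Grötschel, Lovász, and Schrijver to obtain an optimal solution, and then extract a vertex solution via a standard uncrossing-based basis completion.

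First I would observe that every constraint of \LPA other than the $g$-cut constraints is polynomial in number: the box constraints $x\in[\lfloor y\rfloor, \lceil y\rceil]$ and the frozen equalities $x_e = y_e$ added in \cref{algline:addEqualityConstr} together amount to at most $O(|E|)$ constraints, each directly checkable. The remaining $g$-cut constraints $x(\delta(S)) \geq k - g(\delta(S))$ present in \LPA at a given iteration are precisely the $k$-edge-connectivity cut constraints on the current contracted graph $G=(V,E)$ with edge capacities $x+g$, with at most $O(|\overline{V}|)$ explicit exceptions corresponding to sets previously dropped by \textbf{Drop/Contract} (which are now singletons of $V$). Hence, given any candidate $x^*$, deciding feasibility reduces to computing $\min_{\emptyset\neq S\subseteq V\setminus\{r\}}(x^*+g)(\delta(S))$ and checking that no minimizer is a non-excluded set of value strictly less than $k$. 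Since a global minimum cut in $(V,E;x^*+g)$ and the list of dropped singletons can both be handled in polynomial time (e.g., by $|V|-1$ max-flow computations from $r$, together with an explicit enumeration over the polynomially many dropped singletons and, if needed, next-smallest cut computations), this yields a polynomial-time separation oracle. By the ellipsoid method, an optimal solution $x^*$ to \LPA can therefore be computed in polynomial time.

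To convert $x^*$ into a \emph{vertex} solution, I would apply the standard basis-completion post-processing: append $c^\top x = c^\top x^*$ to restrict attention to the optimal face, and then iteratively identify linearly independent tight constraints---tight box or equality constraints are read off directly, while each tight $g$-cut constraint can be produced by one further min-cut computation on $(V,E;x^*+g)$---shrinking the feasible region until a unique point (a vertex) is pinned down. The uncrossing argument of \cref{lem:lamTightSys} guarantees that a full column rank system of $y$-tight \LPA constraints exists using only a \emph{laminar} family of cut sets together with tight box/equality constraints, so the process terminates after at most $O(|V|)$ basis updates, each polynomial in time.

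The main obstacle is certifying that this vertex-extraction step yields a genuine vertex of \LPA in polynomial time despite the exponentially many cut constraints. This is essentially classical for $k$-edge-connectivity LPs, and the fixed integer shifts $g$ do not disturb the argument: substituting $x' \coloneqq x+g$ transforms \LPA into a standard $k$-ECSS-style LP with shifted integer box constraints, to which the classical uncrossing and basis-completion machinery applies verbatim. Combined with \cref{lem:mainAlgIterBound}, which bounds the number of outer iterations by $O(|\overline{V}|)$, this yields the claimed polynomial overall running time.
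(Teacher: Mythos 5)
Your route is the same as the paper's: exhibit a poly-time separation oracle for \LPA and invoke the Ellipsoid Method to obtain an optimal vertex solution. The delicate point, however, is precisely the one your oracle glosses over: the $g$-cut constraints are \emph{not} all cuts of the contracted graph, since the singletons in $W$ (previously relaxed sets) carry no constraint, and your stated test---compute the global minimum of $(x^*+g)(\delta(S))$ and ``check that no minimizer is a non-excluded set of value strictly less than $k$''---is not a correct feasibility test. The global minimum may be attained only at dropped singletons (which are allowed to be far below $k$), while some constrained set, e.g.\ a set $S\subseteq W$ with $|S|\geq 2$ whose every singleton has been dropped, has value $k-1$; such a violated constraint is invisible to a minimizer check. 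Your parenthetical remedy (skip the polynomially many dropped singletons via ``next-smallest cut computations'') can be made rigorous---one needs an algorithm enumerating cuts in nondecreasing capacity order with polynomial delay, and at most $O(|V|)$ excluded singletons to skip---but as written it is only a gesture, and this is exactly where the paper does its work: it separates by computing a minimum $v$-$r$ cut for every non-contracted vertex $v$ (covering all constrained sets containing a vertex of $V\setminus W$) and, for the residual family of sets $S\subseteq W$ with $|S|\geq 2$, a minimum $Q$-$r$ cut for every pair $Q\subseteq W$, $|Q|=2$, which forces the minimizer to have size at least two and needs only ordinary max-flow computations.

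Two smaller remarks. First, your vertex-extraction post-processing is unnecessary and its justification is shaky: \cref{lem:lamTightSys} is a statement about the \emph{current vertex} $y$ at an iteration where no ghost value augmentation is possible, so it cannot be used to certify termination of a basis-completion procedure applied to an arbitrary optimal point $x^*$ of a freshly relaxed LP. The paper simply appeals to the standard fact that, with a poly-time separation oracle and polynomially bounded encoding lengths, the Ellipsoid framework of Gr\"otschel--Lov\'asz--Schrijver already returns an optimal \emph{vertex} solution; your argument should do the same. Second, your substitution $x'\coloneqq x+g$ is fine as intuition but does not by itself dispose of the dropped-singleton issue above, since the resulting LP is still not the full $k$-edge-connectivity cut system. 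With the separation oracle repaired (either via your cut-enumeration idea made explicit, or via the paper's pairwise $Q$-$r$ cuts), the rest of your argument goes through.
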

\begin{proof}
  Note that each linear program \LPA we have to solve in \cref{alg:main} are of the following form.
  We are given a graph $G=(V,E)$ with ghost values $g\in \{0,2\}^E$.
  The graph $G$ was obtained from the original graph $\overline{G}=(\overline{V},\overline{E})$ by deleting edges and contracting some vertex sets.
  Hence, a vertex in $G$ may correspond to a set of vertices of $\overline{G}$.
  Recall that, for the vertices $W\subseteq V$ of $G$ that were obtained through a cut relaxation, the LP does not have a lower bound of $k$ on their degree anymore.
  Moreover, we have lower bounds ($\ell \leq x$) and upper bounds ($x\leq u$) on edge variables because the LP we consider in \cref{alg:main} stems from \eqref{lp:ghost}, and also because \cref{algline:addEqualityConstr} of \cref{alg:main} adds equality constraints, each of which corresponds to a matching lower and upper bound for an edge variable $x_e$.
  Thus, the LPs we have to solve are of the following form:
\begin{equation}\label{eq:lpToSolve}
\begin{array}{rrcll}
  \min &c^\top x     &     &                       &                                                     \\
       &x(\delta(S)) &\geq &k - g(\delta(S))       &\forall S\subseteq V\setminus \{r\}, |S|\geq 2 \\
       &x(\delta(v)) &\geq &k - g(\delta(v))       &\forall v\subseteq V\setminus (W\cup \{r\}) \\
       &x            &\in  &[\ell,u]^E.            &
\end{array}
\end{equation}
\cref{alg:main} needs to find an optimal vertex solution to this LP in polynomial time.

One conceptually easy way to achieve this is through the Ellipsoid Method.
  Because the encoding size of each of our constraints is small, an optimal vertex solution to~\eqref{eq:lpToSolve} can be obtained in polynomial time if we can separate in polynomial time.
  (We refer the interested reader to the excellent book of \textcite{groetschel_1981_ellipsoid} for details on the Ellipsoid Method and separation.)
  To solve the separation problem, we are given a point $y\in \mathbb{R}^E_{\geq 0}$, and we show that we can determine in poly-time whether $y$ is feasible for~\eqref{eq:lpToSolve} and, if this is not the case, return a violated constraint of~\eqref{eq:lpToSolve}.
  
  This easily reduces to a polynomial number of minimum $s$-$t$ cut computations as follows.
  First, we verify that $y$ fulfills the lower and upper bounds, i.e., $y\geq \ell$ and $y\leq u$.
  We then determine for each $v\in V\setminus (W\setminus \{r\})$ a minimum $v$-$r$ cut, when using $y+g$ as edge capacities.
  If any of these cuts has a value below $k$, then it corresponds to a violated constraint of~\eqref{eq:lpToSolve}.
  More precisely, these cuts allow for verifying all constraints of~\eqref{eq:lpToSolve} of the form $x(\delta(v)) \geq k - g(\delta(v))$ for $v\in V\setminus (W\setminus \{r\})$, and all constraints of the first constraint family of~\eqref{eq:lpToSolve} for any $S\subseteq V\setminus \{r\}$ containing at least one vertex of $V\setminus W$.
  For the remaining constraints, which correspond to constraints of the first constraint family of~\eqref{eq:lpToSolve} for sets $S\subseteq W$ with $|S|\geq 2$, we compute, for each subset $Q\subseteq W$ with $|Q|=2$, a minimum $Q$-$r$ cut with edge capacities given by $x+g$.
  This allows for checking all remaining constraints and for identifying a violated constraint if there is one, which completes the separation procedure.%
\footnote{Alternatively, to obtain in polynomial time an optimal vertex solution to~\eqref{eq:lpToSolve}, one can also use a compact extended formulation.
This can, for example, be achieved by having a set of variables $x\in \mathbb{R}^E_{\geq 0}$ that correspond to installing capacities on the edges (for each $e\in E$, we get a capacity of $g(e)$ for free), and then capturing the $s$-$t$ cut conditions we used in the separation problem through flow problems with respect to the capacities $x+g$.
  For each of the (quadratically many) $s$-$t$ cut computations, we use an independent set of flow variables and require that the corresponding $s$-$t$ flow has a value of at least $k$.
  }
\end{proof}

\begin{lemma}\label{lem:findCutToRelaxEfficient}
  At any iteration of \cref{alg:main}, one can in poly-time determine a set $S \subseteq V\setminus \{r\}$ corresponding to a $y$-tight $g$-cut constraint to which the cut relaxation operation (\Cref{algline:relaxation}) can be applied, if there is such a set.
\end{lemma}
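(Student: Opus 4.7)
The plan is to reduce the search to a polynomial number of minimum cut computations in $G$ with edge capacities $y+g$. By \Cref{lem:yPlusGHigh}, every non-empty $S\subsetneq V$ satisfies $(y+g)(\delta(S))\geq k-2$, so in-LP $y$-tight $g$-cut constraints are precisely the sets $S\subseteq V\setminus\{r\}$ with $(y+g)(\delta(S)) = k$ that are either non-singletons or singletons $\{v\}$ with $v\in V\setminus W$ (where $W\subseteq V$ denotes the set of vertices of $G$ that originated from prior cut-relaxation contractions; the singleton constraint for each $v\in W$ has been dropped). Up to the exclusion of these dropped singletons, the in-LP tight cuts are exactly the minimum cuts of value $k$ in the capacitated graph.

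First I would compute, for every vertex $v\in V\setminus\{r\}$, a candidate smallest in-LP tight cut $S_v$ containing $v$, if any. For $v\in V\setminus(W\cup\{r\})$, this is a standard smallest $v$-$r$ minimum cut, obtainable from any maximum $v$-$r$ flow in $G$ by taking the set of vertices reachable from $v$ in the residual graph; I keep $S_v$ only if $(y+g)(\delta(S_v)) = k$. For $v\in W$ the constraint for $\{v\}$ has been dropped, so I must enforce $|S|\geq 2$: for each $u\in V\setminus\{r,v\}$ I compute a smallest $\{u,v\}$-$r$ min cut by contracting $u$ and $v$ into a single source and running max-flow, and I set $S_v$ to be the smallest resulting cut of value $k$. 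This yields at most $|V|$ candidate sets via $O(|V|^2)$ max-flow computations.

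Once the candidates are collected, I identify the inclusion-minimal ones by pairwise subset comparison and return any whose $y$-fractional boundary satisfies $|\fracpart(y)\cap\delta(S_v)|\leq 3$; if none does, I report that no cut relaxation is possible. Correctness rests on the following uncrossing claim, which I would prove using submodularity and symmetry of the cut function $f(S)=(y+g)(\delta(S))$ together with \Cref{lem:yPlusGHigh}: whenever two in-LP tight cuts $S,T$ have an intersection $S\cap T$ that is itself in-LP, the intersection is also tight (and analogously for the union). This implies that the family of in-LP tight cuts containing any fixed vertex $v$ has a unique smallest member, which is precisely the $S_v$ computed above, so every inclusion-minimal in-LP tight cut is captured by our enumeration.

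The main obstacle is the careful handling of dropped singleton constraints at vertices of $W$: a naive minimum $v$-$r$ cut computation for $v\in W$ may return the dropped singleton $\{v\}$ itself, so the constraint $|S|\geq 2$ must be explicitly enforced via the multi-source contraction trick above. The uncrossing argument must likewise rule out degenerate intersections that collapse to such dropped singletons, which is where \Cref{lem:yPlusGHigh} is used to show that the relevant intersections and unions land in the in-LP regime. Beyond this, the algorithm reduces to standard max-flow bookkeeping and runs in polynomial time.
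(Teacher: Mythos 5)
Your reduction to maximum-flow computations is a reasonable strategy, but the structural claim your correctness argument rests on is false. You assert that ``the family of in-LP tight cuts containing any fixed vertex $v$ has a unique smallest member,'' and this fails exactly for the contracted vertices $v\in W$: two tight in-LP cuts $S,T$ containing $v$ can cross with $S\cap T=\{v\}$, and since the constraint for the dropped singleton $\{v\}$ is no longer in \LPA, submodularity of $(y+g)(\delta(\cdot))$ gives you nothing --- \cref{lem:yPlusGHigh} only guarantees $(y+g)(\delta(\{v\}))\geq k-2$, and values in $[k-2,k)$ are precisely the problematic regime, not the ``in-LP regime'' as you suggest. This failure of uncrossing at dropped singletons is the very phenomenon the ghost values are built around (cf.\ \cref{fig:lowfracissue}), so it cannot be waved away. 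As a consequence, your rule of keeping only \emph{one} candidate $S_v$ per $v\in W$ (the cardinality-smallest among the pair cuts $S_{u,v}$) is unjustified: a tight in-LP cut $S$ with no tight in-LP strict subset, all of whose vertices lie in $W$, may at every one of its vertices be ``beaten'' by some other tight cut of smaller cardinality, and then $S$ never enters your candidate list. This breaks the final step in both directions: an inclusion-minimal candidate can have a tight in-LP strict subset that you never generated, so returning it violates condition (i) of \cref{algline:ifSetRelax}; and the unique relaxable set can be absent from the list, so you may wrongly report that no relaxation exists.

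The gap is repairable with a small change that removes any need for a per-vertex uniqueness claim: keep, as candidates, the source-minimal minimum $\{u,v\}$-$r$ cut for \emph{every} pair $u,v\in V\setminus\{r\}$ (together with the single-source cuts for $v\notin W$), which is still $O(|V|^2)$ max-flow computations. Every cut containing a fixed pair has at least two vertices, hence its constraint is in \LPA and its value is at least $k$; the minimal minimum cut for each pair problem is unique and contained in all minimum cuts; and any tight in-LP cut with no tight in-LP strict subset is then recovered exactly by choosing two of its vertices as the pair. With that, inclusion-minimal candidates coincide with the globally minimal tight in-LP cuts, and filtering them by $|\fracpart(y)\cap\delta(\cdot)|\leq 3$ is both sound and complete. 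Note that even the repaired argument is a genuinely different route from the paper's proof of \cref{lem:findCutToRelaxEfficient}, which instead enumerates subsets of at most three fractional edges and computes minimal minimum cuts separating chosen endpoints from $r$ together with all remaining fractional-edge endpoints, using \cref{lem:integralInContractedSet} in the completeness direction; but as written, your proposal has a genuine gap.
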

\begin{proof}
  Consider an iteration of the algorithm and let $G=(V,E)$ be the current graph and $y$ the current solution to \LPA, as usual.
  Because a cut relaxation always contracts the corresponding vertex set, the $g$-cut constraints in the LP contain a constraint for each set $S$ of size at least $2$, and also for a subset of the singleton sets---namely those that do not correspond to an original vertex in $\overline{V}$ and are therefore not a set obtained through contractions.
  Let $W\subseteq V$ be all singletons that correspond to contracted vertices.
  Hence, the family of all $g$-cut constraints of the LP is the following:
\begin{equation}\label{eq:lpForCutRelax}
\begin{array}{rrcll}
       &x(\delta(S)) &\geq &k - g(\delta(S))       &\forall S\subseteq V\setminus \{r\}, |S|\geq 2 \\
       &x(\delta(v)) &\geq &k - g(\delta(v))       &\forall v\subseteq V\setminus (W\cup \{r\}). \\
\end{array}
\end{equation}

  We now first describe the procedure we use to identify a set $S \subseteq V\setminus \{r\}$ to which the cut relaxation operation can be performed (if there is one) and then discuss why it is correct.

  We start by checking whether there is any singleton set $S=\{v\}$ for $v\in V\setminus (W\cup \{r\})$ that can be relaxed.
  This is the case if $x(\delta(v)) = k - g(\delta(v))$ and $|\fracpart(y)\cap \delta(v)| \leq 3$, which can be checked in poly-time.
  If no such singleton exists, we consider all sets $U\subseteq \fracpart(y)$ with $|U| \leq 3$.
  For each such $U$, consider all sets $W\subseteq V(U)$, where $V(U)\subseteq V$ are all endpoints of edges in $U$, such that $W$ contains exactly one endpoint of each edge of $U$.
  If $|W| < 2$, consider all sets $Z\subseteq V$ with $W\subseteq Z$ and $|Z|=2$; otherwise, we only consider $Z=W$ for the set $W$.
  Let $\mathcal{Z}$ be all considered sets $Z$.
  Note that $|\mathcal{Z}| = O(|V|^3)$ because $|Z|\leq 3$.

  For each $Z \in \mathcal{Z}$ we do the following.
  We compute a minimal minimum $Z$-$(\{r\} \cup (V(\fracpart(y))\setminus Z))$ cut in $G$ with edge capacities $y+g$.\footnote{Even though we do not need this fact, it is well known that minimal minimum $A$-$B$ cuts for any disjoint non-empty vertex sets $A$ and $B$ are unique.
  Moreover, the minimal minimum $A$-$B$ cut can for example be obtained by first computing a maximum $A$-$B$ flow $f$, and then choosing the $A$-$B$ cut that consists of all vertices reachable from $A$ in the $f$-residual graph.
  This step of determining vertices reachable from $A$ can be performed in linear time, for example through breadth-first search, once a maximum $A$-$B$ flow $f$ has been computed.
  This is one way to see that this step of the procedure can be performed in poly-time.
  }
  Let $S_Z \subseteq V\setminus \{r\}$ be the computed cut for $Z\in \mathcal{Z}$.
  We now focus only on the minimal cuts within the family ${\{S_Z\}}_{Z\in \mathcal{Z}}$.
  If there is a minimal cut $S_Z$ in the family with $(y+g)(\delta(S_Z))=k$, then we claim that a cut relaxation can be applied to $S_Z$.
  If there is no such cut, we claim that no cut relaxation can be applied at the current iteration.
  Note that this procedure clearly runs in polynomial time because $|\mathcal{Z}| = O(|V|^3)$.
  It remains to show its correctness.

  To this end, we first observe that any set identified for relaxation by our procedure is indeed one corresponding to a $g$-cut constraint that can be relaxed.
  This is clear for singleton cuts.
  Hence, assume that there is no singleton cut that can be relaxed, and let $S_Z$ for $Z\in \mathcal{Z}$ be a set identified by our procedure.
  For the sake of deriving a contradiction, assume that there is a set $S\subsetneq S_Z$ that corresponds to a $y$-tight $g$-cut constraint of the LP.
  Let $U\coloneqq \fracpart(y)\cap \delta(S)$ and $W \coloneqq V(U)\cap S$.
  Because $S$ is not a singleton, there is a set $Z\subseteq S$ with $W\subseteq Z$ and $|Z|\geq 2$.
  However, then $S$ is a candidate for the cut problem that determined $S_{Z}$.
  This contradicts the minimality of $S_{Z}$.

  It remains to show that if there is a set $S$ corresponding to a $g$-cut constraint that can be relaxed, then our procedure identifies such a set.
  Note that $U\coloneqq \fracpart(y) \cap \delta(S)$ satisfies $|U|\leq 3$, because the $g$-cut constraint corresponding to $S$ can be relaxed.
  Let $Z\subseteq S$ be any set with
  \begin{enumerate*}[label=(\roman*)]
    \item $V(U)\cap S \subseteq Z$, and
    \item $2 \leq |Z|\leq 3$.
  \end{enumerate*}
  Because $\fracpart(y)\cap E[S]=\emptyset$ by \cref{lem:integralInContractedSet}, the set $S$ is a $Z$-$(\{r\}\cup (V(\fracpart(y))\setminus Z))$ cut.
  Moreover, it corresponds to a $y$-tight $g$-cut constraint that is part of the LP, because one can relax the $g$-cut constraint corresponding to $S$ by assumption.
  Hence, $S$ is a candidate cut for the minimal minimum $Z$-$(\{r\}\cup (V(\fracpart(y))\setminus Z))$ cut $S_{Z}$.
  Thus, the $g$-cut constraint corresponding to $S_Z$ is $y$-tight, because $y(\delta(S_Z))$ cannot be any larger than $y(\delta(S))=k$ as $S_Z$ is a minimum $Z$-$(\{r\}\cup (V(\fracpart(y))\setminus Z))$ cut.
  By the minimality of $S_Z$, we must have $S_Z\subseteq S$.
  Moreover, we cannot have $S_Z \subsetneq S$, because this would imply that there is a $y$-tight $g$-cut constraint corresponding to a strict subset of $S$, which is impossible because the $g$-cut constraint corresponding to $S$ can be relaxed.
  Thus, $S_Z = S$, and our procedure therefore identifies a $g$-cut constraint for the cut relaxation operation.
\end{proof}

\begin{theorem}\label{thm:runtime}
  \cref{alg:main} runs in polynomial time.
\end{theorem}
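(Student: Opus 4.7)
The plan is to simply combine the three preceding lemmas, since each does most of the heavy lifting. First, I would invoke \cref{lem:mainAlgIterBound} to conclude that the while loop executes at most $O(|\overline{V}|)$ times. Given this iteration bound, it suffices to show that every individual iteration of the while loop can be carried out in polynomial time; then multiplying by the iteration count yields the desired overall guarantee.

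Within a single iteration, the algorithm must perform four kinds of operations. (i) Compute an optimal vertex solution $y$ to \LPA in \cref{algline:solveLPInit} or \cref{algline:resolveLP}: this is exactly what \cref{lem:solveLPEfficient} provides in poly-time. (ii) Test whether a ghost value augmentation is possible: this just requires iterating over the $O(|V|^2)$ pairs of distinct vertices $u,v$ and checking whether $(y+g)(E(u,v)) \in [k/2 - 2, k/2)$, which is poly-time. (iii) If no ghost value augmentation is possible, find a $y$-tight $g$-cut constraint eligible for the cut relaxation step: this is exactly what \cref{lem:findCutToRelaxEfficient} provides in poly-time. (iv) Perform the associated bookkeeping: freezing integral coordinates, adding the equality constraint $x_e = y_e$, deleting edges with $y_e + g_e = 0$, and contracting a relaxed set $S$ in $G$. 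Each of these is a standard graph/LP operation carried out in polynomial time.

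The only mild subtlety to address is bit complexity: the algorithm is required to be polynomial even when $k$ is given in binary. This is fine because $k$ enters only on the right-hand sides of linear constraints, each ghost value is at most $2$ (a consequence of \cref{lem:noPosGhostValInEuv}, since at most one augmentation per edge is ever applied), and standard LP solvers return vertex solutions of polynomial bit complexity in the input. Hence all arithmetic on $y$, $g$, and $k$ remains polynomial.

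Combining these facts, every iteration runs in poly-time, and there are at most $O(|\overline{V}|)$ iterations, so \cref{alg:main} runs in polynomial time. I do not expect any real obstacle here; the entire content of the theorem has been modularized into \cref{lem:mainAlgIterBound}, \cref{lem:solveLPEfficient}, and \cref{lem:findCutToRelaxEfficient}, which together already handle the nontrivial parts (iteration-count bound, LP solvability with dropped cut constraints, and the cut-finding subroutine based on a family of minimum cut computations).
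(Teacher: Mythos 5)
Your proof is correct and follows the paper's own argument: it combines \cref{lem:mainAlgIterBound} for the $O(|\overline{V}|)$ iteration bound with \cref{lem:solveLPEfficient} and \cref{lem:findCutToRelaxEfficient} for the per-iteration work. The extra remarks on checking ghost value augmentations over vertex pairs and on bit complexity are fine but add nothing beyond the paper's (terser) reasoning.
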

\begin{proof}
  By \cref{lem:mainAlgIterBound}, the number of iterations of \cref{alg:main} is bounded by $O(|\overline{V}|)$.
  The statement then follows by observing that the initialization and also each iteration of the algorithm can be performed in polynomial time, which holds because of \cref{lem:solveLPEfficient,lem:findCutToRelaxEfficient}.
\end{proof}

Finally, \cref{thm:mainaux} follows from the fact that \cref{alg:main} terminates in polynomial time by \cref{thm:runtime} and the fact that the vector $z$ satisfies $c^{\top} z \leq c^{\top} y$ by \Cref{obs:costBound}, $z_e \in \{\lfloor y_e \rfloor, \lceil y_e\rceil\}$ for every $e$ by construction and $z(\delta(S)) \geq k-9$ for all non-empty $S \subsetneq V$ by \Cref{lem:cutLB}.

\section{Hardness of Approximation of \texorpdfstring{$k$}{k}-ECSM}\label{sec:hardness}
\textcite{Pri11} showed \APX-hardness for $2$-ECSM by showing it is essentially the same problem as the metric version of $2$-ECSS but left  hardness of $k$-ECSM for $k>2$ and as a function of $k$ open, stating
\begin{quote}\centering
    \textit{What [\APX-hardness of $2$-ECSM] leaves to be desired is hardness for $k$-ECSM, $k>2$ and asymptotic dependence on $k$. Why is it hard to show these problems are hard?\ldots A new trick seems to be needed to get a good hardness result for k-ECSM.}
\end{quote}
In this section, we provide such a trick to show $1+ \Omega(\sfrac{1}{k})$ hardness for $k$-ECSM. Specifically, we reduce from the unweighted tree augmentation problem (TAP) problem, which is known to be $(1+ \epsTAP)$-hard-to-approximate for some constant $\epsTAP > 0$. See \Cref{sec:TAPBackground} for details on unweighted TAP. We recall below the hardness statement we show in this section.
\hardness*
The above hardness shows the tightness of \Cref{thm:mainECSS} up to constants.
We prove \cref{thm:hardness} by showing hardness for all odd $k$.
The basic idea is to reduce from unweighted TAP by using the following trick: if a $k$-ECSM solution on graph $G'$ has value exactly $k$ on a node $w$ with two incident edges in $G'$ and $k$ is odd, then one can always make one edge incident to $w$ have value $\lceil k/2 \rceil$ and one edge have value $\lfloor k/2 \rfloor$ while preserving the feasibility of the $k$-ECSM solution.

\subsection{Preliminary: the Unweighted Tree Augmentation Problem (TAP)}\label{sec:TAPBackground}
We give relevant background on the unweighted tree augmentation problem (TAP). We first define unweighted TAP (see \Cref{sfig:tap1}).

\begin{definition}[Unweighted tree augmentation problem (TAP)]\label{dfn:tap}
An instance of the unweighted tree augmentation problem (TAP) consists of a spanning tree $G=(V,E)$ and a set of ``links'' $L \subseteq {V \choose 2}$. A solution is a set of links $F \subseteq L$ such that $(V, E \cup F)$ is $2$-edge-connected. The goal is to minimize $|F|$.
\end{definition}
Feasible solutions to TAP are often equivalently defined in at least two other ways. Say that link $\ell \in L$ \emph{covers} $e \in E$ if $\ell$ is in the fundamental cycle of $e$ in $G$.
We denote by $\cov(e) \subseteq L$ the set of links covering $e$.
Then $F\subseteq L$ is a feasible TAP solution if and only if every edge of $G$ is covered by some link in $F$, i.e., $\cov(e)\cap F\neq\emptyset$ for all $e\in E$.
Similarly, $F$ is a feasible TAP solution if and only if every ``tree cut'' of $G$ (i.e., a cut $S$ satisfying $|\delta_G(S)| = 1$) has at least one edge of $F$ incident to it. We emphasize that a \emph{link} will always refer to an edge of $L$ (and never to e.g.~an edge of $G$). See  \Cref{sfig:tap2} for a feasible TAP solution and \Cref{sfig:tap3} for a tree cut. We let $\OPTTAP$ stand for the cardinality of the minimum cardinality feasible $F$.

The following summarizes known hardness of approximation for unweighted TAP. Notably, we will use the fact that this hardness holds even when the optimal TAP solution has cost at least linear in the number of edges of the input spanning tree.
\begin{theorem}[\APX-Hardness of unweighted TAP,~\cite{kortsarz2004hardness}]\label{thm:tapAPXHard}
There is a constant $\epsTAP \in (0,1)$ such that it is \NP-hard to $(1+\epsTAP)$-approximate unweighted TAP on $G = (V,E)$ even when $\OPTTAP \geq \frac{1}{4} \cdot |E|$.
\end{theorem}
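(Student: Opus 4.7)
The plan is to derive \Cref{thm:tapAPXHard} by combining the standard APX-hardness of unweighted TAP with a simple padding argument that boosts the density ratio $\OPTTAP / |E|$ past $1/4$ without losing the constant gap.

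First, I would invoke the baseline APX-hardness of unweighted TAP established by Kortsarz-Krauthgamer-Lee: there exists a constant $\epsilon_0 \in (0,1)$ such that no poly-time algorithm can $(1+\epsilon_0)$-approximate unweighted TAP unless $\P=\NP$. A brief inspection of the standard hardness constructions reveals that the produced trees already satisfy $\OPTTAP \geq c_0 |E|$ for some absolute constant $c_0 > 0$, since the reduction forces every tree edge to be covered through gadget pieces whose sizes and coverage multiplicities are bounded. If $c_0 \geq 1/4$ already, we are done; otherwise we amplify via padding.

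Second, I would describe the padding construction. Given a hard instance $(G,L)$ with $|E(G)| = m$ tree edges and optimum $k^\star$, pick an arbitrary $v^\star \in V(G)$ and attach to $v^\star$ a star gadget: introduce a new vertex $c$ with tree edge $v^\star c$ and $t$ pendant leaves $\ell_1,\dots,\ell_t$ attached to $c$ by tree edges. Take the new link set $L'$ to be $L$ together with all pairs $\{\ell_i, \ell_j\}$ among the star leaves, and crucially no links of the form $\{\ell_i, u\}$ with $u \in V(G)$. Because no link crosses between the star leaves and $V(G)$, the fundamental cycle of every link in $L'$ lies either entirely within $G$ or entirely within the star, so any feasible solution $F'$ decomposes uniquely as $F' = F \cup F_\star$ with $F$ covering the edges of $G$ and $F_\star$ covering the star's pendant edges (the edge $v^\star c$ itself is then automatically covered once $F \neq \emptyset$, which holds whenever $k^\star \geq 1$). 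A leaf-covering counting argument gives $\OPT(F_\star) = \lceil t/2 \rceil$ exactly, so
\begin{equation*}
\OPTTAP(G', L') = k^\star + \lceil t/2 \rceil, \qquad |E(G')| = m + t + 1.
\end{equation*}
Choosing $t$ as a suitably large constant multiple of $m$ (for instance $t = \lceil 8 m \rceil$) guarantees $\OPTTAP(G') \geq |E(G')|/4$.

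Third, I would verify that the approximation gap survives the reduction. Any $(1+\epsTAP)$-approximate solution $F'$ for the padded instance satisfies $|F'| \leq (1+\epsTAP)(k^\star + \lceil t/2 \rceil)$, and by the decomposition above with $|F_\star| \geq \lceil t/2 \rceil$, the induced solution $F \subseteq L$ for the original instance has
\begin{equation*}
|F| \;\leq\; (1+\epsTAP)(k^\star + \lceil t/2 \rceil) - \lceil t/2 \rceil \;=\; k^\star + \epsTAP \bigl(k^\star + \lceil t/2 \rceil\bigr).
\end{equation*}
Using the density lower bound $k^\star \geq c_0 m$ from the baseline instance together with $t = O(m)$, the quantity $k^\star + \lceil t/2 \rceil$ is within a constant factor of $k^\star$, so choosing $\epsTAP$ to be any sufficiently small constant multiple of $\epsilon_0$ forces $|F| \leq (1+\epsilon_0) k^\star$, contradicting the baseline hardness.

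The main obstacle is preventing ``cross-covering'' — a single link of $F'$ helping to cover both a $G$-edge and a star-edge, which would destroy the additive decomposition of $\OPTTAP$. Ruling this out is precisely why we restrict $L'$ to exclude leaf-to-$V(G)$ links. A secondary technical point is the need for the baseline reduction to produce instances with $k^\star = \Theta(m)$; without this, the ratio $\epsTAP / \epsilon_0$ could degrade with $m$ and fail to be a positive constant. This density condition holds for standard TAP hardness reductions, but would need to be checked carefully when importing the statement.
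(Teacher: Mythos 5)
Your overall strategy (pad a baseline APX-hard instance with a cheap-to-analyze gadget to boost the density ratio $\OPTTAP/|E|$ past $1/4$ while preserving a constant gap) is genuinely different from what the paper does. The paper does not amplify at all: it observes that the Kortsarz--Krauthgamer--Lee instances (after contracting one edge that is only needed for the vertex-connectivity version) have diameter $4$, so every link covers at most $4$ tree edges, and hence $\OPTTAP \geq |E|/4$ holds \emph{directly} for the hardness instances. Your route is more robust in one sense --- it would upgrade any constant density $c_0>0$ to $1/4$ --- but it still rests entirely on the unproven premise that the baseline instances satisfy $\OPTTAP \geq c_0|E|$ for some constant $c_0>0$. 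That premise is the entire non-trivial content of the theorem as stated (it is false for general TAP instances: a path with a single long link has $\OPTTAP=1$ and $|E|=n-1$), and "a brief inspection of the standard hardness constructions" is not a proof of it. The paper's diameter argument is precisely the missing verification, so your proposal defers rather than resolves the key point.

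There is also a concrete bug in your gadget: the tree edge $v^\star c$ is covered by \emph{no} link of $L'$. Links in $L$ have both endpoints in $V(G)$, so their fundamental cycles stay inside $G$; links $\{\ell_i,\ell_j\}$ have fundamental cycle $\ell_i\text{--}c\text{--}\ell_j$, which contains only the two pendant edges and never $v^\star c$. Your claim that $v^\star c$ is "automatically covered once $F\neq\emptyset$" is therefore wrong, and every padded instance you produce is infeasible. The fix is easy --- attach the $t$ pendant leaves directly to $v^\star$ (deleting the intermediate vertex $c$), so that $|E(G')|=m+t$, the leaf-matching links cover exactly the pendant edges, and the additive decomposition $\OPTTAP(G')=k^\star+\lceil t/2\rceil$ goes through as you intended --- but as written the construction fails. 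With that fix, and with the density premise actually established (e.g., via the paper's diameter-$4$ observation, at which point the padding becomes unnecessary anyway), your gap-preservation calculation is correct.
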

We note two things regarding the above result. First, the authors prove hardness for a slightly different problem than unweighted TAP (namely, \emph{VCAP}, where the goal is to augment to a \emph{$2$-vertex-connected} graph, with all edges having cost $1$ or $2$). It is easy to see that the same reduction works for unweighted TAP. Second, the assumption that $\OPTTAP \geq \frac{1}{4} \cdot |E|$ is not explicitly stated in this work but follows from the fact that $G$ in the unweighted TAP instances have diameter $5$ and one can reduce their diameter to $4$ by contracting a single edge (the edge $\{\bar{r}, r \}$ in the construction presented in~\cite{kortsarz2004hardness} on page~715, which is needed for the vertex-connectivity version but not the edge-connectivity version we have in TAP) without affecting the validity of the reduction. A diameter bound of $4$ on an unweighted TAP instance on tree $G = (V,E)$ immediately implies $\OPTTAP \geq \frac{1}{4} \cdot |E|$ on this instance.

\subsection{Hardness of \texorpdfstring{$k$}{k}-ECSM via TAP Reduction Using Odd \texorpdfstring{$k$}{k}}

    We begin by describing our reduction from unweighted TAP to $k$-ECSM that we will use. Consider an instance of unweighted TAP given by a spanning tree $G = (V,E)$ and links $L \subseteq {V \choose 2}$. Informally, to construct our instance of $k$-ECSM, we begin with our TAP instance and then convert each edge of our TAP instance into two parallel and bisected edges. See \Cref{sfig:tap4}. More formally, we construct the following unit-cost instance of $k$-ECSM on an auxiliary graph $H=(W,B)$.
\begin{enumerate}
  \item \textbf{Vertices:} Let $V_E \coloneqq {\{w_e, w_e'\}}_{e \in E}$ be a set of vertices, two for each edge of $E$. The vertex set of our $k$-ECSM instance is $W \coloneqq V \cup V_E$.
	\item \textbf{Edges:} For each edge $e = \{u, v\} \in E$, we have $4$ edges in our $k$-ECSM instance, namely $\{u, w_e\}$, $\{w_e, v\}$, $\{u, w_e'\}$, and $\{w_e', v\}$. Let $\gadgetEdges$ be all such edges.
    The edge set of our $k$-ECSM instance is $B\coloneqq \gadgetEdges \cup L$.
  \item \textbf{Costs:} The cost of each edge $b\in B$ in our $k$-ECSM instance is $1$, i.e., $c_b =1$.
\end{enumerate}
If $S$ is a tree cut of $G$, then we say that the following cut in $G'$
\emph{corresponds} to $S$:
\begin{equation*}
S' \coloneqq S \cup \{w_e \colon e \in E[S]\} \cup \{w'_e \colon e \in E[S]\},
\end{equation*}
where $E[S]\subseteq E$ denotes all edges of $E$ with both endpoints in $S$.
In words, $S'$ consists of $S$ and all gadget vertices created by edges internal to $S$. See \Cref{sfig:tap6}.

\begin{figure}[h]
	\centering
	 \begin{subfigure}[b]{.32\textwidth}
		\centering
		\includegraphics[width=\textwidth, trim=0mm 0mm 230mm 0mm, clip ]{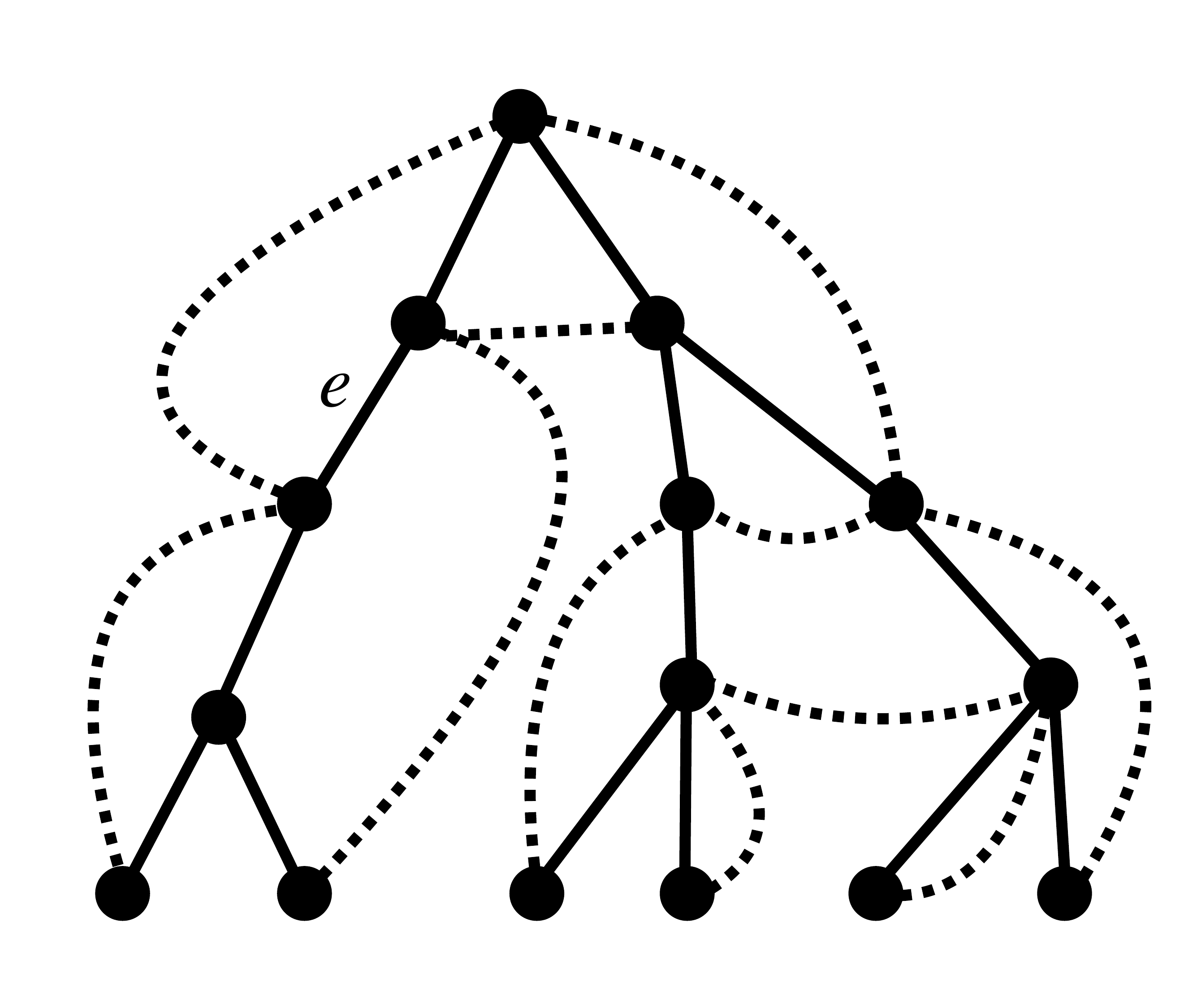}
		\caption{TAP Instance on $G$.}\label{sfig:tap1}
	\end{subfigure}
	\hfill
		 \begin{subfigure}[b]{.32\textwidth}
		\centering
		\includegraphics[width=\textwidth, trim=0mm 0mm 230mm 0mm, clip ]{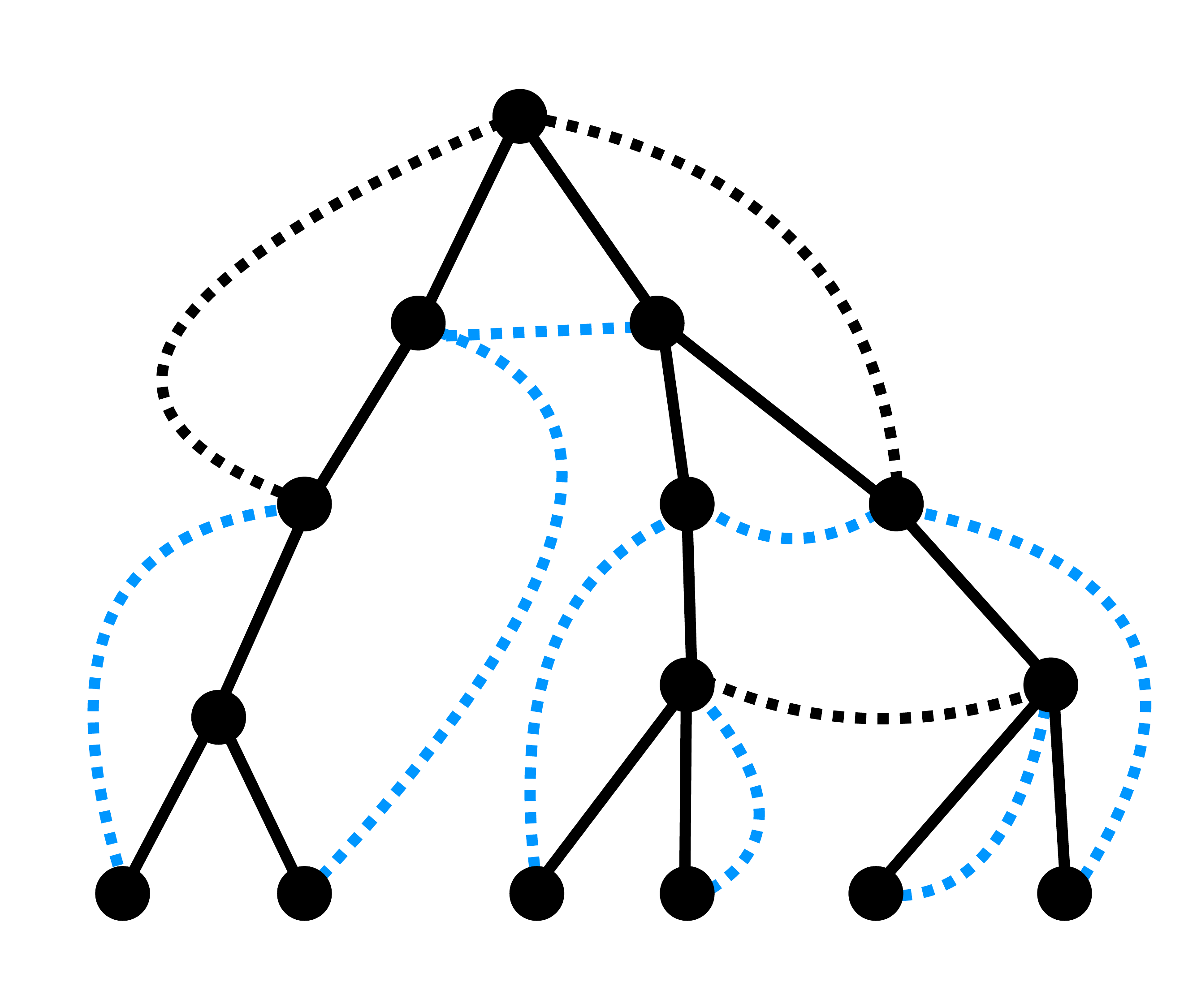}
		\caption{Feasible TAP solution.}\label{sfig:tap2}
	\end{subfigure}
	\hfill
	\begin{subfigure}[b]{.32\textwidth}
		\centering
		\includegraphics[width=\textwidth, trim=0mm 0mm 230mm 0mm, clip ]{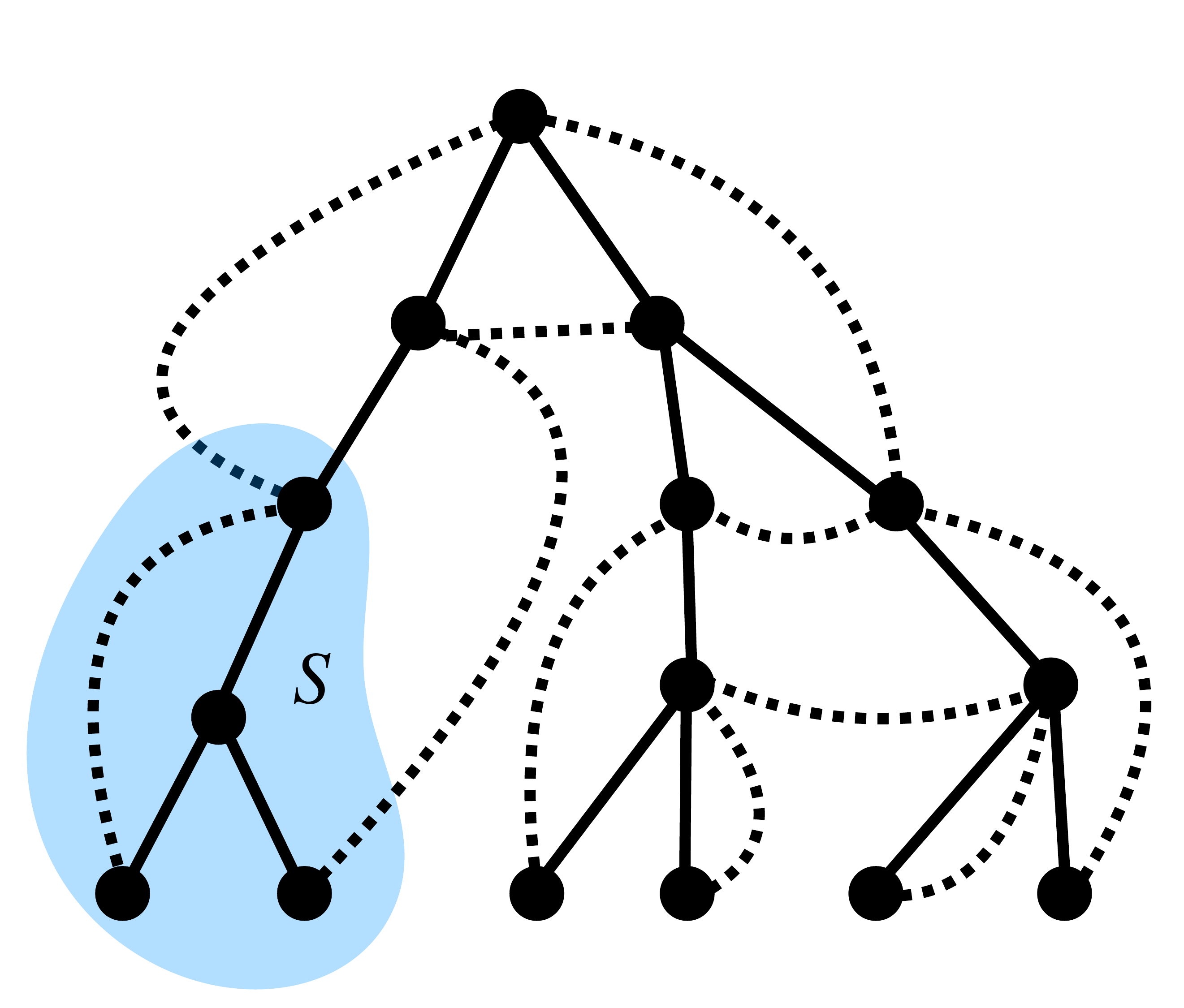}
		\caption{Tree cut $S$.}\label{sfig:tap3}
	\end{subfigure}
		 \begin{subfigure}[b]{.32\textwidth}
		\centering
		\includegraphics[width=\textwidth, trim=0mm 0mm 230mm 0mm, clip ]{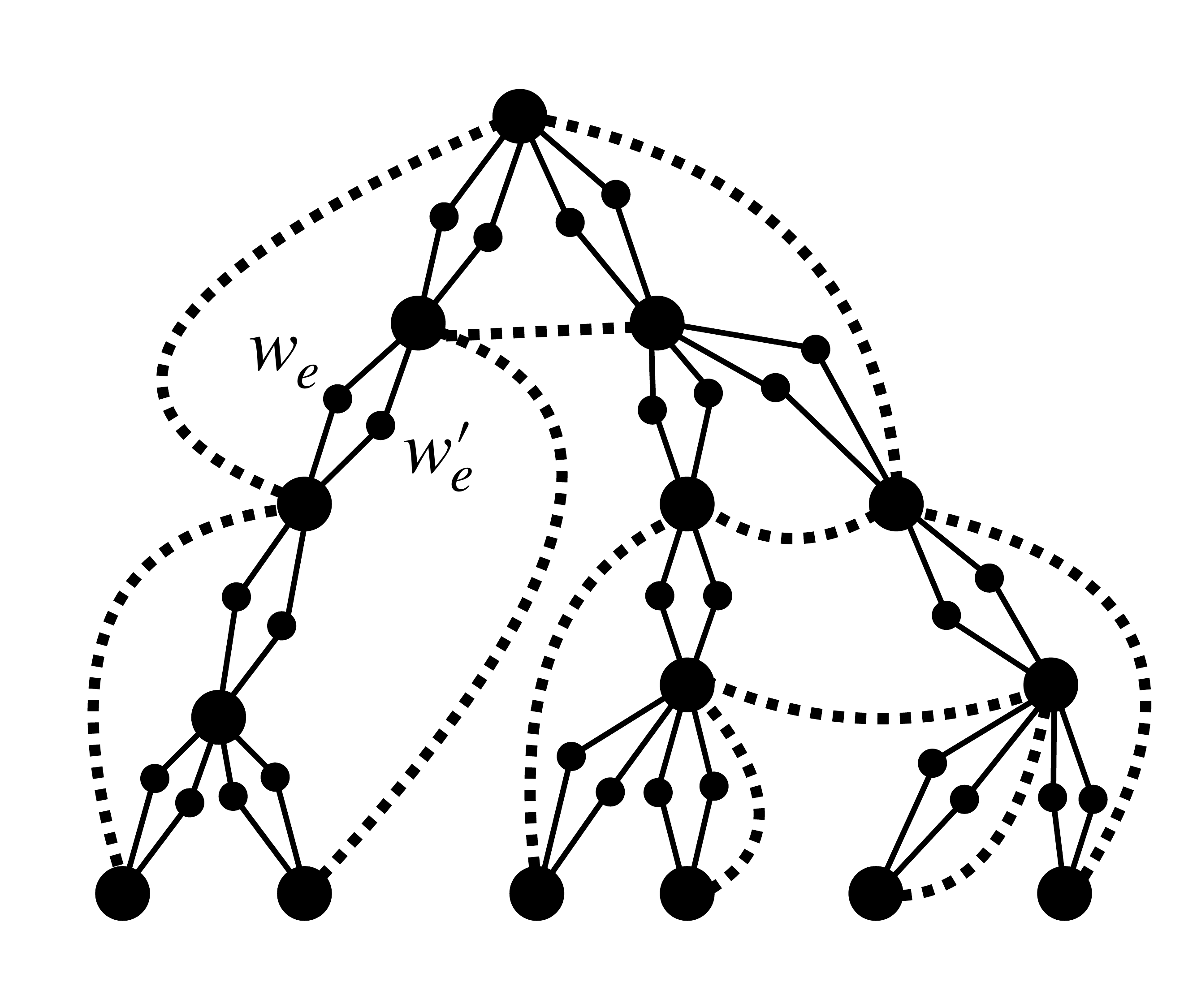}
		\caption{$k$-ECSM Instance on $H$.}\label{sfig:tap4}
	\end{subfigure}
	\hfill
	\begin{subfigure}[b]{.32\textwidth}
		\centering
		\includegraphics[width=\textwidth, trim=0mm 0mm 230mm 0mm, clip ]{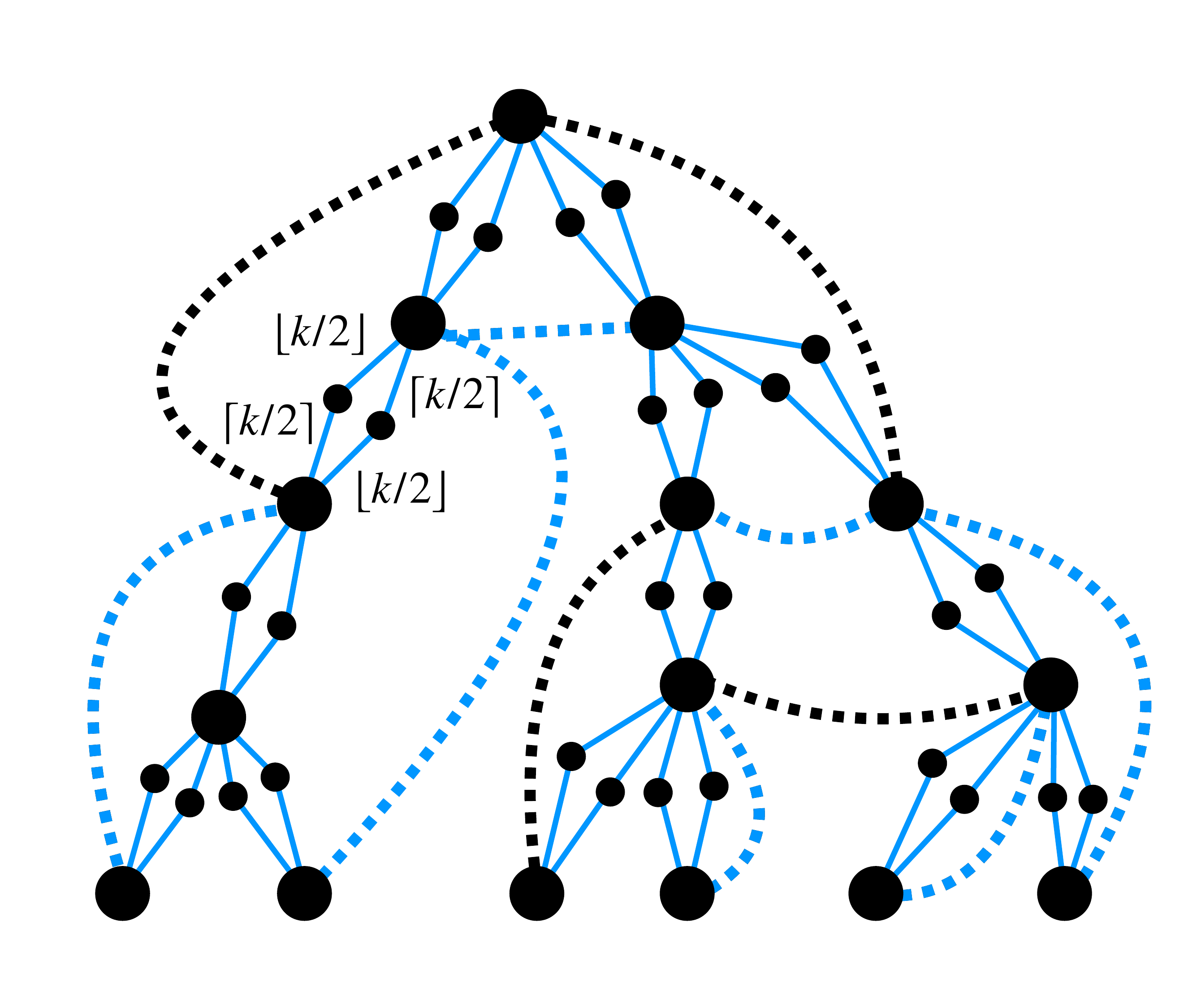}
		\caption{Feasible $k$-ECSM Solution.}\label{sfig:tap5}
	\end{subfigure}
	\hfill
	\begin{subfigure}[b]{.32\textwidth}
		\centering
		\includegraphics[width=\textwidth, trim=0mm 0mm 230mm 0mm, clip ]{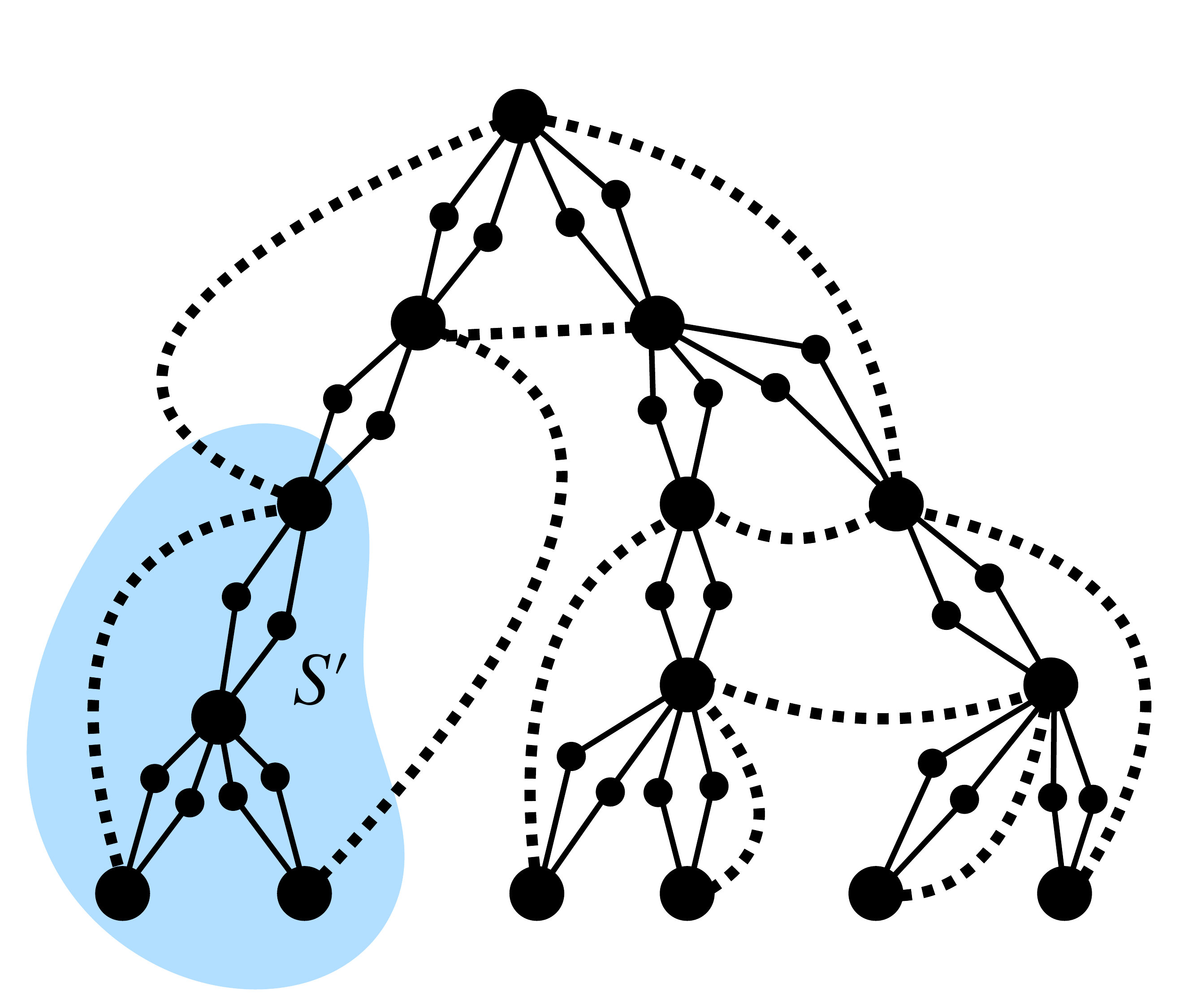}
		\caption{Corresponding cut $S'$.}\label{sfig:tap6}
	\end{subfigure}
	\caption{Our reduction from TAP to $k$-ECSM. \ref{sfig:tap1}: TAP instance on $G$ with links $L$ dashed and edge $e$ labeled. \ref{sfig:tap2}: feasible TAP solution in blue. \ref{sfig:tap3}: a tree cut $S$. \ref{sfig:tap4}: $k$-ECSM instance on $H$ with $w_e$, $w_e'$ labeled; $V_E$ are the small nodes. \ref{sfig:tap5}: feasible $k$-ECSM solution for $H$ where non-zero links in blue and values of edges incident to $w_e$ and $w_e'$ labeled. \ref{sfig:tap6}: cut $S'$ in $H$ corresponding to $S$ in $G$.}
\label{fig:oddReduction}
\end{figure}


A rough sketch of our analysis is as follows. By simple modifications to our $k$-ECSM solution, we can guarantee that for each vertex of $V_E$, one incident edge has value $\lfloor \sfrac{k}{2} \rfloor$ and one incident edge has value $\lceil \sfrac{k}{2} \rceil$. It follows that there must be some link in the support of our $k$-ECSM solution that covers $e$ (in the TAP sense), and so all links in the support of our $k$-ECSM solution form a feasible TAP solution. See \Cref{sfig:tap5}. On the other hand, one can show that the links in the support of our $k$-ECSM solution must make up at most an $O(\sfrac{1}{k})$ fraction of the total cost of our $k$-ECSM solution (on links and edges). Hence, if our $k$-ECSM solution is much better than $\left(1 + \frac{\epsTAP}{k}\right)$-approximate, the links in the support of our $k$-ECSM solution must be much better than $\left(1+\epsTAP\right)$-approximate as compared to the optimal TAP solution.

The following summarizes the simple modifications we will make to guarantee that the incident edges of each $w \in V_E$ have values $\lfloor k/2 \rfloor $ and $\lceil k/2 \rceil$.
\begin{lemma}\label{lem:rebalance}
  Given a feasible and integral solution $z'$ for the above instance of $k$-ECSM on $H$ for $k \in \mathbb{Z}_{\geq 1}$ odd, one can in poly-time compute a feasible and integral $z$ of equal cost such that for each $w \in V_E$, one of the two edges incident to $w$ has $z'$-value $\lfloor \sfrac{k}{2} \rfloor$ and one has $z'$-value $\lceil \sfrac{k}{2} \rceil$.
\end{lemma}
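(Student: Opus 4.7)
The plan is to rebalance each gadget locally to the canonical pattern and compensate by redistributing the saved cost onto links. For each tree edge $e=\{u,v\}\in E$ with gadget edges $e^1:=\{u,w_e\}$, $e^2:=\{w_e,v\}$, $e^{1'}:=\{u,w_e'\}$, $e^{2'}:=\{w_e',v\}$ in $H$, define the gadget excess
\[
\delta_e := z'(e^1)+z'(e^2)+z'(e^{1'})+z'(e^{2'})-2k,
\]
which is nonnegative since feasibility of $z'$ applied to the cuts $\{w_e\}$ and $\{w_e'\}$ forces each of $z'(e^1)+z'(e^2)$ and $z'(e^{1'})+z'(e^{2'})$ to be at least $k$. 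I set $z$ on each pair incident to a gadget vertex to one value $\lfloor k/2\rfloor$ and one value $\lceil k/2\rceil$ (the orientation within each pair being arbitrary). On the links, I start with $z(\ell)=z'(\ell)$; then for every tree edge $e$ with $z'(\cov(e))=0$ I pick some $\ell_e\in\cov(e)$ and add $1$ to $z(\ell_e)$; finally I distribute the remainder of the saved cost $\sum_e\delta_e$ onto arbitrary links so that the total cost matches $c^\top z'$ exactly.

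The feasibility verification splits by cut type. Cuts living entirely inside $V_E$ are covered by the canonical identity $z(e^1)+z(e^2)=k$ (and similarly for the $w_e'$ pair). For a cut $S$ in $H$ with $\hat{S}:=S\cap V$ a nonempty proper subset of $V$, the key observation is that each gadget vertex $w_e,w_e'$ is adjacent only to vertices in $V$, so whether it lies in $S$ can be chosen independently of the rest of $S$; minimizing over these placements, the worst-case value of $z(\delta(S))$ becomes $z(\cov(\hat{S}))+\sum_{e\in E\cap\delta(\hat{S})}\bigl(\min(z(e^1),z(e^2))+\min(z(e^{1'}),z(e^{2'}))\bigr)$, where non-crossing gadgets contribute $0$ and each crossing gadget contributes the sum of its two pair-minima. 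With canonical values each pair-minimum is $\lfloor k/2\rfloor$, so for a tree cut this becomes $z(\cov(e^*))+(k-1)$, which is $\geq k$ precisely when $z(\cov(e^*))\geq 1$; for a cut with $m\geq 2$ crossing tree edges, the gadget contribution alone is $2m\lfloor k/2\rfloor\geq 2(k-1)\geq k$ for $k\geq 3$.

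The critical structural claim justifying the link redistribution is that $z'(\cov(e))=0$ implies $\delta_e\geq 1$. Indeed, if $\delta_e=0$ then each pair has $z'$-sum exactly $k$; since $k$ is odd, each pair-minimum is at most $\lfloor k/2\rfloor$, and so the worst-case gadget contribution in the cut $S$ of $H$ projecting to the tree cut of $e$ with $w_e,w_e'$ placed adversarially is at most $k-1$. Together with $z'(\cov(e))=0$, this would give $z'(\delta(S))\leq k-1<k$, contradicting feasibility of $z'$. Hence the total ``forced'' addition to links, $|\{e:z'(\cov(e))=0\}|$, is at most $\sum_e\delta_e$, so the redistribution is well-defined, preserves total cost, and leaves $z$ integral. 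The construction is clearly polynomial time. The main conceptual obstacle is that naive local canonicalization can reduce a gadget's worst-case cut contribution by up to $2$ (from $2\lceil k/2\rceil=k+1$ down to $k-1$), possibly dropping some tree cuts below $k$; the resolution is that such a reduction only occurs when $\delta_e\geq 1$, providing exactly the budget needed to restore the affected tree cuts with a one-unit link increment.
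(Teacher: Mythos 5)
Your proposal is correct and takes essentially the same route as the paper's proof: canonically rebalance each gadget to the $\lfloor k/2\rfloor$/$\lceil k/2\rceil$ pattern and compensate on covering links, with the crucial odd-$k$ step---if both pair-sums equal $k$, an adversarial cut through the two minimum edges together with $\cov(e)$ would have value at most $k-1$, so feasibility of the original solution forces either $z'(\cov(e))\ge 1$ or excess $\delta_e\ge 1$---being exactly the paper's argument. The remaining differences are bookkeeping only: the paper pushes each gadget's full excess onto a link in $\cov(e)$ (making cost preservation automatic), whereas you add one forced unit where $z'(\cov(e))=0$ and dump the leftover arbitrarily, and you verify feasibility by projecting cuts to $V$ and minimizing over gadget-vertex placements instead of the paper's count of crossing gadget edges.
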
	
\begin{proof}
  We initially set $z'_{\ell} = z_{\ell}$ for all $\ell\in L$.
  (The other entries of $z'$, i.e.,  $z'_{b}$ for $b\in \gadgetEdges$, will be set directly later and do not need to be initialized.)
  For each edge $e=\{u,v\}\in E$ of the TAP graph $G$ (in an arbitrary order), we do the following:
  \begin{itemize}
    \item For each $w\in \{w_e, w'_e\}$, we assign a $z'$-value of $\lceil \sfrac{k}{2} \rceil$ to one of the two edges incident to $w$ and a $z'$-value of $\lfloor \sfrac{k}{2} \rfloor$ to the other edge incident to $w$.
    \item We select an arbitrary link $\ell\in \cov(e)$ and increase its value by $z(\delta_H(w_e))+z(\delta_H(w_e'))-2k$.
      (Because $z(\delta_H(w_e)), z(\delta_H(w_e')) \geq k$, as $z$ is a $k$-ECSM solution, this increase is non-negative.)
  \end{itemize}
  Note that $z'$ can clearly be computed in poly-time, and it fulfills by construction that each vertex $w\in V_E$ is incident to one edge of value $\lceil \sfrac{k}{2}\rceil$ and one of value $\lfloor \sfrac{k}{2} \rfloor$.
  Hence, it remains to show that $z'$ is a $k$-ECSM solution.

  Any cut $Q \subseteq W$ such that $|\delta_H(Q)\cap \gadgetEdges| \geq 3$ has $z'$-value $z'(\delta_H(Q)) \geq 3 \lfloor \sfrac{k}{2}\rfloor \geq k$, as desired.
  The other cuts of $H$ contain precisely $2$ gadget edges, which must both be within the same cycle.%
\footnote{This is a well-known and easy to check property of cacti, and $(W,\gadgetEdges)$ is a cactus, i.e., a connected graph where every edge is in precisely one cycle.}

  Hence, let $Q\subseteq W$ be a such a cycle in $H$ that contains precisely two edges of $\gadgetEdges$, say within the cycle/gadget corresponding to $e\in E$.
  If either $\delta_H(w_e) \subseteq \delta_H(Q)$ or $\delta_H(w'_e) \subseteq \delta_H(Q)$, then we again obtain $z'(\delta_H(Q))\geq k$, because the two edges incident with $w_e$ (or $w'_e$) have $z'$-values that sum up to $k$.
  Thus, we can assume that $\delta_H(Q)$ contains one edge incident with $w_e$ and one incident with $w_e'$.
  These two edges already lead to a $z'$-value of at least $2 \lfloor \sfrac{k}{2}\rfloor = k-1$.
  
  We complete the proof by showing that $z'(\delta(Q)\cap L) \geq 1$.
  Note that $Q \cap V$ corresponds to a tree cut in $G$, with $e$ being the only edge of $E$ with one endpoint in $Q$ and one outside.
  Hence, $\delta(Q)\cap L = \cov(e)$, and it suffices to show that $z'(\cov(e)) \geq 1$.
  If $z(\cov(e))\geq 1$, then we are done because the $z'$-value on any link is no less than its $z$-value.
  Otherwise, we have $z(\cov(e)) = 0$, and observe that $z(\delta_H(w_e))+z(\delta_H(w_e')) > 2k$; indeed, if $z(\delta_H(w_e))=k$ and $z(\delta_H(w_e')) = k$, then there is an edge in both $\delta_H(w_e)$ and $\delta_H(w'_e)$ with $z$-value at most $\lfloor \sfrac{k}{2} \rfloor$, and the cut in $H$ that contains these two edges and the links in $\cov(e)$ has $z$-value strictly below $k$, which violates that $z$ is a $k$-ECSM solution.
  Hence, in our construction of $z'$, we increased the value of at least one link $\ell\in \cov(e)$ by $z(\delta_H(w_e)) + z(\delta_H(w'_e)) - 2k \geq 1$, which finishes the proof.
\end{proof}	

We conclude our hardness of approximation.
\hardness*
\begin{proof}

    As discussed above, we reduce from unweighted TAP. Suppose that we are given an instance of unweighted TAP on graph $G = (V,E)$ with links $L$ and optimal value $\OPTTAP \geq \frac{1}{4} \cdot |E|$. Assume for the sake of deriving a contradiction that one can always $\left(1+\frac{\epsTAP}{9k}\right)$-approximate $k$-ECSM and let $H$ be the instance of $k$-ECSM resulting from the above reduction. Let $z'$ be a $\left(1+\frac{\epsTAP}{9k}\right)$-approximate solution to the $k$-ECSM instance and let $z$ be the result of applying \Cref{lem:rebalance} to $z'$. Lastly, let $F$ be all links in the support of $z$.
Clearly constructing $F$ takes poly-time.

    We claim that $F$ is a feasible solution for TAP on $G$. Consider a tree cut $S \subseteq G$ cutting $e$ in $G$. Let $S'\subseteq W$ be the corresponding cut in $H=(W,B)$. By the guarantees of \Cref{lem:rebalance}, we know that some $\tilde{S} \in \{S', S'+w_{e}, S'+w_{e}', S'+w_{e}+w_{e}'\}$ satisfies $z(\delta(\tilde{S})\setminus L) = k -1$.
   (For example, in \Cref{sfig:tap6} it is $S' + w_{e}$.)
   However, since $z$ is feasible, it follows that some link incident to $\tilde{S}$ and therefore incident to $S$ must be in the support of $z$. Thus, $F$ is feasible.

    Lastly, we argue that $F$ has small cost, resulting in a contradiction to \Cref{thm:tapAPXHard}. Roughly, we observe that all gadget edges we add make up at most a $1 - \frac{1}{k}$ fraction of the total cost of our solution.
   So $(1 + \frac{\epsTAP}{9k})$-approximating $k$-ECSM translates to $(1 + \epsTAP)$-approximating TAP. More formally, let $\OPTkECSM$ be the value of the optimal $k$-ECSM solution to our problem on $G'$. It is easy to verify that the $k$-ECSM solution which for each $w \in V_e$ sets one incident edge to $\lfloor k/2 \rfloor$ and the other to $\lceil k/2 \rceil$, and sets each edge in the support of the optimal TAP solution to $1$ and all other edges to $0$, is feasible for $k$-ECSM. It follows that
    \begin{align}\label{eq:kECSMSoln}
        \OPTkECSM \leq \OPTTAP + 2k\cdot|E|.
    \end{align}
    On the other hand, observe that by the fact that $c^{\top} z' = c^{\top} z$ and the definition of $F$, we have
    \begin{align}\label{eq:zCost}
      c^{\top} z  = c^{\top} z' \geq |F| + 2k \cdot |E|.
    \end{align}

    Combining \Cref{eq:kECSMSoln}, \Cref{eq:zCost}, the fact that $c^{\top} z = c^{\top} z'$, and the fact that $z'$ is $(1+ \frac{\epsTAP}{9k})$-approximate by assumption, we have
     \begin{align*}
       \left(1+ \frac{\epsTAP}{9k}\right)(\OPTTAP + 2k\cdot |E|) \geq \left(1+ \frac{\epsTAP}{9k}\right)\OPTkECSM \geq c^{\top} z \geq |F| + 2k \cdot |E|.
    \end{align*}
    Rearranging the above, applying the fact that $\frac{1}{4} \cdot |E| \leq \OPTTAP$, and $k \geq 1$, we get
    \begin{equation*}
        |F| \leq \left( 1 + \frac{\epsTAP}{9k} \right) \cdot \OPTTAP + \frac{2\epsTAP}{9} \cdot |E|
            \leq \left( 1 + \epsTAP \right) \cdot \OPTTAP,
    \end{equation*}
    contradicting \Cref{thm:tapAPXHard}.
\end{proof}

\newpage

\printbibliography

@STRING{stoc = { Annual ACM Symposium on Theory of Computing (STOC)}}

@STRING{focs = { Symposium on Foundations of Computer Science (FOCS)}}

@STRING{soda = { Annual ACM-SIAM Symposium on Discrete Algorithms (SODA)}}

@STRING{ec = { ACM Conference on Economics and Computation (EC)}}

@STRING{esa = { Annual European Symposium on Algorithms (ESA)}}

@STRING{icalp = { International Colloquium on Automata, Languages and Programming (ICALP)}}

@STRING{ipco = { Conference on Integer Programming and Combinatorial Optimization (IPCO)}}

@STRING{random = { International Workshop on Randomization and Computation (RANDOM)}}

@STRING{approx = { International Workshop on Approximation Algorithms for Combinatorial Optimization Problems (APPROX)}}

@STRING{podc = { ACM Symposium on Principles of Distributed Computing (PODC)}}

@STRING{springer = {Springer-Verlag}}

@STRING{acm = {ACM Press}}

@article{ST85,
author = {Sleator, Daniel D. and Tarjan, Robert E.},
title = {Amortized Efficiency of List Update and Paging Rules},
year = {1985},
issue_date = {Feb. 1985},
publisher = {Association for Computing Machinery},
volume = {28},
number = {2},
journal = {Communications of the ACM},
month = {feb},
pages = {202–208},
numpages = {7}
}

@article{SL15, 
   author = {Mohit Singh and Lap Chi Lau},
   number = {1},
   numpages=19,
   journal = {Journal of the ACM (JACM)},
   title = {Approximating minimum bounded degree spanning trees to within one of optimal},
   volume = {62},
   year = {2015}, 
}

@article{BFS16,
author={Sylvia Boyd and Yao Fu and Yu Sun},
title={A 5/4-approximation for subcubic 2EC using circulations and obliged edges}, 
journal={Discrete Applied Mathematics}, 
volume={209},
pages={48--58}, 
year={2016},
}

@InProceedings{BCCGISW20,
  author =	{Sylvia Boyd and Joseph Cheriyan and Robert Cummings and Logan Grout and Sharat Ibrahimpur and Zolt{\'a}n Szigeti and Lu Wang},
  title =	{{A 4/3-Approximation Algorithm for the Minimum 2-Edge Connected Multisubgraph Problem in the Half-Integral Case}},
  booktitle =	approx,
  pages =	{61:1--61:12},
  year =	{2020},
  volume =	{176}
}

@inproceedings{KKOZ22,
  title={An improved approximation algorithm for the minimum k-edge connected multi-subgraph problem},
  author={Karlin, Anna R and Klein, Nathan and {Oveis Gharan}, Shayan and Zhang, Xinzhi},
  booktitle=stoc,
  pages={1612--1620},
  year={2022}
}

@inbook{CTZ21,
author = {Cecchetto, Federica and Traub, Vera and Zenklusen, Rico},
title = {Bridging the Gap between Tree and Connectivity Augmentation: Unified and Stronger Approaches},
year = {2021},
booktitle = stoc,
pages = {370–383},
numpages = {14}
}

@inproceedings{CR98,
  author    = {Robert Carr and
               R. Ravi},
  title     = {A New Bound for the 2-Edge Connected Subgraph Problem},
  booktitle = ipco,
  pages     = {112--125},
  year      = {1998},
}

@article{Jai01,
	Author = {Kamal Jain},
	Date-Added = {2013-08-12 11:22:44 -0700},
	Date-Modified = {2013-08-12 11:23:02 -0700},
	Journal = {Combinatorica},
	Pages = {39--60},
	Title = {A Factor 2 Approximation Algorithm for the Generalized Steiner Network Problem},
	Volume = {21},
	Year = {2001}}

@book{LRS11,
	Address = {New York, NY, USA},
	Author = {Lau, Lap-Chi and Ravi, R. and Singh, Mohit},
	Date-Added = {2013-08-12 11:19:25 -0700},
	Date-Modified = {2013-08-12 11:19:44 -0700},
	Edition = {1st},
	Publisher = {Cambridge University Press},
	Title = {Iterative Methods in Combinatorial Optimization},
	Year = {2011}}

@article{NW61,
	Author = {C. St. J. A. Nash-Williams},
	Date-Added = {2013-07-12 16:06:45 -0700},
	Date-Modified = {2013-07-12 16:06:45 -0700},
	Journal = {Journal of the London Mathematical Society},
	Pages = {445--45},
	Title = {Edge disjoint spanning trees of finite graphs},
	Volume = {36},
	Year = {1961}}

@techreport{FF09,
	Author = {Fleiner, Tam{\'a}s and Frank, Andr{\'a}s},
	Date-Added = {2013-07-10 15:25:51 -0700},
	Date-Modified = {2013-07-10 15:25:51 -0700},
	Institution = {Egerv{\'a}ry Research Group, Budapest},
	Note = {{\tt www.cs.elte.hu/egres}},
	Number = {QP-2009-03},
	Title = {A quick proof for the cactus representation of mincuts},
	Year = {2009}}

@article{GB93,
  title={Survivable networks, linear programming relaxations and the parsimonious property},
  author={Michel X. Goemans and Dimitris Bertsimas},
  journal={Mathematical Programming},
  year={1993},
  volume={60},
  pages={145-166}
}

@inproceedings{Goe06,
	Author = {Michel X. Goemans},
	Bibsource = {DBLP, http://dblp.uni-trier.de},
	Booktitle = focs,
	Pages = {273-282},
	Title = {Minimum Bounded Degree Spanning Trees},
	Year = {2006}}

@inproceedings{LOS12,
	Author = {Bundit Laekhanukit and Shayan {Oveis Gharan} and Mohit Singh},
	Booktitle = icalp,
	Pages = {606-616},
	Title = {A Rounding by Sampling Approach to the Minimum Size k-Arc Connected Subgraph Problem},
	Year = {2012}}

@inproceedings{KKO22,
      title={A (Slightly) Improved Bound on the Integrality Gap of the Subtour LP for TSP}, 
      author={Anna Karlin and Nathan Klein and Shayan {Oveis Gharan}},
      year={2022},
      booktitle=focs,
      pages = {844-855},
}

@article{TZ21,
  author    = {Vera Traub and
               Rico Zenklusen},
  title     = {A Better-Than-2 Approximation for Weighted Tree Augmentation},
  booktitle={FOCS},
  year = {2021},
  
}

@article {FJ82,
	title = {On the relationship between the biconnectivity augmentation and traveling salesman problem},
	journal = {Theoretical Computer Science},
	volume = {19},
	year = {1982},
	pages = {189 - 201},
	author = {Fredrickson,G. N. and J\'aJ\'a, Joseph F.}
}

@article{GG08,
  title={Iterated rounding algorithms for the smallest k-edge connected spanning subgraph},
  author={H. Gabow and S. Gallagher},
  journal={SIAM Journal on Computing},
  year={2008},
  volume={41},
  pages={61-103}
}

@article{KR96,
author={S. Khuller and B. Raghavachari}, 
title={Improved approximation algorithms for uniform connectivity problems}, 
journal={Journal of Algorithms}, 
volume={21}, 
year={1996}, 
pages={434--450},
}

@article{Kar99,
author={D. Karger}, 
title={Random sampling in cut, flow, and network design problems}, 
journal={Math OR},
volume={24}, 
year={1999}, 
pages={383--413},
}

@article{Gab05,
author={H. Gabow}, 
title={An improved analysis for approximating the smallest k-edge connected spanning subgraph of a multi-graph}, 
journal={SIAM Journal on Discrete Math}, 
volume={19}, 
year={2005}, 
pages={1--18},
}

@inproceedings{Nut17,
author={Z. Nutov}, 
title={On the tree augmentation problem}, 
booktitle=esa, 
pages={61:1--61:14}, 
year={2017},
}

@article{SV14,
author={Andr\'as Seb\"o and Jens Vygen},
title={Shorter tours by nicer ears: 7/5-Approximation for the graph-TSP, 3/2 for the path version, and 4/3 for two-edge-connected subgraphs},
journal={Combinatorica}, 
volume={34},
number={5},
pages={597--629}, 
year={2014},
}

@inproceedings{GKZ18,
author={F. Grandoni and C. Kalaitzis and R. Zenklusen}, 
title={Improved approximation for tree augmentation: Saving by rewiring},
booktitle=stoc,
pages={632--645}, 
year={2018},
}

@article{JT00,
author={J. Cheriyan and R. Thurimella},
title={Approximating minimum- size k-connected spanning subgraphs via matching}, 
journal={SIAM Journal on Computing}, 
volume={30}, 
year={2000}, 
pages={528--560},
}

@article{GGTW09,
  author    = {Harold N. Gabow and
               Michel X. Goemans and
               {\'{E}}va Tardos and
               David P. Williamson},
  title     = {Approximating the smallest \emph{k}-edge connected spanning subgraph
               by LP-rounding},
  journal   = {Networks},
  volume    = {53},
  number    = {4},
  pages     = {345--357},
  year      = {2009},
}

@InProceedings{Pri11,
author="Pritchard, David",
title="k-Edge-Connectivity: Approximation and LP Relaxation",
booktitle="Approximation and Online Algorithms",
year="2011",
}

@article{FJ81,
author = {Fredrickson,G. N. and J\'aJ\'a, Joseph F.},
journal={SIAM Journal on Computing}, 
volume={10},
year={1981},
number={2}, 
pages={270-283},
title={Approximation Algorithms for Several Graph Augmentation Problems},
}

@inproceedings{KKO21,
      title={A (Slightly) Improved Approximation Algorithm for Metric TSP}, 
      author={Anna R. Karlin and Nathan Klein and Shayan {Oveis Gharan}},
      year={2021},
      booktitle=stoc,
      publisher={ACM},
}

@article{kortsarz2004hardness,
  title={Hardness of approximation for vertex-connectivity network design problems},
  author={Kortsarz, Guy and Krauthgamer, Robert and Lee, James R},
  journal={SIAM Journal on Computing},
  volume={33},
  number={3},
  pages={704--720},
  year={2004},
  publisher={SIAM}
}

@Article{groetschel_1981_ellipsoid,
  Title                    = {The ellipsoid method and its consequences in combinatorial optimization},
  Author                   = {Gr{\"o}tschel, M. and Lov{\'a}sz, L. and Schrijver, A.},
  Journal                  = {Combinatorica},
  Year                     = {1981},
  Number                   = {2},
  Pages                    = {169--197},
  Volume                   = {1}
}

@article{chaudhuri2009would,
  title={What would Edmonds do? Augmenting paths and witnesses for degree-bounded MSTs},
  author={Chaudhuri, Kamalika and Rao, Satish and Riesenfeld, Samantha and Talwar, Kunal},
  journal={Algorithmica},
  volume={55},
  number={1},
  pages={157--189},
  year={2009},
  publisher={Springer}
}

@inproceedings{konemann2000matter,
  title={A matter of degree: Improved approximation algorithms for degree-bounded minimum spanning trees},
  author={K{\"o}nemann, Jochen and Ravi, Ramamoorthi},
  booktitle=stoc,
  pages={537--546},
  year={2000}
}

@inproceedings{konemann2003primal,
  title={Primal-dual meets local search: approximating MST's with nonuniform degree bounds},
  author={K{\"o}nemann, Jochen and Ravi, Ramamoorthi},
  booktitle=stoc,
  pages={389--395},
  year={2003}
}

@inproceedings{ravi2006delegate,
  title={Delegate and conquer: An LP-based approximation algorithm for minimum degree MSTs},
  author={Ravi, Ramamoorthi and Singh, Mohit},
  booktitle=icalp,
  pages={169--180},
  year={2006},
  organization={Springer}
}

@inproceedings{dory2018distributed,
  title={Distributed approximation of minimum k-edge-connected spanning subgraphs},
  author={Dory, Michal},
  booktitle=podc,
  pages={149--158},
  year={2018}
}

@article{chalermsook2022approximating,
  title={Approximating k-edge-connected spanning subgraphs via a near-linear time LP solver},
  author={Chalermsook, Parinya and Huang, Chien-Chung and Nanongkai, Danupon and Saranurak, Thatchaphol and Sukprasert, Pattara and Yingchareonthawornchai, Sorrachai},
  journal={arXiv preprint arXiv:2205.14978},
  year={2022}
}

@article{khuller1994biconnectivity,
  title={Biconnectivity approximations and graph carvings},
  author={Khuller, Samir and Vishkin, Uzi},
  journal={Journal of the ACM (JACM)},
  volume={41},
  number={2},
  pages={214--235},
  year={1994},
  publisher={ACM New York, NY, USA}
}

@article{gabow1998efficient,
  title={An efficient approximation algorithm for the survivable network design problem},
  author={Gabow, Harold N and Goemans, Michel X and Williamson, David P},
  journal={Mathematical Programming},
  volume={82},
  number={1-2},
  pages={13--40},
  year={1998},
  publisher={Springer}
}

@inproceedings{czumaj1999approximability,
  title={On Approximability of the Minimum-Cost k-Connected Spanning Subgraph Problem.},
  author={Czumaj, Artur and Lingas, Andrzej},
  booktitle=soda,
  volume={99},
  pages={281--290},
  year={1999}
}

@inproceedings{berger2007minimum,
  title={Minimum weight 2-edge-connected spanning subgraphs in planar graphs},
  author={Berger, Andr{\'e} and Grigni, Michelangelo},
  booktitle=icalp,
  pages={90--101},
  year={2007},
  organization={Springer}
}

@inproceedings{czumaj2004approximation,
  title={Approximation schemes for minimum 2-edge-connected and biconnected subgraphs in planar graphs.},
  author={Czumaj, Artur and Grigni, Michelangelo and Sissokho, Papa A and Zhao, Hairong},
  booktitle=soda,
  volume={4},
  pages={496--505},
  year={2004}
}

@inproceedings{traub2022local,
  title={Local search for weighted tree augmentation and Steiner tree},
  author={Traub, Vera and Zenklusen, Rico},
  booktitle={Proceedings of the 2022 Annual ACM-SIAM Symposium on Discrete Algorithms (SODA)},
  pages={3253--3272},
  year={2022},
  organization={SIAM}
}

@article{adjiashvili2018beating,
  title={Beating approximation factor two for weighted tree augmentation with bounded costs},
  author={Adjiashvili, David},
  journal={ACM Transactions on Algorithms (TALG)},
  volume={15},
  number={2},
  pages={1--26},
  year={2018},
  publisher={ACM New York, NY, USA}
}

@inproceedings{fiorini2018approximating,
  title={Approximating weighted tree augmentation via Chv{\'a}tal-Gomory cuts},
  author={Fiorini, Samuel and Gro{\ss}, Martin and K{\"o}nemann, Jochen and Sanit{\`a}, Laura},
  booktitle=soda,
  pages={817--831},
  year={2018}
}

@article{iglesias2017coloring,
  title={Coloring Down: $3/2$-approximation for special cases of the weighted tree augmentation problem},
  author={Iglesias, Jennifer and Ravi, R},
  journal={arXiv preprint arXiv:1707.05240},
  year={2017}
}

@article{cheriyan2018approximating,
  title={Approximating (unweighted) tree augmentation via lift-and-project, part I: stemless TAP},
  author={Cheriyan, Joseph and Gao, Zhihan},
  journal={Algorithmica},
  volume={80},
  pages={530--559},
  year={2018},
  publisher={Springer}
}

@inproceedings{phillips1997optimal,
  title={Optimal time-critical scheduling via resource augmentation},
  author={Phillips, Cynthia A and Stein, Cliff and Torng, Eric and Wein, Joel},
  booktitle=stoc,
  pages={140--149},
  year={1997}
}

@article{jaillet2008generalized,
  title={Generalized online routing: New competitive ratios, resource augmentation, and asymptotic analyses},
  author={Jaillet, Patrick and Wagner, Michael R},
  journal={Operations Research},
  volume={56},
  number={3},
  pages={745--757},
  year={2008},
  publisher={INFORMS}
}

@inproceedings{chekuri2004multi,
  title={Multi-processor scheduling to minimize flow time with $\varepsilon$ resource augmentation},
  author={Chekuri, Chandra and Goel, Ashish and Khanna, Sanjeev and Kumar, Amit},
  booktitle=stoc,
  pages={363--372},
  year={2004}
}

@misc{roughgarden2020resource,
  title={Resource Augmentation.},
  author={Roughgarden, Tim},
  year={2020}
}

@InProceedings{grandoni_2022_breaching,
  author    = {Grandoni, F. and Jabal Ameli, A. and Traub, V.},
  title     = {Breaching the $2$-Approximation Barrier for the Forest Augmentation Problem},
  booktitle = {Proceedings of the 54th Annual ACM SIGACT Symposium on Theory of Computing (STOC)},
  year      = {2022},
  pages     = {1598--1611},
}

@InProceedings{traub_2023_approximation,
  author    = {Traub, V. and Zenklusen, R.},
  title     = {A $(1.5+\varepsilon)$-Approximation Algorithm for Weighted Connectivity Augmentation},
  booktitle = stoc,
  year      = {2023},
}

@Article{angelidakis_2023_node,
  author  = {Angelidakis, H. and Hyatt-Denesik, D. and Sanit{\'a}, L.j},
  title   = {Node Connectivity Augmentation via Iterative Randomized Rounding},
  journal = {Mathematical Programming, Series A},
  year    = {2023},
  volume  = {199},
  pages   = {995--1031},
}

@InProceedings{hyatt-denesik_2023_finding,
  author    = {Hyatt-Denesik, D. and Jabal Ameli, A. and Sanit{\`a} L.},
  title     = {Finding Almost Tight Witness Trees},
  booktitle = icalp,
  year      = {2023},
  pages     = {79:1--79:16},
}

@Article{auletta_1999_2-approximation,
  author  = {Auletta, V. and Dinitz, Y. and Nutov, Z. and Parente, D.},
  title   = {A $2$-Approximation Algorithm for Finding an Optimum $3$-Vertex-Connected Spanning Subgraph},
  journal = {Journal of Algorithms},
  year    = {1999},
  volume  = {32},
  pages   = {21--30},
}

@Article{nutov_2022_4eapproximation,
  author  = {Z. Nutov},
  title   = {A $4+\epsilon$ approximation for $k$-connected subgraphs},
  journal = {Journal of Computer and System Sciences},
  year    = {2022},
  volume  = {123},
  pages   = {64--75},
}

@Article{nutov_2014_approximating,
  author  = {Z. Nutov},
  title   = {Approximating Minimum-Cost Edge-Covers of Crossing Biset-Families},
  journal = {Combinatorica},
  year    = {2014},
  volume  = {43},
  number  = {1},
  pages   = {95--113},
}

@Article{cheriyan_2014_approximating,
  author  = {J. Cheriyan and L. A. V{\'e}gh},
  title   = {Approximating Minimum-Cost $k$-Node Connected Subgraphs via Independence-Free Graphs},
  journal = {SIAM Journal on Computing},
  year    = {2014},
  volume  = {43},
  number  = {4},
  pages   = {1342--1362},
}

\newpage

\appendix
\section{Results for Subset $k$-ECSM}\label{sec:steiner}
In this section, we observe that our $1 + O(\sfrac{1}{k})$-approximation for $k$-ECSM extends to the subset $k$-ECSM problem.

In subset $k$-ECSM we are given a multi-graph $G = (V, E)$, a subset of vertices $W \subseteq V$, an edge cost function $c:E \to \mathbb{R}_{\ge 0}$, and our goal is to find a set of edges $F \subseteq E$ where $W$ is $k$-edge-connected in $(V, F)$ while minimizing $c(F) \coloneqq \sum_{e\in F} c(e)$.\footnote{A subset $W \subseteq V$ is $k$-edge connected if and only if $|\delta(S) \cap F| \geq k$ for every $S \subseteq V$ such that $S \cap W \neq \emptyset$ and $W \setminus S \neq \emptyset$.} We denote the cost of the optimal solution with $\OPTkSECSM[k]$.

The following is the natural LP for subset $k$-ECSM.
\begin{equation}\label{lp:secsm}
\begin{array}{rrcll}
  \min &c^\top x     &     &                       &\\
       &x(\delta(S)) &\geq &k                      &\forall S\subseteq, S \cap W \neq \emptyset, W\setminus S \neq \emptyset \\
       &x            &\in  &\mathbb{R}^E_{\geq 0}. &
\end{array}\tag{$k\mathrm{-SECSM}~\mathrm{LP}$}
\end{equation}
\noindent We let \LPOPTkSECSM[k] give the optimal value of the above LP. \cite{Pri11} noted that the ``parsimonious property'' of \cite{GB93} implies that LP-competitive algorithms for $k$-ECSM give LP-competitive algorithms for subset $k$-ECSM.  We provide the details of this argument below.

For subset $k$-ECSM the parsimonious property allows one to assume that every vertex of $W$ has degree exactly equal to $k$ and every vertex not in $W$ has degree $0$ provided costs obey the triangle inequality.\footnote{Costs $c$ obey the triangle inequality if $c_{uv} \leq c_{uw} + c_{wv}$ for all $u,v,w \in V$.} The following summarizes this.
\begin{theorem}[Theorem 1 of \cite{GB93}, Applied to Subset $k$-ECSM]\label{thm:pars}
    Consider an instance of subset $k$-ECSM where $c$ satisfies the triangle inequality. Let \ref{lp:secsm}' be \ref{lp:secsm} for this instance but  with the added constraints 
    \begin{align}\label{eq:parsConstraints}
        x(\delta(v)) = 
        \begin{cases} 
            k & \text{if $v \in W$}\\
            0 & \text{otherwise}
        \end{cases}
    \end{align}
    for all $v \in V$. Then
    \begin{align*}
        \LPOPTkSECSM[k] = \LPOPTkSECSM[k]'.
    \end{align*}
\end{theorem}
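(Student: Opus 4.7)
The plan is to derive \Cref{thm:pars} as an immediate corollary of the parsimonious property of \textcite{GB93} (Theorem~1 of that paper). The easy direction, $\LPOPTkSECSM[k] \leq \LPOPTkSECSM[k]'$, holds because \ref{lp:secsm}' is obtained from \ref{lp:secsm} by adjoining extra constraints, so every feasible solution to \ref{lp:secsm}' is also feasible for \ref{lp:secsm}. Everything else is aimed at the reverse inequality.

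The parsimonious property I intend to invoke says the following: for any LP of the form $\min\{c^{\top}x : x(\delta(S)) \geq f(S) \text{ for all } S \subseteq V,\ x \geq 0\}$ in which $c$ satisfies the triangle inequality and $f:2^V\to\mathbb{Z}_{\geq 0}$ is a \emph{proper} requirement function---meaning $f(V)=0$, $f(S)=f(V\setminus S)$, and $f(A\cup B)\leq \max(f(A),f(B))$ for all disjoint $A,B\subseteq V$---adjoining the constraints $x(\delta(v))=f(\{v\})$ for every $v\in V$ does not change the optimal LP value. The remaining work is thus to (i) identify the correct $f$ for subset $k$-ECSM, (ii) check properness, and (iii) read off the singleton values to confirm they coincide with \eqref{eq:parsConstraints}.

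For (i) and (iii), the requirement function is plainly $f(S) = k \cdot \Ind{S\cap W\neq\emptyset \text{ and } W\setminus S\neq \emptyset}$, and assuming $|W|\geq 2$ (the degenerate case $|W|\leq 1$ is trivial, since both LPs have optimum $0$) one reads off $f(\{v\})=k$ for $v\in W$ and $f(\{v\})=0$ for $v\notin W$, matching \eqref{eq:parsConstraints} exactly. For (ii), $f(V)=0$ and symmetry are immediate from the defining condition being invariant under $S\mapsto V\setminus S$; the only check that requires any thought---and which I would flag as the main (albeit mild) obstacle of the whole argument---is the max-inequality on disjoint unions. The plan there is: if $f(A\cup B)=k$, then $A\cup B$ contains some terminal but misses some other terminal, so the part (say $A$) that contains a terminal must also miss that other terminal, because otherwise $W\subseteq A\subseteq A\cup B$, contradicting $W\setminus(A\cup B)\neq\emptyset$; hence $f(A)=k$. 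With these verifications in hand, applying the parsimonious property of \cite{GB93} to \ref{lp:secsm} yields $\LPOPTkSECSM[k]=\LPOPTkSECSM[k]'$.
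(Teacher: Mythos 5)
Your proposal is correct and takes essentially the same route as the paper, which gives no independent proof of \Cref{thm:pars} but simply invokes Theorem~1 of \textcite{GB93} for the cut requirement $f(S)=k\cdot\Ind{S\cap W\neq\emptyset \text{ and } W\setminus S\neq\emptyset}$ (equivalently, pairwise requirements $r_{uv}=k$ for $u,v\in W$ and $0$ otherwise); your properness check and reading-off of the singleton values just make that application explicit. One small nit: your degenerate case is misstated, since for $|W|=1$ the augmented LP is infeasible (the zero-degree constraints on $V\setminus W$ force $x(\delta(v))=0$ for the unique terminal $v$, contradicting $x(\delta(v))=k$), so the statement should be read as implicitly assuming $|W|\geq 2$, the only regime actually used.
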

The above parsimonious property almost immediately implies that algorithms for $k$-ECSM also give algorithms for subset $k$-ECSM.
\begin{lemma}[\cite{GB93, Pri11}]
    If there exists a poly-time algorithm for $k$-ECSM that returns a solution of cost at most $\alpha \cdot \LPOPTkECSM[k]$ then there is a poly-time algorithm for subset $k$-ECSM that returns a solution of cost at most $\alpha \cdot \LPOPTkSECSM[k]$.
\end{lemma}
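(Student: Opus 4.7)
The plan is to reduce subset $k$-ECSM to a $k$-ECSM instance on the complete graph over $W$ with metric-closure costs, invoke the hypothesized $\alpha$-approximation as a black box, and then compare the LP values using \Cref{thm:pars}.

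Concretely, given an instance $(V,E,W,c)$ of subset $k$-ECSM, I would first compute in poly-time (via all-pairs shortest paths) the metric-closure cost function $c':\binom{W}{2}\to\mathbb{R}_{\ge 0}$, where $c'(u,v)$ is the shortest $u$-$v$ distance in $(V,E,c)$; the function $c'$ automatically satisfies the triangle inequality. Next, I would run the assumed poly-time $\alpha$-approximation for $k$-ECSM on the instance $(W,\binom{W}{2},c')$, obtaining a multiset $F'\subseteq\binom{W}{2}$ that $k$-edge-connects the complete graph on $W$, of cost at most $\alpha\cdot\LPOPTkECSM[k](W,\binom{W}{2},c')$. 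Finally, I would expand $F'$ into a multiset $F\subseteq E$ by replacing each $\{u,v\}\in F'$ with the edges of a shortest $u$-$v$ path in $(V,E,c)$. By construction $c(F)\le c'(F')$, and because replacing an edge by a path preserves edge-connectivity between every pair of vertices of $W$, the multiset $F$ is feasible for the original subset $k$-ECSM instance.

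It then remains to prove the LP inequality
\[
\LPOPTkECSM[k](W,\tbinom{W}{2},c')\;\le\;\LPOPTkSECSM[k](V,E,c),
\]
which I would obtain through a short chain of three LPs. First, extending any feasible \ref{lp:secsm} solution on $(V,E,c)$ by zero on the missing edges yields a feasible \ref{lp:secsm} solution on the metric-closed instance $(V,\binom{V}{2},\bar c)$ of no larger cost, using $\bar c\le c$ pointwise on $E$. Second, by \Cref{thm:pars}, the \ref{lp:secsm} LP on the metric instance has the same optimum as the LP obtained by additionally imposing the degree constraints \eqref{eq:parsConstraints}. Third, those degree constraints force $x_e=0$ for every edge $e$ incident to $V\setminus W$, so the constrained LP effectively lives on $\binom{W}{2}$ with costs $\bar c\big|_{\binom{W}{2}}=c'$, and its cut constraints (those $S\subseteq V$ with $S\cap W,\,W\setminus S$ both non-empty) coincide with the $k$-ECSM cut constraints on $(W,\binom{W}{2},c')$ indexed by $S\cap W$.

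All of the real content is packaged inside \Cref{thm:pars}, which I would use as a black box; the remaining work is just careful bookkeeping of which LP lives on which graph. The only step that requires any care is the third comparison—verifying that the degree-constrained metric subset $k$-ECSM LP is literally the $k$-ECSM LP on $K_W$—and even there the observation that the $V\setminus W$ degree constraints zero out every variable outside $\binom{W}{2}$ makes the identification immediate, so I do not expect a real obstacle anywhere in the argument.
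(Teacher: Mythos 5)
Your proposal is correct and follows essentially the same route as the paper: pass to the metric closure, invoke \Cref{thm:pars} (parsimonious property) to restrict the LP to the complete graph on $W$, run the $k$-ECSM algorithm there, and map the output back via shortest paths. Your version is slightly more explicit about why the degree constraints collapse the LP onto $\binom{W}{2}$, but the content is the same.
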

\begin{proof}
    Consider an instance of subset $k$-ECSM on graph $G=(V,E)$ with subset $W \subseteq V$ and costs $c$. We let \LPOPTkSECSM[k] be the value of \ref{lp:secsm} for this instance (with costs $c$).
    
    Let $c'$ be new costs where for each $u,v \in V$ we let $c'_{uv}$ be the cost of the minimum cost (according to $c$) path from $u$ to $v$. Letting \LPOPTkSECSM[k]' be the value of \ref{lp:secsm} with costs $c'$ we trivially have
    \begin{align}\label{eq:a}
        \LPOPTkSECSM[k] = \LPOPTkSECSM[k]'.
    \end{align}

    Next, let \LPOPTkSECSM[k]'' be the value of \ref{lp:secsm} with costs $c'$ and the constraints of \Cref{eq:parsConstraints} added. By \Cref{thm:pars} (the parsimonious property) and the fact that $c'$ obeys the triangle inequality, we have that
    \begin{align}\label{eq:b}
        \LPOPTkSECSM[k]' = \LPOPTkSECSM[k]''.
    \end{align}

    Next, consider the instance of $k$-ECSM on graph $G = (W, E[W])$ with costs $c'$. Let $\LPOPTkECSM[k]$ be the value of \ref{lp:ecsm} for this instance. Since we know there is a solution to \ref{lp:secsm} with the constraints of \Cref{eq:parsConstraints} added of cost at most $\LPOPTkSECSM[k]''$ (according to $c'$), it follows that 
    \begin{align}\label{eq:c}
        \LPOPTkSECSM[k]'' \geq \LPOPTkECSM[k].
    \end{align}

    Lastly, observe that any feasible solution to $k$-ECSM on graph $G = (W, E[W])$ with costs $c'$ can, in poly-time, be turned into a feasible solution for subset $k$-ECSM with subset $W$ on $G$ with costs $c$ while not increasing the cost of the solution. It follows by Equations~\eqref{eq:a}, \eqref{eq:b}, and \eqref{eq:c} that if we can, in poly-time, compute a solution to $k$-ECSM on graph $G = (W, E[W])$ with costs $c'$ of cost at most $\alpha \cdot \LPOPTkECSM[k]$, then we can, in poly-time, compute a solution to subset $k$-ECSM on $G$ with costs $c$ of cost at most $\alpha \cdot \LPOPTkSECSM[k]$ as required.
\end{proof}

Lastly, combining the above with \Cref{thm:mainECSM} (our approximation algorithm for $k$-ECSM) gives the following result for subset $k$-ECSM.
\begin{theorem}
There is a poly-time algorithm for subset $k$-ECSM that, for any subset $k$-ECSM instance with $k \in \mathbb{Z}_{\geq 1}$, returns a feasible solution of cost at most $(1+\frac{10}{k}) \cdot \LPOPTkSECSM[k] \leq (1+\frac{10}{k}) \cdot \OPTkSECSM[k]$.
\end{theorem}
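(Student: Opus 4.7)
The plan is to derive this theorem as an immediate corollary of the reduction lemma proved immediately above and our main $k$-ECSM result, \Cref{thm:mainECSM}. The reduction lemma states that any poly-time algorithm for $k$-ECSM returning a solution of cost at most $\alpha \cdot \LPOPTkECSM[k]$ yields a poly-time algorithm for subset $k$-ECSM returning a solution of cost at most $\alpha \cdot \LPOPTkSECSM[k]$. Meanwhile, \Cref{thm:mainECSM} exhibits a poly-time $k$-ECSM algorithm whose output has cost at most $(1+\tfrac{10}{k}) \cdot \LPOPTkECSM[k]$. Hence, applying the reduction lemma with $\alpha = 1 + \tfrac{10}{k}$ directly produces a poly-time subset $k$-ECSM algorithm whose output has cost at most $(1+\tfrac{10}{k}) \cdot \LPOPTkSECSM[k]$, establishing the first inequality in the theorem statement.

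For the second inequality $\LPOPTkSECSM[k] \leq \OPTkSECSM[k]$, I would simply observe that \ref{lp:secsm} is a relaxation of the integer program: if $F \subseteq E$ is any feasible subset $k$-ECSM solution, then the characteristic vector $\chi^F$ satisfies $\chi^F(\delta(S)) \geq k$ for every cut $S$ with $S \cap W \neq \emptyset$ and $W \setminus S \neq \emptyset$, and thus is feasible for \ref{lp:secsm} with the same objective value. Taking $F$ to be an optimal subset $k$-ECSM solution then yields the claimed bound.

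The hard part has already been done earlier in the paper and in the preceding appendix material: the parsimonious property of \cite{GB93} (used in the reduction lemma via the metric completion of $c$ on $W$), and the $(1+\tfrac{10}{k})$-LP-approximation for $k$-ECSM established via our main rounding theorem \Cref{thm:mainaux}. There is thus no genuine obstacle in this final step beyond verifying that the hypothesis of the reduction lemma is indeed satisfied by \Cref{thm:mainECSM}, which is immediate from the form of its conclusion (the guarantee being stated against the LP optimum rather than merely the integer optimum). Consequently, the proof is a one-line application of the reduction lemma to \Cref{thm:mainECSM}.
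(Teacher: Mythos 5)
Your proposal is correct and matches the paper's argument exactly: the paper states this theorem as an immediate consequence of the preceding reduction lemma applied with $\alpha = 1+\frac{10}{k}$ to \Cref{thm:mainECSM}, which is precisely what you do, and your additional observation that $\LPOPTkSECSM[k] \leq \OPTkSECSM[k]$ because any integral feasible solution is feasible for \ref{lp:secsm} is the standard relaxation bound the paper takes for granted. No gaps.
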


\end{document}